\newtheorem{theorem}{Theorem}[section]
\newtheorem{lemma}[theorem]{Lemma}
\newtheorem{corollary}[theorem]{Corollary} 
\theoremstyle{definition}
\newtheorem{remark}[theorem]{Remark} 
\numberwithin{equation}{section}
\def\T{\widehat\Sigma}
\def\B{\mathscr B}
\def\uno{1}
\def\Sob{\mathfrak H}
\def\H{\mathfrak F}
\def\DS{{\dom(S^{\times})}}
\def\WD{\dom(\widehat S^{\times})}
\def \X{\mathfrak X}
\def\D{\text{\rm dom}}
\def\dom{\text{\rm dom}}
\def\ran{\text{\rm ran}}
\def\G{\mathscr G}
\def\RE{\mathbb R}
\def\CO{{\mathbb C}}
\def\ph*{\phi_\star}
\def\be{\begin{equation}}
\def\ee{\end{equation}}
\def\-{{\rm in}}
\def\+{{\rm ex}}
\def\op{\overset{\infty}{\underset{n= 0}{\oplus}}}
\def\x{\mathsf x}
\def\wR{\widehat R}
\title[On the self-adjointness of H+A${\,}^{\!\!\!*}$+A]
{On the self-adjointness of H+A${\,}^{\!\!\!*}$+A}
\author{Andrea Posilicano}
\address{DiSAT, Sezione di Matematica, Universit\`a dell'Insubria, via Valleggio 11, I-22100
Como, Italy}
\email{andrea.posilicano@unisubria.it}
\begin{document}

\begin{abstract} Let $H:\D(H)\subseteq\H\to\H$ be self-adjoint and let $A:\D(H)\to\H$ (playing the role of the annihilation operator) be $H$-bounded. Assuming some additional hypotheses on $A$ (so that the creation operator $A^{*}$ is a singular perturbation of $H$), by a twofold application of a resolvent Kre\u\i n-type formula,  we build self-adjoint realizations $\widehat H$ of the formal Hamiltonian $H+A^{*}+A$ with $\D(H)\cap\D(\widehat H)=\{0\}$.  We give an explicit characterization of $\dom(\widehat H)$ and provide a formula for the resolvent difference $(-\widehat H+z)^{-1}-(-H+z)^{-1}$. Moreover, we consider the problem of the description of $\widehat H$ as a (norm resolvent) limit of sequences of the kind $H+A^{*}_{n}+A_{n}+E_{n}$, where the $A_{n}\!$'s are regularized operators approximating $A$ and the $E_{n}$'s are suitable renormalizing bounded operators. These results show the connection between the construction of singular perturbations of self-adjoint operators by Kre\u\i n's resolvent formula and nonperturbative theory of renormalizable models in Quantum Field Theory; in particular, as an explicit example, we consider the Nelson model. 
\end{abstract}

\maketitle

\begin{section}{Introduction}
In the last few years several works  appeared where questions about the characterization of the self-adjointness domains of some renormalizable quantum fields Hamiltonians and their spectral properties were addressed (see \cite{GW1}, \cite{GW2}, \cite{GL}, \cite{LSTT}, \cite{LS}, \cite{L1}, \cite{L2}, \cite{Sch1}, \cite{Sch2}). In such papers (see also \cite{Mosh}, \cite{Thomas}, \cite{Yafaev} for some antecedent works considering simpler models) the operator theoretic framework much resembles the one 
involved in the construction of singular perturbations of self-adjoint operators (a.k.a. self-adjoint extensions of symmetric restrictions) by Kre\u\i n's type resolvent formulae (see \cite{P01}, \cite{P08} and references therein). The correspondence is exact as regards the Fermi polaron model considered in \cite{GL} (see the remark following \cite[Corollary 4.3]{GL} and our Remark \ref{Grisem}); instead, as regards the Nelson model studied in \cite{LS} (this paper was our main source of inspiration), 
the self-adjointness domain of the Nelson Hamiltonian $H_{\rm Nelson}$ there provided does not correspond, even if it has a similar structure, to the domain of a singular  perturbation of the non-interacting Hamiltonian $H_{\rm free}$. Indeed, if that were so, then, by \cite[Remark 2.10]{P01} (see also \eqref{domain} below), the domain of $H_{\rm Nelson}$ should be given by $$\{\Psi\in\H:\Psi_{0}:=\Psi+(AH_{\rm free}^{-1})^{*}\Phi\in\D(H_{\rm free}),\ A\Psi_{0}=\Theta\Phi,\ \Phi\in\D(\Theta)\}\,,$$ for some self-adjoint operator $\Theta$ (here $A$ denotes the annihilation operator) while, by \cite{LS}, $$\D(H_{\rm Nelson})=\{\Psi\in\H:\Psi+(AH_{\rm free}^{-1})^{*}\Psi\in\D(H_{\rm free})\}\,.$$ 
These two domain representations would coincide whenever $\Theta=A-A(AH_{\rm free}^{-1})^{*}$, which, beside containing the ill-defined term $A(AH_{\rm free}^{-1})^{*}$, is not even formally symmetric.  The lack of a direct correspondence between the two approaches apparently prevents the  writing of a formula for the resolvents difference $(-H_{\rm Nelson}+z)^{-1}-(-H_{\rm free}+z)^{-1}$. Such a kind of resolvent formula can help the study, beside of the spectrum, of the scattering theory (see \cite{JMPA} and references therein).
\par
Our main aim here  is to show that $H_{\rm Nelson}$ can be still obtained using the theory of singular perturbations (thus providing a resolvent formula) by applying  Kre\u\i n's formula twice: at first one singularly perturbs  $H_{\rm free}$  obtaining an intermediate Hamiltonian (related to a different physical model, see Remark \ref{Grisem}) and then one singularly perturbs the latter obtaining the Nelson Hamiltonian (such a strategy is suggested by the use of an abstract  Green-type formula, see Lemma \ref{2.1}); since for both the two operators Kre\u\i n's resolvent formula holds, by inserting the resolvent of the first operator in the resolvent formula for the second one, re-arranging and using operator block matrices, at the end one obtains a final formula for the resolvent difference $(-H_{\rm Nelson}+z)^{-1}-(-H_{\rm free}+z)^{-1}$ only containing the resolvent of $H_{\rm free}$ and the extension parameter (which is a suitable operator in Fock space), see \eqref{nelson}.
\par 
We consider also the problem of the description of $H_{\rm Nelson}$ as a (norm resolvent) limit of sequences of the kind $H_{n}:=H_{\rm free}+A^{*}_{n}+A_{n}+E_{n}$, where the $A_{n}\!$'s are the regularized annihilation operators corresponding with an ultraviolet cutoff and the $E_{n}$'s are suitable renormalizing constants. We approach this problem by employing the  resolvent formula for $H_{\rm Nelson}$ here obtained and an analogous one for the approximating $H_{n}$; this shows the role of the ever-present term of the kind $A_{n}H^{-1}_{\rm free}A_{n}^{*}$: it is due to the difference between the so-called Weyl functions (see \eqref{WF}) in the resolvents of the $H_{n}$'s and the limit one. The Weyl function of $H_{\rm Nelson}$ contains 
$A( (-AH^{-1}_{\rm free})^{*}-(A(-H_{\rm free}+{\bar z} )^{-1})^{*})$ and $(-AH^{-1}_{\rm free})^{*}$ plays the role of a regularizing term: indeed  the operator difference $ (-AH^{-1}_{\rm free})^{*}-(A(-H_{\rm free}+{\bar z} )^{-1})^{*}$  has range in the domain of $A$ while the ranges of the single terms never are. Contrarily, the Weyl function of $H_{n}$ contains $-A_{n}(-H_{\rm free}+z)^{-1}A_{n}^{*}$ only, without the need, being a bounded operator, of adding the balancing term  $-A_{n}H^{-1}_{\rm free}A_{n}^{*}$. This explain why one has to take into account such an addendum (and also a renormalizing counterterm $E_{n}$ since $A_{n}H^{-1}_{\rm free}A_{n}^{*}$ does not converge when the ultraviolet cutoff is removed) in order to approximate $H_{\rm Nelson}$ in norm resolvent sense (see Theorem \ref{bb} and Subsection \ref{QFT}). 
\par
In the present paper we embed the previous discussion in an abstract framework; thus we consider a general self-adjoint operators $H$ (playing the role of the free Hamiltonian $H_{\rm free}$) in an abstract Hilbert space $\H$ (playing the role of the Fock space) and an abstract annihilation operators $A$. 
In Section 2 we provide a self-contained presentation (with some simplifications and generalizations) of (parts of) our previous results contained in the papers  \cite{P01}, \cite{P03}, \cite{P04},  \cite{P08} that we will need later and give a results of the approximation (in norn resolvent sense) by regular perturbations of the singular perturbations here provided. In particular, in Subsection \ref{SP}, we consider the problem of the construction, by providing their resolvents, of the self-adjoint extensions of the 
symmetric restriction $S:=H|\ker(\Sigma)$, where $\Sigma:\dom(H)\to\X$ is bounded with respect to the graph norm in $\D(H)$ and $\X$ is an auxiliary Hilbert space. 
Successively, in  Section 3, we apply the previous results to the case where $\X=\H$ and $\Sigma=A$. This provides a family $H_{T}$ of self-adjoint extension of $S$, where the parameterizing operator $T$ is self-adjoint in $\H$. Then, we apply again the results in Subsection \ref{SP} now to the case where $H=H_{T}$ and $\Sigma=1-A_{*}$, $A_{*}$ a suitable left inverse of $(A(-H+{\bar z} )^{-1})^{*}$. 
The final self-adjoint operator $\widehat H_{T}$ is the one we were looking for: it can be represented as 
$\widehat H_{T}=\overline H+A^{*}+A_{T}$, where $\overline H$ is a (no more $\H$-valued) suitable closure of $H$ such that $\overline H+A^{*}$ is $\H$-valued when restricted to $\D(S^{*})$ and $A_{T}$ is an extension of the abstract annihilation operator $A$. By inserting the resolvent Kre\u\i n formula for $H_{T}$ into the one for $\widehat H_{T}$, one gets a Kre\u\i n resolvent formula for the difference $(-\widehat H_{T}+z)^{-1}-(-\widehat H+z)^{-1}$ which contains only the resolvent of $H$ and the operator $T$ (see \eqref{KF}) in Theorem \ref{resolvent}). Since $A_{T}$ has the additive representation $A_{T}=A_{0}+T$, where $A_{0}$ corresponds to the case $T=0$, $T$ enters in an additive way in the definition of $\widehat H_{T}$, i.e.,     
$\widehat H_{T}=\widehat H_{0}+T$ and so one can relax the self-adjointness request on $T$, and suppose that  $T$ is symmetric and $\widehat H_{0}$-bounded with relative bound ${\widehat a}<1$, see Theorems \ref{alo} and \ref{sb}. The same resolvent formula holds also in this case, see \eqref{KF-sb}. Notice that this does not contradict the usual parameterization of self-adjoint extensions by self-adjoint operators; indeed the true parameterizing operator turns out to be a ($T$-dependent) $2\times 2$ block operator matrix which is always self-adjoint, even in the case $T$ is merely symmetric (see Remark \ref{mer}). In Theorem \ref{bb} we address the problem of the approximation of $\widehat H_{T}$ by a sequence of regular perturbations on $H$. Finally, in Subsection \ref{QFT}, we show how, by the suitable choice $T=T_{\rm Nelson}$ provided in \cite{LS}, one obtains $\widehat H_{T_{\rm Nelson}}=H_{\rm Nelson}$, where the self-adjoint Hamiltonian $H_{\rm Nelson}$ is the one constructed in the seminal paper \cite{N}; the same kind of analysis can be applied to other renormalizable quantum field models. 
\subsection{Notations.} \begin{itemize}
\item $\D(L)$, $\ker(L)$, $\ran(L)$ denote the domain, kernel and range of the linear operator $L$ respectively; 
\item $\varrho(L)$ denotes the resolvent set of $L$;
\item $L|V$ denotes the restriction of $L$ to the subspace $V\subset\D(L)$;
\item $\B(X,Y)$ denotes the set of bounded linear operators on the Banach space $X$ to the Banach space $Y$, $\B(X):=\B(X,X)$; 
\item $\|\cdot\|_{X,Y}$ denotes the norm in $\B(X,Y)$;
\item $\|\cdot\|_{\D(L),Y}$ denotes the norm in $\B(\D(L),Y)$, where $L:\D(L)\subset X\to Y$ is  a closed linear operator and $\D(L)$ is equipped with the graph norm;
\item $\CO_{\pm}:=\{z\in\CO:\pm\text{Im}(z)>0\}$.
\end{itemize}
\vskip10pt\noindent
{\bf Acknowledgements.} The author thanks Jonas Lampart for some useful explanations, stimulating comments and bibliographic remarks.

\end{section}

\begin{section}{Singular perturbations and Kre\u\i n-type resolvent formulae.}
\begin{subsection}{Singular perturbations}\label{SP}
For convenience of the reader, in this subsection we provide a compact (almost) self-contained presentation (with some simplifications and generalizations) of parts of the results from papers \cite{P01}, \cite{P03}, \cite{P04},  \cite{P08} that we will need in the next section; we also refer to papers \cite{P04} and \cite{P08} for the comparison with other formulations (mainly with boundary triple theory, see, e.g.,  \cite[Section 7.3]{DHMdS}, \cite[Chapter 2]{BHdS}) which produce some similar outcomes. \par 
Let $$H :\dom(H)\subseteq\H\to\H$$ be a self-adjoint operator in the Hilbert space $\H$ with scalar product $\langle\cdot,\cdot\rangle$; just in order to simplify the exposition, we suppose that $ \varrho(H )\cap\RE\not=\emptyset$ (without this hypothesis some formulae become a bit longer). We introduce the following definition:
$$
\text{$\Sob_{1}$ denotes the Hilbert space given by $\D(H)$ endowed with the scalar product $\langle \cdot,\cdot\rangle_{1}$,}
$$  
$$
\langle \psi_{1},\psi_{2}\rangle_{1}:=\langle(H^{2}+1)^{1/2}\psi_{1},(H^{2}+1)^{1/2}\psi_{2}\rangle
\,;$$
$\Sob_{1}$ coincides, as a Banach space, with $\dom(H)$ equipped with the graph norm. Given a bounded linear map $$  \Sigma:{\Sob_1} \to \X\,,$$ 
$ \X$ an auxiliary Hilbert space with scalar product $(\cdot,\cdot)$, for any $z\in  \varrho(H)$ we define the linear bounded operator 
\be\label{Gz}
G_{z} : \X\to\H \,,\qquad G_{z}:=(  \Sigma R_{{\bar z} })^{*}\,,
\ee
where
$$
R_{z}:\H\to{\Sob_1}\,,\qquad R_{z}:=(-H +z )^{-1}\,.
$$
In the most typical situations, whenever $H$ is an 2nd order, elliptic differential operator, $\Sigma$ is the Dirichlet (Neumann) trace operator on the boundary of a subset of $\RE^{n}$ and $G_{z}$ is a single (double) layer operator (see, e.g.,  \cite[Example 5.5]{P08}, \cite{JDE}  and references therein).\par
We pick $\lambda_{\circ}  \in \varrho(H )\cap\RE$ and set 
\be\label{GL}
R:=R_{\lambda_{\circ} }\,,\qquad 
G:=G_{\lambda_{\circ} }\,.\ee
By first resolvent identity one has
\be\label{RG}
(z-w)R_{w}G_{z}=G_{w}-G_{z}=(z-w)R_{z}G_{w}\,.
\ee
Hence 
$$
\ran(G_{w}-G_{z})\subseteq{\Sob_1}\,,
$$
and the linear operator (playing the role of what is called a Weyl operator-valued function in boundary triple theory, see \cite{P04}, \cite[Section 7.3]{DHMdS}, \cite[Chapter 2]{BHdS})
\be\label{WF}
M_{z}:=  \Sigma(G-G_{z}): \X\to \X
\ee is well defined and bounded; by  \eqref{RG} it can be re-written as  
\be\label{Mz}
M_{z}=(z-\lambda_{\circ} )G^{*}G_{z}=(z-\lambda_{\circ} )G_{{\bar z} }^{*}G\,.
\ee
By \eqref{Mz} one gets the relations
\be\label{QF}
M_{z}^{*}=M_{{\bar z} }\,,\qquad M_{z}-M_{w}=(z-w)G^{*}_{\bar w}G_{z}\,.
\ee
Given $\Theta:\D(\Theta)\subseteq \X\to \X$ self-adjoint, we define 
\be\label{ZST}
Z_{\Sigma,\Theta}:=\{z\in \varrho(H):\text{$\Theta+M_{z}$  has inverse in $\B( \X)$}\}\,.
\ee
\begin{remark}\label{III} By $(\Theta+M_{z})^{*}=\Theta+M_{{\bar z} }$ and by \cite[Theorem 5.30, Chap. III]{K}, one has 
$$
z\in Z_{\Sigma,\Theta}\ \Rightarrow\ {\bar z} \in Z_{\Sigma,\Theta}\,.
$$
\end{remark}
\begin{theorem}\label{KFt} Let $\Sigma:\Sob_{1}\to\X$ be bounded and let $\Theta:\D(\Theta)\subseteq \X\to \X$ be self-adjoint. Suppose that  
\be\label{H1}
\text{$Z_{\Sigma,\Theta}$ is not empty}
\ee
and define 
\be\label{K}
(-H_{\Theta}+z)^{-1}:=
(-H+z)^{-1}+G_{z}(\Theta+M_{z})^{-1}G^{*}_{{\bar z} }\,,\quad z\in Z_{\Sigma,\Theta}\,.
\ee
If
\be\label{H2}
\ker(G)=\{0\}\,,\qquad \ran(G)\cap{\Sob_1}=\{0\}\,,
\ee
then \eqref{K} is the resolvent of a self-adjoint operator $H_{\Theta}$ and $Z_{\Sigma,\Theta}= \varrho(H)\cap \varrho(H_{\Theta})$; moreover
\be\label{domain}
\D(H_{\Theta})=\{\psi\in\H:\text{$\exists\,\phi\in\dom(\Theta)$ s.t. $\psi_{0}:=\psi-G\phi\in{\Sob_1}$ and $  \Sigma\psi_{0}=\Theta\phi$}\}
\ee
and 
$$
(-H_{\Theta}+\lambda_{\circ} )\psi=(-H+\lambda_{\circ} )\psi_{0}\,.
$$
\end{theorem}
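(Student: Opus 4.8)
The plan is to treat the right-hand side of \eqref{K}, call it $R^{\Theta}_{z}$, as a candidate resolvent: first check that $\{R^{\Theta}_{z}\}_{z\in Z_{\Sigma,\Theta}}$ is a pseudo-resolvent, then that each $R^{\Theta}_{z}$ is injective and satisfies $(R^{\Theta}_{z})^{*}=R^{\Theta}_{\bar z}$, and only at the end read off the domain and the action of $H_{\Theta}$. Throughout I would use the elementary facts that $\Sigma R_{z}\in\B(\H,\X)$ (being the composition of the bounded $R_{z}\in\B(\H,\Sob_{1})$ with the bounded $\Sigma:\Sob_{1}\to\X$), so that $G^{*}_{\bar z}=(\Sigma R_{z})^{**}=\Sigma R_{z}$ and likewise $G^{*}_{z}=\Sigma R_{\bar z}$; that $\Sigma(G-G_{z})=M_{z}$ by \eqref{WF} (in particular $M_{\lambda_{\circ}}=0$); and the identities \eqref{RG}, \eqref{Mz}, \eqref{QF}. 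From \eqref{RG} one also gets $R_{w}G_{z}=(z-w)^{-1}(G_{w}-G_{z})$ for $z\neq w$, and, taking adjoints, $G^{*}_{\bar z}R_{w}=(z-w)^{-1}(G^{*}_{\bar w}-G^{*}_{\bar z})$.

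First I would verify the first resolvent identity $R^{\Theta}_{z}-R^{\Theta}_{w}=(w-z)R^{\Theta}_{z}R^{\Theta}_{w}$ for $z,w\in Z_{\Sigma,\Theta}$ (nonempty by \eqref{H1}). Expanding both sides and using $R_{z}-R_{w}=(w-z)R_{z}R_{w}$, the two displayed consequences of \eqref{RG}, and the algebraic identity $(\Theta+M_{z})^{-1}-(\Theta+M_{w})^{-1}=-(z-w)(\Theta+M_{z})^{-1}G^{*}_{\bar w}G_{z}(\Theta+M_{w})^{-1}$, which follows from \eqref{QF}, all terms cancel; this is somewhat long but entirely mechanical. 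In particular $\{R^{\Theta}_{z}\}$ is a pseudo-resolvent, so $\ker(R^{\Theta}_{z})$ and $\ran(R^{\Theta}_{z})$ are independent of $z$.

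Next I would bring in \eqref{H2}. If $R^{\Theta}_{z}f=0$, then with $\phi:=(\Theta+M_{z})^{-1}G^{*}_{\bar z}f\in\D(\Theta)$ we get $G\phi=(G-G_{z})\phi-R_{z}f\in\Sob_{1}$ (using $\ran(G-G_{z})\subseteq\Sob_{1}$), hence $G\phi=0$ by $\ran(G)\cap\Sob_{1}=\{0\}$, hence $\phi=0$ by $\ker(G)=\{0\}$, hence $R_{z}f=0$ and $f=0$ since $z\in\varrho(H)$; thus $\ker(R^{\Theta}_{z})=\{0\}$. Taking adjoints in \eqref{K} and using $R^{*}_{z}=R_{\bar z}$, $(G^{*}_{\bar z})^{*}=G_{\bar z}$, $G^{*}_{z}=\Sigma R_{\bar z}$ and $((\Theta+M_{z})^{-1})^{*}=(\Theta+M_{\bar z})^{-1}$ (which is meaningful, and gives $\bar z\in Z_{\Sigma,\Theta}$, by Remark \ref{III}), one obtains $(R^{\Theta}_{z})^{*}=R^{\Theta}_{\bar z}$. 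Since $R^{\Theta}_{\bar z}$ is injective, $\ran(R^{\Theta}_{z})$ is dense, so the pseudo-resolvent $\{R^{\Theta}_{z}\}$ is the resolvent family of a closed, densely defined operator $H_{\Theta}$ with $(-H_{\Theta}+z)^{-1}=R^{\Theta}_{z}$ for $z\in Z_{\Sigma,\Theta}$; then $(-H^{*}_{\Theta}+\bar z)^{-1}=(R^{\Theta}_{z})^{*}=R^{\Theta}_{\bar z}=(-H_{\Theta}+\bar z)^{-1}$ forces $H_{\Theta}=H^{*}_{\Theta}$. This already yields $Z_{\Sigma,\Theta}\subseteq\varrho(H)\cap\varrho(H_{\Theta})$; for the reverse inclusion I would, for $z\in\varrho(H)\cap\varrho(H_{\Theta})$, run the computation of $(-H_{\Theta}+z)^{-1}-(-H+z)^{-1}$ backwards to realise $(\Theta+M_{z})^{-1}$ as a bounded operator, so that $z\in Z_{\Sigma,\Theta}$.

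Finally, the domain. Writing $\psi=R^{\Theta}_{z}f$ and $\phi:=(\Theta+M_{z})^{-1}G^{*}_{\bar z}f\in\D(\Theta)$, put $\psi_{0}:=\psi-G\phi=R_{z}f-(G-G_{z})\phi$; then $\psi_{0}\in\Sob_{1}$ by \eqref{RG}, and $\Sigma\psi_{0}=\Sigma R_{z}f-M_{z}\phi=G^{*}_{\bar z}f-M_{z}\phi=(\Theta+M_{z})\phi-M_{z}\phi=\Theta\phi$, so $\psi$ lies in the right-hand side of \eqref{domain}. Conversely, given $\psi=\psi_{0}+G\phi$ with $\psi_{0}\in\Sob_{1}$, $\phi\in\D(\Theta)$, $\Sigma\psi_{0}=\Theta\phi$, a decomposition that is unique by \eqref{H2}, set $f:=(-H+z)(\psi_{0}+(G-G_{z})\phi)$, legitimate since the argument lies in $\D(H)$; one checks $\Sigma R_{z}f=\Sigma\psi_{0}+M_{z}\phi=(\Theta+M_{z})\phi$, so $(\Theta+M_{z})^{-1}G^{*}_{\bar z}f=\phi$ and hence $R^{\Theta}_{z}f=R_{z}f+G_{z}\phi=\psi_{0}+G\phi=\psi$, i.e. $\psi\in\D(H_{\Theta})$. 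Moreover $(-H_{\Theta}+z)\psi=f=(-H+z)\psi_{0}+(-H+z)(G-G_{z})\phi=(-H+z)\psi_{0}+(z-\lambda_{\circ})G\phi$ by \eqref{RG}, whence $(-H_{\Theta}+\lambda_{\circ})\psi=(-H_{\Theta}+z)\psi-(z-\lambda_{\circ})\psi=(-H+\lambda_{\circ})\psi_{0}$, as claimed. I expect the main obstacle to be the bookkeeping in the first resolvent identity together with keeping careful track of which operators are genuinely bounded (such as $\Sigma R_{z}$ and $G^{*}_{\bar z}$) as opposed to merely densely defined (such as $\Sigma$ and $\Theta$); the structural hypotheses \eqref{H2} enter only to force injectivity of $R^{\Theta}_{z}$ and uniqueness of the decomposition $\psi=\psi_{0}+G\phi$.
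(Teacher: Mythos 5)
Your proposal is correct and follows essentially the same route as the paper: pseudo-resolvent property of the right-hand side of \eqref{K}, injectivity of it via \eqref{H2}, the symmetry $(R^{\Theta}_{z})^{*}=R^{\Theta}_{\bar z}$ leading to self-adjointness of the associated operator, and the domain characterization \eqref{domain} together with the action $(-H_{\Theta}+\lambda_{\circ})\psi=(-H+\lambda_{\circ})\psi_{0}$ obtained by matching $\ran$ of the candidate resolvent with the decompositions $\psi=\psi_{0}+G\phi$. The only point handled differently is the equality $Z_{\Sigma,\Theta}=\varrho(H)\cap\varrho(H_{\Theta})$, which the paper obtains by citing \cite[Theorem 2.19 and Remark 2.20]{CFP} rather than by the "backward computation" you sketch; that computation is precisely what is carried out there, so nothing essential is missing.
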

\begin{proof} At first let us notice that, by $\ran(G-G_{z})\subseteq{\Sob_1}$, \eqref{H2} implies that the same relations hold for $G_{z}$ for any $z\in\varrho(H)$. By \eqref{QF}, the operator family on the righthand side of \eqref{K} (here denoted by $\breve R_{z}$) is a pseudo-resolvent (i.e., it satisfies the first resolvent identity) and $\breve R_{z}^{*}=\breve R_{{\bar z} }$ (see \cite[page 115]{P01}). Moreover, if $\psi\in \ker(\breve R_{z})$ then $(-H+z)^{-1}\psi=-G_{z}(\Theta+M_{z})^{-1}G^{*}_{{\bar z} }\psi=-G_{z}(\Theta+M_{z})^{-1}\Sigma(-H+z)^{-1}\psi$; this gives $\psi=0$ by \eqref{H2} and so $\ker(\breve R_{z})=\{0\}$. Hence, by \cite[Theorems 4.10 and 4.19]{Stone}, $\breve R_{z}$ is the resolvent of a self-adjoint operator $\breve H$ defined by 
$$
\dom(\breve H):=\ran(\breve R_{z})=\{\psi=\psi_{z}+G_{z}(\Theta+M_{z})^{-1}\Sigma \psi_{z},\ \psi_{z}\in{\Sob_1}\}\,,
$$
$$(-\breve H+z)\psi:=\breve R_{z}^{-1}\psi=(-H+z)\psi_{z}\,.$$
Let us now show that $\breve H=H_{\Theta}$.  Posing  
$\phi_z:=(\Theta+M_{z})^{-1}\Sigma \psi_{z}\in\D(\Theta)$, since the definition of $\breve H$ is $z$-independent, $\psi\in\D(\breve H)$ if and only if, for any $z\in Z_{\Sigma,\Theta}$, there exists $\psi_z\in{\Sob_1}$,  $\Sigma \psi_z=(\Theta+M_{z})\phi_{z}$, such that $\psi=\psi_z+G_z\phi_{z}$.
Then, by \eqref{RG}, 
$$
\psi_{z}-\psi_{w}=G_w\phi_{w}-G_z\phi_{z}=
G_{z}(\phi_{w}-\phi_{z})+(z-w)R_zG_w\phi_{w}\,.
$$
By \eqref{H2}, this gives $\phi_{z}=\phi_{w}$, i.e., the definition of $\phi_{z}$ is $z$-independent. 
Thus, setting $\psi_{0}:=\psi_z+(G_z-G )\phi$, one has $\psi=\psi_{0}+G \phi$, with $\psi_{0}\in{\Sob_1}$ and
$$
\Sigma \psi_{0}-\Theta\phi=\Sigma \psi_z-\Sigma(G-G_{z} )\phi-\Theta\phi=\Sigma \psi_z-(\Theta+M_{z})\phi=0\,.
$$
Therefore $\D(\breve H)\subseteq \D(H_{\Theta})$. Conversely, given $\psi=\psi_{0}+G\phi\in\D(H_{\Theta})$, defining $\psi_{z}=\psi_{0}+(G-G_{z})\phi$, one has $\psi=\psi_{z}+G_{z}\phi$ and 
$\Sigma\psi_{z}=\Sigma\psi_{0}+\Sigma(G-G_{z})\phi=(\Theta+M_{z})^{-1}\phi$, i.e. $\psi\in\D(\breve H)$; so $\D(H_{\Theta})\subseteq \D(\breve H)$ and in conclusion $\D(\breve H)=\D(H_{\Theta})$. Then, by \eqref{RG},
\begin{align*}
&(-\breve H+\lambda_{\circ} )\psi=(-H+\lambda_{\circ} )\psi_z+(\lambda_{\circ} -z)(\psi-\psi_{z})\\
=&(-H+\lambda_{\circ} )\psi_0+(-H+\lambda_{\circ} )(\psi_{z}-\psi_0)+(\lambda_{\circ} -z)G_{z}\phi\\
=&(-H+\lambda_{\circ} )\psi_0+(-H+\lambda_{\circ} )(G-G_{z})\phi-(z-\lambda_{\circ} )G_{z}\phi\\
=&(-H+\lambda_{\circ} )\psi_0\,.
\end{align*}
Finally, \cite[Theorem 2.19 and Remark 2.20]{CFP} give $Z_{\Sigma,\Theta}\not=\emptyset\Rightarrow Z_{\Sigma,\Theta}=\varrho(H)\cap\varrho(H_{\Theta})$. 
\end{proof}
\begin{remark}
Notice that, in order to prove that \eqref{K} is the resolvent of a self-adjoint operator, only the second of the two hypothesis in \eqref{H2} is required; they both provide the domain representation \eqref{domain}. In particular,  by $\psi-G\phi_{1}-(\psi-G\phi_{2})=G(\phi_{1}-\phi_{2})\in{\Sob_1}$ and by \eqref{H2},  
for any $\psi\in\D(H_{\Theta})$ there is an unique $\phi\in\H$ such that $\psi-G\phi\in{\Sob_1}$. Hence the characterization of $\D(H_{\Theta})$ in \eqref{domain} is well defined. \par
Remarks \ref{T-1}, \ref{RR} and Theorems \ref{TRK}, \ref{sb} below show that one can still have a self-adjoint operator with a resolvent given by a formula like \eqref{K} even if hypothesis \eqref{H2} does not hold true.
\end{remark}
\begin{remark}\label{obv} Obviously, if $0\in\varrho(\Theta)$ then $\lambda_{\circ} \in Z_{\Sigma,\Theta}$. In this case, whenever \eqref{H2} holds, $\lambda_{\circ} \in \varrho(H_{\Theta})$ and 
\be\label{Rl0}
(-H_{\Theta}+\lambda_{\circ} )^{-1}=(-H+\lambda_{\circ} )^{-1}+G\Theta^{-1}G^{*}\,,
\ee
\end{remark}
Regarding hypotheses \eqref{H1} and \eqref{H2}, one has the following sufficient conditions:
\begin{lemma}\label{suff} Let $\Sigma\in\B(\Sob_{1},\X)$, let $\Theta:\dom(\Theta)\subseteq\X\to\X$ be self-adjoint and let $G_{z}$ and $Z_{\Sigma,\Theta}$ be defined as in \eqref{Gz} and \eqref{ZST}. Then
$$
\text{$\ran(  \Sigma)$ dense in $ \X\quad\Leftrightarrow\quad\ker(G_{z})=\{0\}$;}
$$
$$
\text{$\ker(  \Sigma)$ dense in $\H\quad\Rightarrow\quad\ran(G_{z})\cap{\Sob_1}=\{0\}$;}
$$
$$
\text{$  \Sigma$ surjective onto $ \X\quad\Rightarrow\quad Z_{\Sigma,\Theta}\supseteq\CO\backslash\RE$.}
$$
\end{lemma}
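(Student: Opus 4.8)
The plan is to establish the three statements in the order listed --- the first two by routine manipulation, the third by a numerical-range estimate --- using throughout that $R_{\bar z}=(-H+\bar z)^{-1}$ maps $\H$ bijectively onto $\Sob_{1}$, hence $\Sigma R_{\bar z}\in\B(\H,\X)$ with $G_{z}=(\Sigma R_{\bar z})^{*}$ and $G_{z}^{*}=\Sigma R_{\bar z}$. For the first line, since $R_{\bar z}$ is onto $\Sob_{1}$ one has $\ran(\Sigma R_{\bar z})=\Sigma(\Sob_{1})=\ran(\Sigma)$, and since $G_{z}$ is the adjoint of the everywhere-defined operator $\Sigma R_{\bar z}$ one has $\ker(G_{z})=\ran(\Sigma R_{\bar z})^{\perp}=\ran(\Sigma)^{\perp}$, which is $\{0\}$ precisely when $\ran(\Sigma)$ is dense. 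Nothing delicate is involved.

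For the second line, assume $\ker(\Sigma)$ dense in $\H$ and take $\xi\in\X$ with $G_{z}\xi\in\Sob_{1}=\dom(H)$. For arbitrary $\psi\in\ker(\Sigma)\subseteq\dom(H)$, set $\eta:=(-H+\bar z)\psi$, so that $\psi=R_{\bar z}\eta$ and $\Sigma R_{\bar z}\eta=\Sigma\psi=0$; then, using $G_{z}\xi\in\dom(H)$, the self-adjointness of $H$ and the definition of $G_{z}$,
$$
\langle(-H+z)G_{z}\xi,\psi\rangle=\langle G_{z}\xi,(-H+\bar z)\psi\rangle=\langle G_{z}\xi,\eta\rangle=(\xi,\Sigma R_{\bar z}\eta)=0\,.
$$
Density of $\ker(\Sigma)$ gives $(-H+z)G_{z}\xi=0$, whence $G_{z}\xi=0$ since $z\in\varrho(H)$; thus $\ran(G_{z})\cap\Sob_{1}=\{0\}$.

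The third line is the substantive one, and the step I expect to be the real obstacle is realizing that one needs $\Sigma$ to be onto, not merely to have dense range, precisely in order to make the imaginary part of $M_{z}$ strictly positive rather than only non-negative. Concretely: $G_{z}^{*}=\Sigma R_{\bar z}$ is a composition of the surjections $R_{\bar z}:\H\to\Sob_{1}$ and $\Sigma:\Sob_{1}\to\X$, hence a bounded surjection, so by the open mapping theorem its adjoint satisfies $\|G_{z}\xi\|\ge c_{z}\|\xi\|$ for some $c_{z}>0$. I would then use \eqref{QF} with $w=\bar z$ to write $M_{z}-M_{z}^{*}=M_{z}-M_{\bar z}=(z-\bar z)\,G_{z}^{*}G_{z}$, so that $\tfrac{1}{2i}(M_{z}-M_{z}^{*})=\text{Im}(z)\,G_{z}^{*}G_{z}\ge\text{Im}(z)\,c_{z}^{2}$. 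Hence, for $z\in\CO_{+}$ and $\xi\in\dom(\Theta)$, using that $\Theta$ and $\tfrac12(M_{z}+M_{z}^{*})$ are symmetric,
$$
\|(\Theta+M_{z})\xi\|\,\|\xi\|\ \ge\ \text{Im}\,\langle(\Theta+M_{z})\xi,\xi\rangle\ =\ \text{Im}(z)\,\|G_{z}\xi\|^{2}\ \ge\ \text{Im}(z)\,c_{z}^{2}\,\|\xi\|^{2}\,,
$$
so $\Theta+M_{z}$ is bounded below on $\dom(\Theta)$ and, being a bounded perturbation of the self-adjoint $\Theta$, is closed, hence has closed range. The same estimate applied to $(\Theta+M_{z})^{*}=\Theta+M_{\bar z}$ --- whose self-adjoint part $\text{Im}(\bar z)\,G_{\bar z}^{*}G_{\bar z}$ is strictly negative, as $G_{\bar z}$ is likewise bounded below --- shows $(\Theta+M_{z})^{*}$ is injective, so $\ran(\Theta+M_{z})$ is also dense; therefore $\ran(\Theta+M_{z})=\X$ and $(\Theta+M_{z})^{-1}\in\B(\X)$, i.e. $z\in Z_{\Sigma,\Theta}$. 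For $z\in\CO_{-}$ I would just invoke Remark \ref{III}. Everything past the bounded-below observation is bookkeeping with the identities \eqref{RG}--\eqref{QF}.
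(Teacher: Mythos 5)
Your proposal is correct and follows essentially the same route as the paper: parts one and two are the same orthogonality arguments, and for the third part both proofs rest on the lower bound $|\mathrm{Im}\,((\Theta+M_{z})\phi,\phi)|=|\mathrm{Im}(z)|\,\|G_{z}\phi\|^{2}$ combined with the bounded-below property of $G_{z}$ coming from surjectivity of $\Sigma$ (you get it directly from the open mapping theorem, the paper via the closed range theorem plus injectivity), followed by closed range plus injectivity of $\Theta+M_{\bar z}$ to conclude surjectivity. The only slips are cosmetic: the term $\mathrm{Im}(\bar z)\,G_{\bar z}^{*}G_{\bar z}$ is the imaginary (skew) part of $\Theta+M_{\bar z}$, not its ``self-adjoint part'', and the paper's single estimate already covers both half-planes, so the appeal to Remark \ref{III} for $\CO_{-}$ is harmless but unnecessary.
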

\begin{proof} 1) By  $\ker(G_{z})=\ran(G_{{\bar z} }^{*})^{\perp}$, $\ker(G_{z})=\{0\}$ if and only if $\ran (G_{{\bar z} }^{*})=\ran(\Sigma R_{z})=\ran(  \Sigma)$ is dense. \par
2) Suppose  $G_{z}\phi=R_{z}\psi$, equivalently $(-H+z)G_{z}\phi=\psi$. Then 
$$
\langle\psi,\varphi\rangle=\langle(-H+z)G_{z}\phi,\varphi\rangle=( \phi,G^{*}_{z}(-H+{\bar z} )\varphi)=(\phi,  \Sigma\varphi)=0
$$ 
for any $\varphi\in\ker(  \Sigma)\subseteq{\Sob_1}$.  This gives $\psi=0$ whenever $\ker(  \Sigma)$ is dense in $\H$.\par
3) Let $\phi\in\D(\Theta)$, $\|\phi\|_{\X}=1$; by \eqref{QF} one gets
\be\label{Im}
\|(\Theta+M_{z})\phi\|^{2}\ge |((\Theta+M_{z})\phi,\phi)|^{2}\ge\text{Im}(z)^{2}\,\|G_{z}\phi\|^{4}\,.
\ee
Since $  \Sigma$ is surjective, $G^*_{z}=\Sigma R_{{\bar z} }$ has a closed range and so $G_{z}$ has closed range as well by the closed range theorem. Therefore, since, by point 1), $\ker(G_{z})=\{0\}$, there exists $\gamma_{\circ}>0$ such that $\|G_{z}\phi\|\ge \gamma_{\circ}\,\|\phi\|$ (see \cite[Thm. 5.2, Chap. IV]{K}). Thus, by \eqref{Im}, $\Theta+M_{z}$ has a bounded inverse and, by \cite[Thm. 5.2, Chap. IV]{K}, has a closed range. Therefore, by \eqref{Im} again, $$\dom((\Theta+M_{z})^{-1})=\ran(\Theta+M_{z})=\ker(\Theta+M_{{\bar z} })^{\perp}=\{0\}^{\perp}= \X$$ and so $(\Theta+M_{z})^{-1}\in\B( \X)$.
\end{proof}
\begin{remark} Suppose that $\ran(\Sigma)=\X$. Then, $\ran(G_{z})\cap{\Sob_1}=\{0\}$ if and only if $\ker(  \Sigma)$ is dense in $\H$ (see \cite[Lemma 2.1]{P03}).
\end{remark}
In the following by {\it symmetric operator} we mean a (not necessarily densely defined) linear operator $S:\D(S)\subseteq\H \to\H$ such that $\langle S\psi_{1},\psi_{2}\rangle=\langle \psi_{1},S\psi_{2}\rangle$ for any $\psi_{1}$ and $\psi_{2}$ belonging to $\D(S)$; whenever $S$ is densey defined, $S^{*}$ denotes its adjoint.
\begin{lemma}\label{adj}  Let $S$ be the symmetric operator $S:=H|\ker(\Sigma)$ and suppose that 
\eqref{H2} holds true; define the linear operator $$S^{\times}:\DS\subseteq\H\to\H\,,\qquad (-S^{\times}+\lambda_{\circ}  )\psi:=(-H +\lambda_{\circ}  )\psi_{0}\,,$$ 
\begin{align*}
\DS:=&\{\psi\in\H:\exists\, \phi\in\X\ \text{\rm such that}\ \psi_{0} :=\psi-G \phi\in{\Sob_1}\}\,.
\end{align*}
If $\ker(\Sigma)$ is dense in $\H$, then  $S^{\times}\subseteq S^{*}$; if furthermore $\ran(\Sigma)=\X$, then $S^{\times}=S^{*}$. If \eqref{H1} and \eqref{H2} hold then $H_{\Theta}$ is a self-adjoint extension of $S$ and $S\subseteq H_{\Theta}\subseteq S^{\times}$.  
\end{lemma}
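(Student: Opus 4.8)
The plan is to prove the three assertions in turn, all resting on the unique decomposition $\psi=\psi_{0}+G\phi$, $\psi_{0}\in\Sob_{1}$, $\phi\in\X$, of the elements $\psi\in\DS$, which is forced by \eqref{H2} (if $\psi_{0}+G\phi=\psi_{0}'+G\phi'$ then $G(\phi-\phi')=\psi_{0}'-\psi_{0}\in\Sob_{1}$, hence $\phi=\phi'$ and $\psi_{0}=\psi_{0}'$). In particular $S^{\times}$ is well defined, $\Sob_{1}\subseteq\DS$ with $S^{\times}|\Sob_{1}=H$, and $\DS=\Sob_{1}\dotplus\ran(G)$. Throughout I will use that $G^{*}=\Sigma R$, since $\lambda_{\circ}$ is real.

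For $S^{\times}\subseteq S^{*}$, assuming $\ker(\Sigma)$ dense (which makes $S$, hence $S^{*}$, well defined and gives $S^{*}|\Sob_{1}=H$ from $S\subseteq H=H^{*}\subseteq S^{*}$), I would fix $\psi=\psi_{0}+G\phi\in\DS$, read off from the definition that $\eta:=S^{\times}\psi=H\psi_{0}+\lambda_{\circ}G\phi$, and verify $\langle Sv,\psi\rangle=\langle v,\eta\rangle$ for every $v\in\dom(S)=\ker(\Sigma)$. Splitting $\langle Hv,\psi\rangle$ along the decomposition, the $\psi_{0}$-part equals $\langle v,H\psi_{0}\rangle$ by self-adjointness of $H$, and the $G\phi$-part reduces to $\lambda_{\circ}\langle v,G\phi\rangle$ once one computes $G^{*}Hv=\Sigma(RHv)=\Sigma(-v+\lambda_{\circ}Rv)=\lambda_{\circ}G^{*}v$, using $RHv=-v+\lambda_{\circ}Rv$ for $v\in\dom(H)$ and $\Sigma v=0$. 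This gives $\psi\in\dom(S^{*})$ with $S^{*}\psi=\eta$.

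For $S^{\times}=S^{*}$ under the extra hypothesis $\ran(\Sigma)=\X$, the missing inclusion is $\dom(S^{*})\subseteq\DS$, and I would base it on the identity $\ker(-S^{*}+\lambda_{\circ})=\ran(G)$. One has $\ker(-S^{*}+\lambda_{\circ})=\ran(-S+\lambda_{\circ})^{\perp}=\bigl((-H+\lambda_{\circ})\ker(\Sigma)\bigr)^{\perp}$ and $\ker(G^{*})=\ker(\Sigma R)=(-H+\lambda_{\circ})\ker(\Sigma)$; since $\ran(\Sigma)=\X$ makes $\ran(G)$ closed (closed-range theorem, as in the proof of Lemma \ref{suff}), the two orthogonal complements coincide, yielding the identity. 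Then, for $\psi\in\dom(S^{*})$, the vector $\psi_{0}:=R(-S^{*}\psi+\lambda_{\circ}\psi)\in\Sob_{1}$ satisfies $(-S^{*}+\lambda_{\circ})(\psi-\psi_{0})=0$ because $S^{*}|\Sob_{1}=H$, so $\psi-\psi_{0}\in\ran(G)$ and $\psi\in\DS$; together with $S^{\times}\subseteq S^{*}$ this closes the argument.

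Finally, for the statement on $H_{\Theta}$: by Theorem \ref{KFt} its domain is the subset of $\DS$ cut out by the extra conditions $\phi\in\dom(\Theta)$ and $\Sigma\psi_{0}=\Theta\phi$, with $(-H_{\Theta}+\lambda_{\circ})\psi=(-H+\lambda_{\circ})\psi_{0}$; since $H_{\Theta}$ acts on this subset exactly as $S^{\times}$ (uniqueness of the decomposition), $H_{\Theta}\subseteq S^{\times}$, and choosing $\phi=0\in\dom(\Theta)$ one sees $\ker(\Sigma)=\dom(S)\subseteq\dom(H_{\Theta})$ with $H_{\Theta}$ agreeing there with $H=S$, so $H_{\Theta}$ (self-adjoint by Theorem \ref{KFt}) extends $S$. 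I expect the only genuinely non-routine step to be the closed-range argument identifying $\ker(-S^{*}+\lambda_{\circ})$ with $\ran(G)$, which is precisely where surjectivity of $\Sigma$ enters; the rest is bookkeeping with the splitting $\psi=\psi_{0}+G\phi$.
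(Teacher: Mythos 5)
Your proposal is correct. For the inclusion $S^{\times}\subseteq S^{*}$ and for the statements about $H_{\Theta}$ you follow essentially the same route as the paper: the adjoint relation is verified against $v\in\ker(\Sigma)$ using $G^{*}=\Sigma R$ and $\Sigma v=0$ (the paper phrases the computation with $(-H+\lambda_{\circ})\varphi$ rather than $Hv$, which is only a cosmetic difference), and the facts $S\subseteq H_{\Theta}\subseteq S^{\times}$ are read off from the domain description \eqref{domain} of Theorem \ref{KFt} exactly as you do, with the choice $\phi=0$ giving $\ker(\Sigma)\subseteq\D(H_{\Theta})$. The one place where you genuinely diverge is the equality $S^{\times}=S^{*}$ under $\ran(\Sigma)=\X$: the paper does not prove this but cites \cite[Theorem 4.1]{P03}, whereas you supply a self-contained argument via the identity $\ker(-S^{*}+\lambda_{\circ})=\ran(G)$, obtained from $\ran(-S+\lambda_{\circ})=(-H+\lambda_{\circ})\ker(\Sigma)=\ker(G^{*})$ together with closedness of $\ran(G)$ (closed range theorem, since $\ran(G^{*})=\ran(\Sigma)=\X$), and then the decomposition $\psi=\psi_{0}+(\psi-\psi_{0})$ with $\psi_{0}:=R(-S^{*}\psi+\lambda_{\circ}\psi)$ and $H\subseteq S^{*}$. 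This is correct (it is in essence the content of the cited result) and has the merit of making the lemma self-contained; the paper's choice buys brevity at the cost of an external reference. All the auxiliary facts you invoke — uniqueness of the splitting $\psi=\psi_{0}+G\phi$ from \eqref{H2}, $S^{\times}|\Sob_{1}=H$, and $0\in\dom(\Theta)$ — are used correctly.
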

\begin{proof} Let $\psi\in\D(S^{\times})$, $\psi=\psi_{0}+G\phi$, and $\varphi\in\dom(S)=\ker(\Sigma)$. Then, by $G^{*}=\Sigma {R}$, 
\begin{align*}
\langle\psi,(-S+\lambda_{\circ} )\varphi\rangle=&\langle\psi,(-H+\lambda_{\circ} )\varphi\rangle=\langle\psi_{0},(-H+\lambda_{\circ} )\varphi\rangle+\langle G\phi,(-H+\lambda_{\circ} )\varphi\rangle\\
=&
\langle(-H+\lambda_{\circ} )\psi_{0},\varphi\rangle+\langle \phi,G^{*}(-H+\lambda_{\circ} )\varphi\rangle=
\langle(-H+\lambda_{\circ} )\psi_{0},\varphi\rangle+\langle \phi,\Sigma\varphi\rangle\\
=&\langle(-H+\lambda_{\circ} )\psi_{0},\varphi\rangle\,.
\end{align*}
Therefore $\psi\in \D(-S^{*}+\lambda_{\circ} )=\D(S^{*})$ and $(-S^{*}+\lambda_{\circ} )\psi=(-H+\lambda_{\circ} )\psi_{0}=(-S^{\times}+\lambda_{\circ} )\psi$.  Hence $S^{\times}\subseteq S^{*}$. The equality $S^{\times}=S^{*}$ whenever $\ran(\Sigma)=\X$ is proven in \cite[Theorem 4.1]{P03}. Finally, $\ker(\Sigma)\subseteq\D(H_{\Theta})$ and $H_{\Theta}|\ker(\Sigma)=H|\ker(\Sigma)$ are immediate consequences of Theorem \ref{KFt}.
\end{proof}
\begin{lemma}\label{BT} Let $S^{\times}$ be defined as in Lemma \ref{adj}. Then, 
for any $ \psi, \varphi\in  \DS $, one has the abstract Green's identity
\be\label{Green}
\langle S^{\times} \psi, \varphi\rangle -\langle  \psi,S^{\times} \varphi\rangle =
(\Sigma_{*} \psi,  \Sigma_{0} \varphi)-(\Sigma_{0} \psi,\Sigma_{*} \varphi) \,,
\ee
where, in case $\psi\in  \DS $ decomposes as $\psi=\psi_{0}+G\phi$,
\be\label{bt1}
  \Sigma_{0} : \DS\to\X,\quad   \Sigma_{0} \psi:=  \Sigma \psi_{0} \,,
\ee
\be\label{bt2}
\Sigma_{*}: \DS\to\X,\quad \Sigma_{*}\psi:=\phi\,.
\ee
\end{lemma}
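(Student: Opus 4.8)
The plan is to reduce \eqref{Green} to a short computation built on the explicit formula $(-S^{\times}+\lambda_{\circ})\psi=(-H+\lambda_{\circ})\psi_{0}$ and on the identity $G^{*}=\Sigma R$. First I would record that hypothesis \eqref{H2} makes the decomposition $\psi=\psi_{0}+G\phi$ with $\psi_{0}\in\Sob_{1}$ \emph{unique}: if $\psi_{0}+G\phi=\psi_{0}'+G\phi'$ then $\psi_{0}-\psi_{0}'=G(\phi'-\phi)\in\Sob_{1}\cap\ran(G)=\{0\}$ by the second relation in \eqref{H2}, hence $\psi_{0}=\psi_{0}'$, and then $\phi=\phi'$ by $\ker(G)=\{0\}$. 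Consequently the maps $\Sigma_{0}$ and $\Sigma_{*}$ in \eqref{bt1}--\eqref{bt2} are well-defined and linear on $\DS$, and it suffices to verify \eqref{Green} for two elements with fixed decompositions $\psi=\psi_{0}+G\phi$ and $\varphi=\varphi_{0}+G\xi$.

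Since $\lambda_{\circ}\in\RE$ it is equivalent, and more convenient, to compute the sesquilinear form $\langle(-S^{\times}+\lambda_{\circ})\psi,\varphi\rangle-\langle\psi,(-S^{\times}+\lambda_{\circ})\varphi\rangle$, which equals $-\langle S^{\times}\psi,\varphi\rangle+\langle\psi,S^{\times}\varphi\rangle$. Inserting $(-S^{\times}+\lambda_{\circ})\psi=(-H+\lambda_{\circ})\psi_{0}$, $(-S^{\times}+\lambda_{\circ})\varphi=(-H+\lambda_{\circ})\varphi_{0}$ and expanding by bilinearity, this form becomes $\langle(-H+\lambda_{\circ})\psi_{0},\varphi_{0}\rangle-\langle\psi_{0},(-H+\lambda_{\circ})\varphi_{0}\rangle+\langle(-H+\lambda_{\circ})\psi_{0},G\xi\rangle-\langle G\phi,(-H+\lambda_{\circ})\varphi_{0}\rangle$. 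The first two terms cancel because $H$ is self-adjoint, $\lambda_{\circ}$ is real and $\psi_{0},\varphi_{0}\in\Sob_{1}=\dom(H)$. For the two cross terms I would use $G^{*}=\Sigma R$ with $R=(-H+\lambda_{\circ})^{-1}$, so that $G^{*}(-H+\lambda_{\circ})\chi=\Sigma\chi$ for every $\chi\in\dom(H)$; then, moving $G$ across via its adjoint (using $\langle G\alpha,\beta\rangle=(\alpha,G^{*}\beta)$ and $\langle\beta,G\alpha\rangle=(G^{*}\beta,\alpha)$), one gets $\langle(-H+\lambda_{\circ})\psi_{0},G\xi\rangle=(\Sigma\psi_{0},\xi)=(\Sigma_{0}\psi,\Sigma_{*}\varphi)$ and $\langle G\phi,(-H+\lambda_{\circ})\varphi_{0}\rangle=(\phi,\Sigma\varphi_{0})=(\Sigma_{*}\psi,\Sigma_{0}\varphi)$.

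Collecting the pieces yields $-\langle S^{\times}\psi,\varphi\rangle+\langle\psi,S^{\times}\varphi\rangle=(\Sigma_{0}\psi,\Sigma_{*}\varphi)-(\Sigma_{*}\psi,\Sigma_{0}\varphi)$, which is exactly \eqref{Green} after multiplying by $-1$. There is no genuine obstacle here; the only points that require care are the well-definedness of $\Sigma_{0}$ and $\Sigma_{*}$ — which is precisely what \eqref{H2} provides, and the reason why $S^{\times}$, although in general not symmetric, still admits a Green-type identity with a nontrivial boundary (``defect'') term on the right-hand side — and keeping track of the conjugation convention of the scalar product $(\cdot,\cdot)$ in $\X$ when transferring $G$ to its adjoint $G^{*}$.
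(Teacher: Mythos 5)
Your proof is correct and follows essentially the same route as the paper's: both expand $\langle(-S^{\times}+\lambda_{\circ})\psi,\varphi\rangle-\langle\psi,(-S^{\times}+\lambda_{\circ})\varphi\rangle$ using $(-S^{\times}+\lambda_{\circ})\psi=(-H+\lambda_{\circ})\psi_{0}$, cancel the $\psi_{0}$--$\varphi_{0}$ terms by self-adjointness of $H$ and $\lambda_{\circ}\in\RE$, and convert the cross terms via $G^{*}=\Sigma R$ into $(\Sigma_{0}\psi,\Sigma_{*}\varphi)$ and $(\Sigma_{*}\psi,\Sigma_{0}\varphi)$. Your preliminary observation that \eqref{H2} makes the decomposition $\psi=\psi_{0}+G\phi$ unique (so that $\Sigma_{0}$, $\Sigma_{*}$ are well defined) is a point the paper handles in an earlier remark rather than inside this proof, but it is the same argument.
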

\begin{proof} Let $\psi=\psi_{0}+G\phi$, $\varphi=\varphi_{0}+G\rho$. By the definition of $S^{\times}$ and by $G^{*}=\Sigma {R}$, one gets
\begin{align*}
&\langle S^{\times} \psi, \varphi\rangle -\langle  \psi,S^{\times} \varphi\rangle =
-(\langle (-S^{\times}+\lambda_{\circ} ) \psi, \varphi\rangle -\langle  \psi,(-S^{\times}+\lambda_{\circ} ) \varphi\rangle )
\\
=&-(\langle (-H+\lambda_{\circ} ) \psi_{0}, \varphi_{0}+G\rho\rangle -\langle  \psi_{0}+G\phi,(-H+\lambda_{\circ} ) \varphi_{0}\rangle )\\
=&-(\langle  \psi_{0}, (-H+\lambda_{\circ} )\varphi_{0}\rangle+ (\Sigma\psi_{0},\rho)-\langle   \psi_{0},(-H+\lambda_{\circ} )\varphi_{0}\rangle -(\phi,\Sigma\varphi_{0}))\\
=&(\Sigma_{*}\psi,\Sigma_{0}\varphi)-(\Sigma_{0}\psi,\Sigma_{*}\varphi)\,.
\end{align*}
\end{proof}
\begin{remark} By Lemma \ref{BT}, whenever $\ker(\Sigma)$ is dense in $\H$ and $\ran(\Sigma)=\X$, the triple $(\X,\Sigma_{*},\Sigma_{0})$ is a boundary triple for $S^{*}$ (see \cite[Theorem 3.1]{P04}, \cite[Theorem 4.2]{P08}). Otherwise $(\X,\Sigma_{*},\Sigma_{0})$ resembles a boundary triple of bounded type (see \cite[Section 7.4]{DHMdS}, see also \cite[Section 6.3]{BL} for the similar definition of quasi boundary triple).
\end{remark}
\begin{remark}\label{LI} $\Sigma_{*}$ is a left inverse of $G_{z}$: since $\ran(G_{w}-G_{z})\subseteq{\Sob_1}$, one has $\Sigma_{*}G_{z}\phi=\Sigma_{*}((G_{z}-G)\phi+G\phi)=\phi$.
\end{remark}
The operator $S^{\times}$ (and hence also $H_{\Theta}$) has an alternative additive representation. 
At first, following \cite[Section 9]{KP}, we introduce a convenient scale of Hilbert spaces $\Sob_{s}$, $s\in \RE$, $\Sob_{t}\hookrightarrow\Sob_{0}\equiv\H\hookrightarrow\Sob_{u}$, $t<0<u$. We define $\Sob_{s}$ as (the completion of, whenever $s<0$) $\dom((H^{2}+1)^{s/2})$ endowed with the scalar product $$\langle \psi_{1},\psi_{2}\rangle_{s}:=\langle(H^{2}+1)^{s/2}\psi_{1},(H^{2}+1)^{s/2}\psi_{2}\rangle\,.
$$
Notice that that $R_{z}$ extends to a bounded bijective map (which we denote by the same symbol) on $\Sob_{s}$, $s<0$, and $R_{z}\in\B(\Sob_{s},\Sob_{s+1})$ for any $z\in\varrho(H)$ and for any $s\in\RE$; here we are in particular interested in the case $s=-1$. The linear operator $H$, being a densely defined bounded operator on $\H$ to  $\Sob_{-1}$, extends to the bounded operator on the whole $\H$ given by its closure: for any $\psi\in\H$ and for any sequence $\{\psi_{n}\}_{1}^{\infty}\subseteq{\Sob_1}$ such that $\psi_{n}\overset{\H}\to \psi$
$$
\overline H:\H\to \Sob_{-1}\,,\qquad \overline H\psi:=\Sob_{-1}\,\text{-}\lim_{n\uparrow\infty} H\psi_{n}\,.
$$
Let us denote by $\langle\cdot,\cdot\rangle_{-1,1}:\Sob_{-1}\times{\Sob_1}\to\CO$,  the pairing obtained by extending the scalar product:
\be\label{pairing}
\langle\psi,\varphi\rangle_{-1,1}:=\lim_{n\uparrow\infty}\langle\psi_{n},\varphi\rangle\,,\qquad \psi_{n}\overset{\Sob_{-1}}\to\psi\,,\ \psi_{n}\in\H\,,\ \varphi\in{\Sob_1}\,. 
\ee
Then we define $\Sigma^{*}:\X\to\Sob_{-1}$ by 
\be\label{menouno}
\langle \Sigma^{*}\phi,\varphi\rangle_{-1,1}=(\phi,\Sigma\varphi)\,,\qquad \varphi\in{\Sob_1}\,,\ \phi\in\X\,.
\ee
\begin{remark}\label{closed}Let us notice that $R_{z}:\Sob_{-1}\to \H$ is the adjoint, with respect the pairing $\langle\cdot,\cdot\rangle_{-1,1}$, of $R_{\bar z}:\Sob_{1}\to \H$ and it is the inverse of $(-\overline H+z):\H\to\Sob_{-1}$; therefore $G_{z}=R_{z}\Sigma^{*}$ and $$
\ker(G)=\{0\}\ \Leftrightarrow\ \ker(\Sigma^{*})=\{0\}\,,$$ 
\be\label{H2.2} \ran(G)\cap\Sob_{1}=\{0\}\ \Leftrightarrow\ \ran(\Sigma^{*})\cap\H=\{0\}\,.
\ee
If $\Sigma_{\circ}:\Sob_{1}\subseteq\H\to\X$ denotes the densely defined, linear operator  $\Sigma_{\circ}\psi:=\Sigma\psi$, then $\Sigma_{\circ}^{*}:\dom(\Sigma_{\circ}^{*})\subseteq\X\to\H$ is the restriction of $\Sigma^{*}$ to the subspace $\{\psi\in\X:\Sigma^{*}\psi\in\H\}$; therefore, by \eqref{H2.2}, $\ran(G)\cap\Sob_{1}=\{0\}$ if and only if $\dom(\Sigma_{\circ}^{*})=\ker(\Sigma^{*})$. Thus, if $\Sigma_{\circ}$ is closable, so that $\dom(\Sigma_{\circ}^{*})$ is dense, 
then the hypothesis $\ran(G)\cap\Sob_{1}=\{0\}$ is violated (here we omit the trivial case $\Sigma\equiv0$).
\end{remark}
\begin{lemma}\label{add} If $\psi\in\D(S^{\times})$ then 
$\overline H\psi+\Sigma^{*}\Sigma_{*}\psi$ belongs to $\H$ and it equals $S^{\times}\psi$:
$$
S^{\times}=(\overline H+\Sigma^{*}\Sigma_{*})|\D(S^{\times})\,.
$$
\end{lemma}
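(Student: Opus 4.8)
The plan is to unravel the two additive pieces of $\psi$ and to check that the $\Sob_{-1}$-valued ``defect'' produced by $\overline H$ on the singular part of $\psi$ is exactly cancelled by $\Sigma^{*}\Sigma_{*}\psi$. Fix $\psi\in\DS$ and write $\psi=\psi_{0}+G\phi$ with $\psi_{0}\in\Sob_{1}$ and $\phi\in\X$; under \eqref{H2} (which is in force, since we are using $S^{\times}$ as in Lemma \ref{adj}) this decomposition is unique, so $\Sigma_{*}\psi=\phi$ is well defined by \eqref{bt2}, and by the very definition of $S^{\times}$ one has $(-S^{\times}+\lambda_{\circ})\psi=(-H+\lambda_{\circ})\psi_{0}$, i.e.
$$
S^{\times}\psi=H\psi_{0}+\lambda_{\circ}G\phi\in\H\,.
$$

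Next I would compute $\overline H\psi$ in $\Sob_{-1}$. Since $\overline H$ restricts to $H$ on $\Sob_{1}$, one has $\overline H\psi_{0}=H\psi_{0}$. For the singular part, recall from Remark \ref{closed} that $G=R\Sigma^{*}$ and that $R=R_{\lambda_{\circ}}:\Sob_{-1}\to\H$ is the inverse of $(-\overline H+\lambda_{\circ}):\H\to\Sob_{-1}$; hence $(-\overline H+\lambda_{\circ})G\phi=(-\overline H+\lambda_{\circ})R\Sigma^{*}\phi=\Sigma^{*}\phi$, that is $\overline H G\phi=\lambda_{\circ}G\phi-\Sigma^{*}\phi$ in $\Sob_{-1}$. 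Adding the two contributions gives
$$
\overline H\psi=H\psi_{0}+\lambda_{\circ}G\phi-\Sigma^{*}\phi
$$
as an identity in $\Sob_{-1}$.

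Finally I would add $\Sigma^{*}\Sigma_{*}\psi=\Sigma^{*}\phi$: the term $-\Sigma^{*}\phi$, which is the only summand living a priori only in $\Sob_{-1}$, cancels, leaving $\overline H\psi+\Sigma^{*}\Sigma_{*}\psi=H\psi_{0}+\lambda_{\circ}G\phi$, which lies in $\H$ and coincides with the expression for $S^{\times}\psi$ found above. This proves both assertions and hence the operator identity $S^{\times}=(\overline H+\Sigma^{*}\Sigma_{*})|\DS$.

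I do not expect a genuine obstacle here once Remark \ref{closed} is available; the only point requiring a little care is the bookkeeping of which space each term inhabits — $H\psi_{0}$ and $\lambda_{\circ}G\phi$ are in $\H$, whereas $\Sigma^{*}\phi$ is in general only in $\Sob_{-1}$ — so all the intermediate equalities must be read as identities in $\Sob_{-1}$ via the embedding $\H\hookrightarrow\Sob_{-1}$, with only the final cancellation of $\Sigma^{*}\phi$ bringing the result back into $\H$.
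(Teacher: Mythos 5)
Your proposal is correct and follows essentially the same route as the paper: decompose $\psi=\psi_{0}+G\phi$, reduce everything to the single identity $(-\overline H+\lambda_{\circ})G\phi=\Sigma^{*}\phi$, and observe the cancellation that puts $\overline H\psi+\Sigma^{*}\Sigma_{*}\psi=H\psi_{0}+\lambda_{\circ}G\phi=S^{\times}\psi$ back in $\H$. The only difference is cosmetic: you obtain the key identity by citing Remark \ref{closed} ($G=R\Sigma^{*}$ with $R$ inverting $-\overline H+\lambda_{\circ}:\H\to\Sob_{-1}$), whereas the paper verifies it directly through the pairing $\langle\cdot,\cdot\rangle_{-1,1}$ using $G^{*}=\Sigma R$, which is in effect the same computation.
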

\begin{proof} Let $\psi\in\D(S^{\times})$, $\psi=\psi_{0}+G\phi$. Then 
\begin{align*}
&S^{\times}\psi=-(-S^{\times}+\lambda_{\circ} )\psi+\lambda_{\circ} \psi=-(-H+\lambda_{\circ} )\psi_{0}+\lambda_{\circ} \psi\\
=& -(-\overline H+\lambda_{\circ} )(\psi-G\phi)+\lambda_{\circ} \psi= \overline H\psi+(-\overline H+\lambda_{\circ} )G\phi\,.
\end{align*}
Noticing that, for any $\psi\in\H$ and $\varphi\in{\Sob_1}$, taking any sequence $\{\psi_{n}\}_{1}^{\infty}\subseteq{\Sob_1}$ such that $\psi_{n}\overset{\H}\to \psi$, one has 
$$
\langle(-\overline H+\lambda_{\circ} )\psi,\varphi\rangle_{-1,1}=
\lim_{n\uparrow\infty}\langle(-H+\lambda_{\circ} )\psi_{n},\varphi\rangle_{-1,1}=\lim_{n\uparrow\infty}\langle\psi_{n},(- H+\lambda_{\circ} )\varphi\rangle=\langle\psi,(- H+\lambda_{\circ} )\varphi\rangle\,,
$$ one gets
$$
\langle(-\overline H+\lambda_{\circ} )G\phi,\varphi\rangle_{-1,1}=
\langle G\phi,(-H+\lambda_{\circ} )\varphi\rangle=(\phi, G^{*}(-H+\lambda_{\circ} )\varphi)=
(\phi,\Sigma\varphi)=\langle \Sigma^{*}\phi,\varphi\rangle_{-1,1}\,.
$$
This gives $(-\overline H+\lambda_{\circ} )G\phi=\Sigma^{*}\phi=\Sigma^{*}\Sigma_{*}\psi$ and the proof is done.
\end{proof} 
Summing up, one gets the following
\begin{theorem}\label{rem} Given $\Sigma:\Sob_{1}\to\X$ bounded and $\Theta:\D(\Theta)\subseteq\X\to\X$ self-adjoint, suppose that hypotheses \eqref{H1} and \eqref{H2} hold. Then, setting 
$$
\Sigma_{\Theta}:\D(\Sigma_{\Theta})\subseteq\H\to\H\,,\qquad \Sigma_{\Theta}:=\Sigma_{0}-\Theta\Sigma_{*}\,,
$$
$$
\D(\Sigma_{\Theta}):=\{\psi\in\D(S^{\times}):\Sigma_{*}\psi\in\D(\Theta)\}\,,
$$
one has that $H_{\Theta}=S^{\times}|\ker(\Sigma_{\Theta})$ is a self-adjoint extension of $S=H|\ker(\Sigma)$; moreover
$$
H_{\Theta}=\overline H+\Sigma^{*}\Sigma_{*}
$$
and
\be\label{K1}
(-H_{\Theta}+z)^{-1}=
(-H+z)^{-1}-G_{z}(\Sigma_{\Theta}G_{z})^{-1}G^{*}_{{\bar z} }\,,\quad z\in \varrho(H)\cap
\varrho(H_{\Theta})\,.
\ee
\end{theorem}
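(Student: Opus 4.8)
The statement is a synthesis of Theorem~\ref{KFt}, Lemma~\ref{adj}, Lemma~\ref{BT}, Remark~\ref{LI} and Lemma~\ref{add}, so the plan is to assemble these. First I would identify the domain. By Theorem~\ref{KFt}, $\D(H_{\Theta})$ consists of those $\psi$ for which there is $\phi\in\D(\Theta)$ with $\psi_{0}:=\psi-G\phi\in{\Sob_1}$ and $\Sigma\psi_{0}=\Theta\phi$. Since \eqref{H2} forces the decomposition $\psi=\psi_{0}+G\phi$ to be unique, such a $\psi$ lies in $\D(S^{\times})$ and, in the notation of Lemma~\ref{BT}, $\Sigma_{0}\psi=\Sigma\psi_{0}$ and $\Sigma_{*}\psi=\phi$ are the well-defined boundary values; the two conditions "$\phi\in\D(\Theta)$'' and "$\Sigma\psi_{0}=\Theta\phi$'' then translate exactly into "$\Sigma_{*}\psi\in\D(\Theta)$'' and "$(\Sigma_{0}-\Theta\Sigma_{*})\psi=0$'', i.e.\ $\psi\in\ker(\Sigma_{\Theta})$. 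The converse inclusion is read off the same way, so $\D(H_{\Theta})=\ker(\Sigma_{\Theta})$. Because both $(-H_{\Theta}+\lambda_{\circ})\psi$ and $(-S^{\times}+\lambda_{\circ})\psi$ equal $(-H+\lambda_{\circ})\psi_{0}$, we conclude $H_{\Theta}=S^{\times}|\ker(\Sigma_{\Theta})$; self-adjointness of $H_{\Theta}$ and the chain $S\subseteq H_{\Theta}\subseteq S^{\times}$ are already contained in Theorem~\ref{KFt} and Lemma~\ref{adj}.

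The additive representation is then immediate from Lemma~\ref{add}: one has $S^{\times}=(\overline H+\Sigma^{*}\Sigma_{*})|\D(S^{\times})$, and restricting to $\D(H_{\Theta})=\ker(\Sigma_{\Theta})\subseteq\D(S^{\times})$ gives $H_{\Theta}=\overline H+\Sigma^{*}\Sigma_{*}$ on $\D(H_{\Theta})$.

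For the resolvent formula I would compute $\Sigma_{\Theta}G_{z}$ explicitly. Given $\phi\in\X$, write $G_{z}\phi=(G_{z}-G)\phi+G\phi$; by \eqref{RG} one has $(G_{z}-G)\phi\in{\Sob_1}$, so $G_{z}\phi\in\D(S^{\times})$ with $\psi_{0}=(G_{z}-G)\phi$, whence $\Sigma_{0}G_{z}\phi=\Sigma(G_{z}-G)\phi=-M_{z}\phi$ by \eqref{WF}, while $\Sigma_{*}G_{z}\phi=\phi$ by Remark~\ref{LI}. Hence $G_{z}\phi\in\D(\Sigma_{\Theta})$ precisely when $\phi\in\D(\Theta)$, and on $\D(\Theta)$
$$\Sigma_{\Theta}G_{z}=\Sigma_{0}G_{z}-\Theta\Sigma_{*}G_{z}=-(M_{z}+\Theta)\,.$$
For $z\in Z_{\Sigma,\Theta}=\varrho(H)\cap\varrho(H_{\Theta})$ (the last identity by Theorem~\ref{KFt}) the right-hand side is a bijection of $\X$ onto $\X$, so $\Sigma_{\Theta}G_{z}$ is boundedly invertible with $(\Sigma_{\Theta}G_{z})^{-1}=-(\Theta+M_{z})^{-1}$; substituting into \eqref{K} produces \eqref{K1}.

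The only point requiring a little care is the domain bookkeeping in the first step: one must check that the $\phi$ appearing in Theorem~\ref{KFt} is forced to coincide with $\Sigma_{*}\psi$ and that the pair of defining conditions for $\D(H_{\Theta})$ matches membership in $\D(\Sigma_{\Theta})$ together with $\ker(\Sigma_{\Theta})$. This rests entirely on the uniqueness of the decomposition $\psi=\psi_{0}+G\phi$ guaranteed by \eqref{H2}; everything else reduces to direct substitution of formulae already established.
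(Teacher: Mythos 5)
Your proposal is correct and follows essentially the same route as the paper: the paper's proof also just assembles Theorem \ref{KFt} and Lemmata \ref{adj}, \ref{BT}, \ref{add} together with the key identity $(\Theta+M_{z})\phi=-(\Sigma_{0}-\Theta\Sigma_{*})G_{z}\phi$ for $\phi\in\D(\Theta)$, which is exactly your computation $\Sigma_{\Theta}G_{z}=-(\Theta+M_{z})$. Your write-up merely makes explicit the domain bookkeeping that the paper leaves implicit.
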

\begin{proof} The thesis is consequence of  Theorem \ref{KFt}, Lemmata \ref{adj}, \ref{BT} and \ref{add}, noticing that, for any $\phi\in\D(\Theta)$, 
$$
(\Theta+M_{z})\phi =\Theta\phi+\Sigma(G -G_{z})\phi = -\Sigma_{0}((G_{z} -G)\phi+G\phi)+\Theta\phi =- (\Sigma_{0}-\Theta\Sigma_{*})G_{z}\phi\,.
$$  
\end{proof}
\begin{remark} Notice that if $\Theta$ has an inverse $\Lambda$ then $\Sigma_{*}\psi=\Lambda\Sigma_{0}\psi$ for any $\psi\in\D(H_{\Theta})=\ker(\Sigma_{\Theta})$; therefore in this case one has, in Theorem \ref{rem}, 
$$
H_{\Theta}=\overline H+\Sigma^{*}\Lambda\Sigma_{0}\,.
$$
\end{remark}
\end{subsection}
\begin{subsection}{Approximations by regular perturbations}
If $\Sigma$ is a bounded operator on $\H$, $\Sigma\in\B(\H,\X )$, then $G_{z}=R_{z}\Sigma^{*}$ has values in ${\Sob_1}$ and so hypothesis \eqref{H2} does not hold; more generally, by Remark \ref{closed}, hypothesis \eqref{H2} is violated whenever $\Sigma$, as an operator in $\H$ with domain $\Sob_{1}$, is closable. A simple example of an analogue of resolvent formula \eqref{K} in the case of regular perturbations is provided in the following
\begin{theorem}\label{TRK} Let $\Lambda:\dom(\Lambda)\subseteq\X\to \X$ be
symmetric and let $\Sigma_{\circ}:\dom(\Sigma_{\circ})\subseteq\H\to\X$, $\dom(\Sigma_{\circ})\supseteq\Sob_{1}$, be closable such that  
$\Sigma_{\circ}\in\B(\Sob_{1},\X)$, $\Lambda\Sigma_{\circ}\in\B(\Sob_{1},\X)$, $\Sigma_{\circ}^{*}\Lambda\Sigma_{\circ}\in\B(\Sob_{1},\H)$  and $\Lambda\Sigma_{\circ}R\Sigma_{\circ}^{*}\in\B(\X)$. If
$$
\lim_{|\gamma|\uparrow\infty}\|\Sigma_{\circ}^{*}\Lambda\Sigma_{\circ}R_{i\gamma}\|_{\H,\H}=a<1\,,\qquad \lim_{|\gamma|\uparrow\infty}\|\Lambda\Sigma_{\circ}R_{i\gamma}\Sigma_{\circ}^{*}\|_{\X,\X}=b<1\,,\qquad \gamma\in\RE\,,
$$ 
then $\widetilde H_{\Lambda}:=H+\Sigma_{\circ}^{*}\Lambda\Sigma_{\circ}$ is self-adjoint, with $\dom(\widetilde H_{\Lambda})=\Sob_{1}$ and resolvent given, whenever $|\gamma|$ is sufficiently large, by  
\be\label{K-RK}
(-\widetilde H_{\Lambda}+i\gamma)^{-1}=R_{i\gamma}+(\Sigma_{\circ}R_{-i\gamma})^{*}(\uno-\Lambda \Sigma_{\circ} R_{i\gamma}\Sigma_{\circ}^{*})^{-1}\Lambda \Sigma_{\circ} R_{i\gamma}
\,.
\ee
In the case $\Lambda=\Theta_{\circ}^{-1}$, $\Theta_{\circ}:\dom(\Theta_{\circ})\subseteq\X\to\X$ self-adjoint with $0\in \varrho(\Theta_{\circ})$, one has
\be\label{K-RK1}
(-\widetilde H_{\Lambda}+z)^{-1}=R_{z}+(\Sigma_{\circ}R_{{\bar z} })^{*}(\Theta_{\circ}- \Sigma_{\circ} R_{z}\Sigma_{\circ}^{*})^{-1} \Sigma_{\circ} R_{z}\,,\qquad z\in \varrho(H)\cap\varrho(\widetilde H_{\Lambda})\,.
\ee
\end{theorem}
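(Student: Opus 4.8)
The plan is to recognize $V:=\Sigma_\circ^*\Lambda\Sigma_\circ$ as a Kato--Rellich perturbation of $H$ and then to turn the Neumann series for $(-\widetilde H_\Lambda+i\gamma)^{-1}$ into the sandwiched expression \eqref{K-RK} by means of a push-through identity. First I would check that $V$ is symmetric on $\dom(H)=\Sob_1$: for $\psi_{1},\psi_{2}\in\Sob_1$, using \eqref{menouno} together with the hypothesis $\Sigma_\circ^*\Lambda\Sigma_\circ\psi_{j}\in\H$ one gets $\langle V\psi_{1},\psi_{2}\rangle=(\Lambda\Sigma_\circ\psi_{1},\Sigma_\circ\psi_{2})=(\Sigma_\circ\psi_{1},\Lambda\Sigma_\circ\psi_{2})=\langle\psi_{1},V\psi_{2}\rangle$, symmetry of $\Lambda$ being used in the middle. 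Next, writing $V\psi=(\Sigma_\circ^*\Lambda\Sigma_\circ R_{i\gamma})(-H+i\gamma)\psi$ for $\psi\in\Sob_1$ and estimating $\|(-H+i\gamma)\psi\|\le\|H\psi\|+|\gamma|\,\|\psi\|$, one obtains $\|V\psi\|\le a_{\gamma}(\|H\psi\|+|\gamma|\,\|\psi\|)$ with $a_{\gamma}:=\|\Sigma_\circ^*\Lambda\Sigma_\circ R_{i\gamma}\|_{\H,\H}$; by the first limit hypothesis $a_{\gamma}\to a<1$, so $a_{\gamma}<1$ for $|\gamma|$ large, i.e.\ $V$ is $H$-bounded with relative bound $<1$. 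Kato--Rellich then gives that $\widetilde H_\Lambda=H+V$ is self-adjoint with $\dom(\widetilde H_\Lambda)=\Sob_1$.

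For $|\gamma|$ large enough that $a_{\gamma}<1$ and $b_{\gamma}:=\|\Lambda\Sigma_\circ R_{i\gamma}\Sigma_\circ^*\|_{\X,\X}<1$, the operators $\uno-\Sigma_\circ^*\Lambda\Sigma_\circ R_{i\gamma}\in\B(\H)$ and $\uno-\Lambda\Sigma_\circ R_{i\gamma}\Sigma_\circ^*\in\B(\X)$ are boundedly invertible; the second composition is well defined because, by \eqref{RG}, $R_{i\gamma}\Sigma_\circ^*-R\Sigma_\circ^*=G_{i\gamma}-G$ has range in $\Sob_1$, so that $\Lambda\Sigma_\circ R_{i\gamma}\Sigma_\circ^*=\Lambda\Sigma_\circ R\Sigma_\circ^*+\Lambda\Sigma_\circ(G_{i\gamma}-G)\in\B(\X)$ by the standing assumptions on $\Lambda\Sigma_\circ$. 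From the factorization $(-\widetilde H_\Lambda+i\gamma)\psi=(\uno-\Sigma_\circ^*\Lambda\Sigma_\circ R_{i\gamma})(-H+i\gamma)\psi$, valid on $\Sob_1$, one obtains $i\gamma\in\varrho(\widetilde H_\Lambda)$ and $(-\widetilde H_\Lambda+i\gamma)^{-1}=R_{i\gamma}(\uno-\Sigma_\circ^*\Lambda\Sigma_\circ R_{i\gamma})^{-1}$. Now set $W:=\Lambda\Sigma_\circ R_{i\gamma}\in\B(\H,\X)$; using $G_{z}=R_{z}\Sigma_\circ^*$ (Remark \ref{closed}) one has $R_{i\gamma}V=G_{i\gamma}\Lambda\Sigma_\circ$ on $\Sob_1$ and hence $R_{i\gamma}VR_{i\gamma}=G_{i\gamma}W$, while $\Sigma_\circ^*W=VR_{i\gamma}$. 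The elementary identity $(\uno-X)^{-1}-\uno=X(\uno-X)^{-1}$ with $X=VR_{i\gamma}$ then gives
\[
(-\widetilde H_\Lambda+i\gamma)^{-1}-R_{i\gamma}=R_{i\gamma}VR_{i\gamma}(\uno-\Sigma_\circ^*W)^{-1}=G_{i\gamma}\,W(\uno-\Sigma_\circ^*W)^{-1}=G_{i\gamma}(\uno-W\Sigma_\circ^*)^{-1}W\,,
\]
the last equality being the push-through identity $W(\uno-\Sigma_\circ^*W)^{-1}=(\uno-W\Sigma_\circ^*)^{-1}W$, immediate from $W(\uno-\Sigma_\circ^*W)=(\uno-W\Sigma_\circ^*)W$ once both factors are invertible. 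Since $G_{i\gamma}=(\Sigma_\circ R_{-i\gamma})^{*}$ by \eqref{Gz}, this is exactly \eqref{K-RK}.

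For $\Lambda=\Theta_\circ^{-1}$ with $0\in\varrho(\Theta_\circ)$ one has $\Theta_\circ^{-1}\in\B(\X)$, and a direct computation gives $(\uno-\Theta_\circ^{-1}\Sigma_\circ R_{i\gamma}\Sigma_\circ^*)^{-1}\Theta_\circ^{-1}=(\Theta_\circ-\Sigma_\circ R_{i\gamma}\Sigma_\circ^*)^{-1}$, the right-hand side being thereby defined as a bounded operator; inserting this into \eqref{K-RK} yields \eqref{K-RK1} for $z=i\gamma$ with $|\gamma|$ large. To pass to an arbitrary $z\in\varrho(H)\cap\varrho(\widetilde H_\Lambda)$ I would note that $z\mapsto\Theta_\circ^{-1}\Sigma_\circ R_{z}\Sigma_\circ^*=\Lambda\Sigma_\circ R\Sigma_\circ^*-\Lambda\Sigma_\circ(G-G_{z})$ is bounded and analytic on $\varrho(H)$ (by \eqref{WF} and \eqref{RG}), whence the right member of \eqref{K-RK1} is analytic in $z$ wherever $\Theta_\circ-\Sigma_\circ R_{z}\Sigma_\circ^*$ is boundedly invertible; and that---exactly as in the final step of the proof of Theorem \ref{KFt}, via \cite[Theorem 2.19 and Remark 2.20]{CFP}---the set of such $z$ in $\varrho(H)$ coincides with $\varrho(H)\cap\varrho(\widetilde H_\Lambda)$. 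Both sides of \eqref{K-RK1} are then analytic on that set and agree on $\{i\gamma:|\gamma|\text{ large}\}$, hence everywhere, by the identity theorem.

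I expect the real work to be the bookkeeping in the middle paragraph: one must ensure that every ``mixed'' product used there---$\Lambda\Sigma_\circ R_{i\gamma}\Sigma_\circ^*$, $R_{i\gamma}VR_{i\gamma}=G_{i\gamma}W$, and both members of the push-through identity---is a genuine equality of bounded operators between the correct spaces, even though $\Sigma_\circ$ is bounded only on $\Sob_1$ and $\Sigma_\circ^*$ maps, a priori, only into $\Sob_{-1}$. The device making this work is the first resolvent identity, which confines the ``singular'' difference $G_{i\gamma}-G$ to $\Sob_1$, where $\Sigma_\circ$ and $\Lambda\Sigma_\circ$ act boundedly, while the standing hypotheses supply the remaining ``anchor'' terms evaluated at $z=\lambda_\circ$.
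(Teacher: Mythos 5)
Your proposal is correct and follows essentially the same route as the paper: symmetry plus the first limit hypothesis give Kato--Rellich self-adjointness on $\Sob_{1}$, the factorization $(-\widetilde H_{\Lambda}+i\gamma)=(\uno-\Sigma_{\circ}^{*}\Lambda\Sigma_{\circ}R_{i\gamma})(-H+i\gamma)$ yields $R_{i\gamma}(\uno-\Sigma_{\circ}^{*}\Lambda\Sigma_{\circ}R_{i\gamma})^{-1}$, and the extension to all of $\varrho(H)\cap\varrho(\widetilde H_{\Lambda})$ is delegated to \cite[Theorem 2.19 and Remark 2.20]{CFP} exactly as in the paper. The only (inessential) difference is that you re-bracket via the push-through identity $W(\uno-\Sigma_{\circ}^{*}W)^{-1}=(\uno-W\Sigma_{\circ}^{*})^{-1}W$, whereas the paper re-sums the Neumann series termwise; your care with the meaning of $\Lambda\Sigma_{\circ}R_{i\gamma}\Sigma_{\circ}^{*}$ via the anchor term $\Lambda\Sigma_{\circ}R\Sigma_{\circ}^{*}$ matches the paper's opening observation.
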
  
\begin{proof} At first notice that $\Lambda\Sigma_{\circ}R_{z}\Sigma_{\circ}^{*}$ is bounded for any $z\in\varrho(H)$ since both $\Lambda\Sigma_{\circ}R$ and $\Sigma_{\circ}R_{\bar z}$ are and $\Lambda\Sigma_{\circ}R_{z}\Sigma_{\circ}^{*}=\Lambda\Sigma_{\circ}R\Sigma_{\circ}^{*}+\Lambda\Sigma_{\circ}(R_{z}-R)\Sigma_{\circ}^{*}=\Lambda\Sigma_{\circ}R\Sigma_{\circ}^{*}+(\lambda_{\circ}-z)\Lambda\Sigma_{\circ}R(\Sigma_{\circ}R_{\bar z})^{*}$. 
Since $\Sigma_{\circ}$ is closable, $\Sigma_{\circ}^{*}\Lambda\Sigma_{\circ}$ is symmetric and, by our hypotheses, it is $H$-bounded with relative bound $a<1$; thus, by the Rellich-Kato theorem, $\widetilde H_{\Lambda}$ is self-adjoint with domain $\dom(\widetilde H_{\Lambda})=\Sob_{1}$. For any $\gamma\in\RE$ such that $\|\Sigma_{\circ}^{*}\Lambda\Sigma_{\circ}R_{i\gamma}\|_{\X,\X}<1$ and $\|\Lambda \Sigma_{\circ} R_{i\gamma}\Sigma_{\circ}^{*}\|_{\H,\H}<1$, one has
\begin{align*}
(-\widetilde H_{\Lambda}+i\gamma)^{-1}=&R_{i\gamma}(\uno-\Sigma_{\circ}^{*}\Lambda\Sigma_{\circ}R_{i\gamma})^{-1}
=R_{i\gamma}+\sum_{n=1}^{\infty}R_{i\gamma}(\Sigma_{\circ}^{*}\Lambda\Sigma_{\circ}R_{i\gamma})^{n}\\
=&
R_{i\gamma}+\sum_{n=1}^{\infty}\left((\Sigma_{\circ}R_{-i\gamma})^{*}(\Lambda\Sigma_{\circ}R_{i\gamma}\Sigma_{\circ}^{*})^{n-1}\,\Lambda\Sigma_{\circ}R_{i\gamma}\right)\\
=&
R_{i\gamma}+(\Sigma_{\circ}R_{-i\gamma})^{*}\Big(\sum_{n=1}^{\infty}(\Lambda\Sigma_{\circ}R_{i\gamma}\Sigma_{\circ}^{*})^{n-1}\Big)\Lambda\Sigma_{\circ}R_{i\gamma}\\
=&R_{i\gamma}+(\Sigma_{\circ}R_{-i\gamma})^{*}(\uno-\Lambda \Sigma_{\circ} R_{i\gamma}\Sigma_{\circ}^{*})^{-1}\Lambda \Sigma_{\circ} R_{i\gamma}\,.
\end{align*}
Then, by 
$(\uno-\Theta_{\circ}^{-1}\Sigma_{\circ} R_{z}\Sigma_{\circ}^{*})^{-1}\Theta_{\circ}^{-1}=(\Theta_{\circ}(\uno-\Theta_{\circ}^{-1}\Sigma_{\circ} R_{z}\Sigma_{\circ}^{*}))^{-1}=
(\Theta_{\circ}-\Sigma_{\circ} R_{z}\Sigma_{\circ}^{*})^{-1}$, one gets $
(-\widetilde H_{\Lambda}+z)^{-1}=R_{z}+(\Sigma_{\circ}R_{{\bar z} })^{*}(\Theta_{\circ}- \Sigma_{\circ} R_{z}\Sigma_{\circ}^{*})^{-1} \Sigma_{\circ} R_{z}$
for $z=i\gamma$, $|\gamma|$ sufficiently large. Finally, such a resolvent formula holds for any $z\in\varrho(H)\cap\varrho(\widetilde H_{\Lambda})$ by \cite[Theorem 2.19 and Remark 2.20]{CFP}.\end{proof}
\begin{remark}\label{RK} If $\X=\H$ and $\Sigma_{\circ}=\uno$, then Theorem \ref{TRK} is nothing else that the Rellich-Kato theorem for $H+\Lambda$. If $\X=\H$ and $V$ is self-adjoint, then, taking  $\Lambda=\text{sign}(V)$ and $\Sigma_{\circ}=|V|^{1/2}$, \eqref{K-RK} provides the Konno-Kuroda formula (due to Kato) for the resolvent of $H+V$.
\end{remark}
\begin{remark}\label{T-1} Since $\Theta_{\circ}- \Sigma_{\circ} R_{z}\Sigma_{\circ}^{*}=\Theta_{\circ}+\Sigma_{\circ}R\Sigma_{\circ}^{*}-\Sigma_{\circ}((\Sigma_{\circ}R)^{*}-(\Sigma_{\circ}R_{\bar z})^{*})$ and $\Theta_{\circ}+\Sigma_{\circ}R\Sigma_{\circ}^{*}$ is self-adjoint, \eqref{K-RK1} coincides with \eqref{K} whenever $\Sigma=\Sigma_{\circ}$ and $\Theta=\Theta_{\circ}+\Sigma_{\circ}R\Sigma_{\circ}^{*}$. However resolvent formula \eqref{K-RK1} is not a consequence of Theorem \ref{KFt}; indeed, by $\dom(\widetilde H_{\Lambda})=\Sob_{1}$ and by \eqref{K-RK1}, one has 
$\ran((\Sigma_{\circ}R_{\bar z})^{*})\cap\Sob_{1}\not=\{0\}$; this violates \eqref{H2}.
\end{remark}
In the following we use the notations $H_{\Theta}$ and $\widetilde H_{\Lambda}$ to indicate self-adjoint operators having resolvent given by formulae \eqref{K} and \eqref{K-RK} (or \eqref{K-RK1}) respectively, this independently of the validity of (some of) the hypotheses required in Theorems \ref{KFt} and \ref{TRK}.
\begin{theorem}\label{conv} Let $\Theta:\D(\Theta)\subseteq\X\to\X$ be self-adjoint, let $\Sigma\in\B({\Sob_1},\X)$ and suppose that formula \eqref{K} provides the resolvent of a  self-adjoint operator  $H_{\Theta}$. Further suppose that there exist  a sequence of closable operators $
\Sigma_{n}:\dom(\Sigma_{n})\subseteq\H\to\X$, $\dom(\Sigma_{n})\supseteq\Sob_{1}$, and a sequence of self-adjoint operators $\Theta_{n}:\dom(\Theta_{n})\subseteq\X\to\X$, $\dom(\Theta_{n})\supseteq\dom(\Theta)$, $0\in\varrho(\Theta_{n})$, such that $\Sigma_{n}\in\B(\Sob_{1},\H)$, $\Sigma_{n}R\,\Sigma_{n}^{*}\in\B(\H,\X)$ and $H+\Sigma_{n}^{*}\Lambda_{n}\Sigma_{n}$, $\Lambda_{n}:=\Theta_{n}^{-1}$, is self-adjoint with resolvent given by \eqref{K-RK1}. If  
\be\label{CS}
\lim_{n\uparrow\infty}\|\Sigma_{n}-\Sigma\|_{\Sob_{1},\X}=0\,,
\ee
\be\label{CT}
\lim_{n\uparrow\infty}\,\|(\Theta_{n}-\Sigma_{n}{R}\Sigma_{n}^{*})-\Theta\|_{\dom(\Theta),\X}=0\,,
\ee
and, in the case  of $\dom(\Theta_{n})\not=\dom(\Theta)$, 
there exist a complex conjugate couple $z_{\pm}\in\CO_{\pm}$ such that, for any  $\phi\in\X$,
\be\label{est}
\sup_{n\ge 1} \|(\Theta_{n}-\Sigma_{n}R_{z_{\pm}}\Sigma_{n}^{*})^{-1}\phi\|_{\X}<+\infty\,,
\ee
then
\be\label{don}
\lim_{n\uparrow\infty}\,(H+\Sigma_{n}^{*}\Lambda_{n}\Sigma_{n})=H_{\Theta}\quad \text{in norm-resolvent sense.}
\ee
\end{theorem}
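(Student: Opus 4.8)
The plan is to reduce the claim to norm convergence of the resolvents at a single spectral point $z_{0}\in\CO_{+}$ and then exploit the two explicit Kre\u\i n-type formulae \eqref{K} and \eqref{K-RK1}. Since $H$, $H_{\Theta}$ and each $H_{n}:=H+\Sigma_{n}^{*}\Lambda_{n}\Sigma_{n}$ (with $\Lambda_{n}=\Theta_{n}^{-1}$) are self-adjoint, their resolvent sets all contain $\CO\backslash\RE$; so I take $z_{0}:=z_{+}$ when $\dom(\Theta_{n})\not=\dom(\Theta)$ (so that \eqref{est} is available) and any $z_{0}\in\CO_{+}$ otherwise. At such $z_{0}$ both \eqref{K} and \eqref{K-RK1} hold (indeed $z_{0}\in\varrho(H)\cap\varrho(H_{\Theta})=Z_{\Sigma,\Theta}$), with $\Gamma(z_{0}):=\Theta+M_{z_{0}}$ and $\Gamma_{n}(z_{0}):=\Theta_{n}-\Sigma_{n}R_{z_{0}}\Sigma_{n}^{*}$ boundedly invertible; and, using $G_{z_{0}}=(\Sigma R_{{\bar z}_{0}})^{*}$ and $G^{*}_{{\bar z}_{0}}=\Sigma R_{z_{0}}$ (both bounded, as $R_{z_{0}},R_{{\bar z}_{0}}\in\B(\H,\Sob_{1})$ and $\Sigma\in\B(\Sob_{1},\X)$), subtraction of the two formulae and cancellation of the $R_{z_{0}}$ terms gives
$$
(-H_{n}+z_{0})^{-1}-(-H_{\Theta}+z_{0})^{-1}=(\Sigma_{n}R_{{\bar z}_{0}})^{*}\,\Gamma_{n}(z_{0})^{-1}\,\Sigma_{n}R_{z_{0}}-(\Sigma R_{{\bar z}_{0}})^{*}\,\Gamma(z_{0})^{-1}\,\Sigma R_{z_{0}}\,.
$$

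From \eqref{CS} one reads off the elementary limits $\Sigma_{n}R_{z_{0}}\to\Sigma R_{z_{0}}=G^{*}_{{\bar z}_{0}}$ in $\B(\H,\X)$ and, by taking adjoints, $(\Sigma_{n}R_{{\bar z}_{0}})^{*}\to(\Sigma R_{{\bar z}_{0}})^{*}=G_{z_{0}}$ in $\B(\X,\H)$ (since $\|(\Sigma_{n}-\Sigma)R_{z_{0}}\|_{\H,\X}\le\|\Sigma_{n}-\Sigma\|_{\Sob_{1},\X}\|R_{z_{0}}\|_{\H,\Sob_{1}}$, and likewise for $R_{{\bar z}_{0}}$), and in particular $\sup_{n}\|\Sigma_{n}\|_{\Sob_{1},\X}<\infty$. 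Telescoping the right-hand side of the last display as
$$
[(\Sigma_{n}R_{{\bar z}_{0}})^{*}-G_{z_{0}}]\Gamma_{n}(z_{0})^{-1}\Sigma_{n}R_{z_{0}}+G_{z_{0}}\Gamma_{n}(z_{0})^{-1}[\Sigma_{n}R_{z_{0}}-G^{*}_{{\bar z}_{0}}]+G_{z_{0}}[\Gamma_{n}(z_{0})^{-1}-\Gamma(z_{0})^{-1}]G^{*}_{{\bar z}_{0}}
$$
shows that \eqref{don} follows once I establish $\sup_{n}\|\Gamma_{n}(z_{0})^{-1}\|_{\B(\X)}<\infty$ together with $\Gamma_{n}(z_{0})^{-1}\to\Gamma(z_{0})^{-1}$ in $\B(\X)$: then each of the three summands is a product of a norm-null factor with uniformly bounded ones, hence tends to $0$ in $\B(\H)$, which is \eqref{don} at $z_{0}$; passage to an arbitrary $z$ in $\varrho(H)\cap\varrho(H_{\Theta})\cap\bigcap_{n}\varrho(H_{n})$ is then routine (first resolvent identity and analyticity, using $\CO_{\pm}\subseteq$ that common resolvent set).

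The heart of the matter is thus $\Gamma_{n}(z_{0})^{-1}$, for which I would use the decomposition, legitimate by the first resolvent identity,
$$
\Gamma_{n}(z_{0})=\bigl(\Theta_{n}-\Sigma_{n}R\Sigma_{n}^{*}\bigr)+M^{(n)}_{z_{0}}\,,\qquad M^{(n)}_{z_{0}}:=\Sigma_{n}(R-R_{z_{0}})\Sigma_{n}^{*}=(z_{0}-\lambda_{\circ})(\Sigma_{n}R)(\Sigma_{n}R_{{\bar z}_{0}})^{*}
$$
(the right-hand side bounded because $\Sigma_{n}R\in\B(\H,\X)$ and $(\Sigma_{n}R_{{\bar z}_{0}})^{*}\in\B(\X,\H)$, which also shows $\Sigma_{n}R_{z_{0}}\Sigma_{n}^{*}$ bounded). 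Comparing with $M_{z_{0}}=(z_{0}-\lambda_{\circ})G^{*}G_{z_{0}}=(z_{0}-\lambda_{\circ})(\Sigma R)(\Sigma R_{{\bar z}_{0}})^{*}$ from \eqref{Mz} and splitting the difference of the two products into $(\Sigma_{n}R-\Sigma R)(\Sigma_{n}R_{{\bar z}_{0}})^{*}+(\Sigma R)[(\Sigma_{n}R_{{\bar z}_{0}})^{*}-(\Sigma R_{{\bar z}_{0}})^{*}]$, the limits above yield $M^{(n)}_{z_{0}}\to M_{z_{0}}$ in $\B(\X)$. Since \eqref{CT} is exactly $(\Theta_{n}-\Sigma_{n}R\Sigma_{n}^{*})|_{\dom(\Theta)}\to\Theta$ in $\B(\dom(\Theta),\X)$, this gives $\delta_{n}:=\|(\Gamma_{n}(z_{0})-\Gamma(z_{0}))|_{\dom(\Theta)}\|_{\dom(\Theta),\X}\to0$. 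Because $\Gamma(z_{0})^{-1}$ maps $\X$ boundedly into $\dom(\Theta)$ with its graph norm (as $\Theta\Gamma(z_{0})^{-1}=\uno-M_{z_{0}}\Gamma(z_{0})^{-1}$ is bounded) and $\dom(\Theta)\subseteq\dom(\Theta_{n})=\dom(\Gamma_{n}(z_{0}))$, the identity $\Gamma_{n}(z_{0})^{-1}-\Gamma(z_{0})^{-1}=\Gamma_{n}(z_{0})^{-1}(\Gamma(z_{0})-\Gamma_{n}(z_{0}))\Gamma(z_{0})^{-1}$ yields $\|\Gamma_{n}(z_{0})^{-1}-\Gamma(z_{0})^{-1}\|_{\B(\X)}\le\|\Gamma_{n}(z_{0})^{-1}\|_{\B(\X)}\,\delta_{n}\,\|\Gamma(z_{0})^{-1}\|_{\X,\dom(\Theta)}$. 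What remains — and I expect this to be the real obstacle — is the uniform bound $\sup_{n}\|\Gamma_{n}(z_{0})^{-1}\|_{\B(\X)}<\infty$. When $\dom(\Theta_{n})=\dom(\Theta)$ it is free: $\Gamma_{n}(z_{0})\Gamma(z_{0})^{-1}=\uno+(\Gamma_{n}(z_{0})-\Gamma(z_{0}))\Gamma(z_{0})^{-1}$ is bounded on $\X$ and differs from $\uno$ by a term of norm $\le\delta_{n}\|\Gamma(z_{0})^{-1}\|_{\X,\dom(\Theta)}\to0$, hence invertible for large $n$, so $\Gamma_{n}(z_{0})^{-1}=\Gamma(z_{0})^{-1}(\Gamma_{n}(z_{0})\Gamma(z_{0})^{-1})^{-1}$ is uniformly bounded (the finitely many small $n$ being harmless, each $\Gamma_{n}(z_{0})^{-1}$ being already bounded via \eqref{K-RK1}). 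When $\dom(\Theta_{n})\not=\dom(\Theta)$ this closure step is supplied precisely by \eqref{est}: with $z_{0}=z_{+}$, the pointwise estimate $\sup_{n}\|\Gamma_{n}(z_{+})^{-1}\phi\|_{\X}<\infty$ for all $\phi\in\X$ upgrades, by the Banach--Steinhaus theorem, to $\sup_{n}\|\Gamma_{n}(z_{+})^{-1}\|_{\B(\X)}<\infty$. In either case $\Gamma_{n}(z_{0})^{-1}\to\Gamma(z_{0})^{-1}$ in $\B(\X)$ with uniformly bounded norms, and substituting this into the telescoped difference completes the proof.
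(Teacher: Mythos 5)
Your proposal is correct and follows essentially the same route as the paper: subtract the two Kre\u\i n-type formulae \eqref{K} and \eqref{K-RK1}, telescope the difference, reduce everything to norm convergence of $(\Theta_{n}-\Sigma_{n}R_{z_{\pm}}\Sigma_{n}^{*})^{-1}$ to $(\Theta+M_{z_{\pm}})^{-1}$ via \eqref{CS}, \eqref{CT}, the first resolvent identity, the boundedness of $(\Theta+M_{z})^{-1}$ into $\dom(\Theta)$ with graph norm, and uniform boundedness of the approximating inverses (from \eqref{est} plus Banach--Steinhaus, or derived from \eqref{CS}--\eqref{CT} when $\dom(\Theta_{n})=\dom(\Theta)$). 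The only cosmetic difference is that in the equal-domain case you obtain the uniform bound by a Neumann-series argument for $\Gamma_{n}(z_{0})\Gamma(z_{0})^{-1}=\uno+(\Gamma_{n}(z_{0})-\Gamma(z_{0}))\Gamma(z_{0})^{-1}$, whereas the paper proves the equivalent lower bound $\|(\Theta_{n}-\Sigma_{n}R_{z}\Sigma_{n}^{*})\varphi\|\gtrsim\|\varphi\|$ directly.
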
 
\begin{proof} Set $H_{n}:=H+\Sigma_{n}^{*}\Lambda_{n}\Sigma_{n}$. Given $z\in\CO\backslash\RE$, by the resolvent formulae \eqref{K} and \eqref{K-RK1} one obtains
\begin{align*}
&(-H_{n}+z)^{-1}-(H_{\Theta}+z)^{-1}=(\Sigma_{n}R_{\bar z})^{*}(\Theta_{n}-\Sigma_{n}R_{z}\Sigma_{n}^{*})^{-1}\Sigma_{n}R_{z}+G_{z}(\Sigma_{\Theta}G_{z})^{-1}G_{{\bar z} }^{*}\\
=&(\Sigma_{n}R_{\bar z})^{*}(\Theta_{n}-\Sigma_{n}R_{z}\Sigma_{n}^{*})^{-1}(\Sigma_{n}R_{z}-G^{*}_{{\bar z} })+(G_{z}-(\Sigma_{n}R_{\bar z})^{*})(\Sigma_{\Theta}G_{z})^{-1}G^{*}_{{\bar z} }\\
&+(\Sigma_{n}R_{\bar z})^{*}\big((\Theta_{n}-\Sigma_{n}R_{z}\Sigma_{n}^{*})^{-1}+(\Sigma_{\Theta}G_{z})^{-1}\big)G^{*}_{{\bar z} }\,.
\end{align*}
By the norm convergence of $(\Sigma_{n}R_{\bar z})^{*}$ and $\Sigma_{n}R_{z}$ to $G_{z}$ and $G^{*}_{{\bar z} }$ respectively, the thesis amounts to show that  
\be\label{sufff}
\lim_{n\uparrow\infty}\|(\Theta_{n}-\Sigma_{n}R_{z_{\pm} } \Sigma_{n}^{*})^{-1}+(\Sigma_{\Theta}G_{z_{\pm} } )^{-1}\|_{\H, \H}=0\,.
\ee
By hypotheses \eqref{CS}, \eqref{CT} and by the relation
\begin{align*}
&(\Theta_{n}-\Sigma_{n}R_{z}\Sigma_{n}^{*})+\Sigma_{\Theta}G_{z}\\
=&\Theta_{n}-\Sigma_{n}{R}\Sigma^{*}_{n}-\Theta+\Sigma_{n}({R}-R_{z})\Sigma^{*}_{n}
+\Sigma(G-G_{z}))\\
=&\Theta_{n}-\Sigma_{n}{R}\Sigma^{*}_{n}-\Theta+(z-\lambda_{\circ} )(\Sigma_{n}{R}(\Sigma_{n}R_{{\bar z} })^{*}-G^{*}G_{z})\,,
\end{align*}
one gets
\be\label{lTn}
\lim_{n\uparrow\infty}\,\|(\Theta_{n}-\Sigma_{n}R_{z}\Sigma_{n}^{*})+\Sigma_{\Theta}G_{z}\|_{\dom(\Theta),\X}=0\,.
\ee
Thus, by
\begin{align*}
&(\Theta_{n}-\Sigma_{n}R_{z_{\pm} }\Sigma_{n}^{*})^{-1}+(\Sigma_{\Theta}G_{z_{\pm} })^{-1}\\
=&(\Theta_{n}-\Sigma_{n}R_{z_{\pm}  }\Sigma_{n}^{*})^{-1}\big((\Theta_{n}-\Sigma_{n}R_{z_{\pm} }\Sigma_{n}^{*})+\Sigma_{\Theta}G_{z_{\pm} }\big)(\Sigma_{\Theta}G_{z_{\pm} })^{-1}\,,
\end{align*}
by the estimate
\begin{align*}
&\|(\Sigma_{\Theta}G_{z })^{-1}\|_{\H, \dom(\Theta)}=
\|(\Theta+M_{z})^{-1}\|_{\H, \dom(\Theta)}\\
\le&\|\Theta({\Theta}+M_{z})^{-1}\|_{\H, \H}+\|({\Theta}+M_{z})^{-1}\|_{\H, \H}\\
\le&\|\uno-M_{z}({\Theta}+M_{z})^{-1}\|_{\H, \H}+\|({\Theta}+M_{z})^{-1}\|_{\H, \H}<+\infty\,,
\end{align*}
and by \eqref{est} (together with uniform boundedness principle), \eqref{sufff} follows. \par 
The proof is concluded by showing that  if $\dom(\Theta_{n})=\dom(\Theta)$ then the hypothesis \eqref{est} is consequence of \eqref{CS} and \eqref{CT}. By \eqref{lTn}
and
$$
\|\Sigma_{\Theta}G_{z}\varphi\|_{\X}\ge
\|(\Sigma_{\Theta}G_{z})^{-1}\|^{-1}_{\X, \X}\|\varphi\|_{\X}\,,\qquad \varphi\in\dom(\Theta)\,,
$$
there exists $N>0$ such that, for any $n>N$ and for any  $\varphi\in\dom(\Theta)$, 
\begin{align*}
\|(\Theta_{n}-\Sigma_{n}R_{z}\Sigma_{n}^{*})\varphi\|_{\X}&\ge\|\Sigma_{\Theta}G_{z}\varphi\|_{\X}-\|(\Theta_{n}-\Sigma_{n}R_{z}\Sigma_{n}^{*})\varphi+\Sigma_{\Theta}G_{z}\varphi\|_{\X}\\
&\ge \frac12\,\|(\Sigma_{\Theta}G_{z})^{-1}\|^{-1}_{\X, \X}\|\varphi\|_{\X}\,.
\end{align*}
Therefore, choosing $\varphi=(\Theta_{n}-\Sigma_{n}R_{z}\Sigma_{n}^{*})^{-1}\phi\in\dom(\Theta_{n})=\dom(\Theta)$, 
$$
\|(\Theta_{n}-\Sigma_{n}R_{z}\Sigma_{n}^{*})^{-1}\|_{\X, \X}\le 2\,\|(\Sigma_{\Theta}G_{z})^{-1}\|_{\X, \X}\,.
$$
\end{proof}
\begin{remark}\label{Grisem} If in Theorem \ref{conv} one takes $\Theta_{n}=g_{n}^{-1}$, $g_{n}\in\RE\backslash\{0\}$ such that hypotheses there hold for some self-adjoint $\Theta$, then 
$$\lim_{n\uparrow\infty}\,(H+g_{n}\Sigma_{n}^{*}\Sigma_{n})=H_{\Theta}\quad \text{in norm-resolvent sense.}
$$
In the case $\H$ is the Fock space and $\Sigma$ is the annihilation operator (as in the next section), this (and the obvious similar version where norm-resolvent convergence is replaced by strong-resolvent convergence) is our version of \cite[Theorem 4.2]{GL}. It shows how the results provided in Subsection \ref{SP} can be used to define self-adjoint Hamiltonians describing a Fermi polaron model (see also the remark following \cite[Corollary 4.3]{GL}) and, more generally, self-adjoint operators preserving the particles number.   
\end{remark}
\end{subsection}
\end{section}

\begin{section}{Self-adjointness of $H+A^{*}+A$.}

We start by applying the results in the previous section to the case $$\X=\H\,,\qquad\Sigma=A:{\Sob_1} \to\H\,,\qquad\Theta={-T}
:\dom({T}
)\subseteq\H\to\H\,,
$$ 
where $A\in\B(\Sob_{1},\H)$ and $T$ is self-adjoint. We suppose that hypotheses \eqref{H1} and \eqref{H2} hold and so, by Theorem \ref{rem}, one gets a self-adjoint extension $H_{{T}
}$ of the symmetric operator $S=H|\ker(A)$. Here $A$ plays the role of an (abstract) annihilation operator; the change in notation is motivated by the fact that in this section we apply the previous results twice: at first with $\Sigma=A$ and then with $\Sigma$ equal to a suitable left inverse of $((-H+\bar z)^{-1}A)^{*}$.\par
  Using here the notations 
$$A_{0}\equiv\Sigma_{0}\,,\qquad A_{*}\equiv\Sigma_{*}\,,$$ 
one has (see \eqref{bt1} and \eqref{bt2}), whenever $\psi=\psi_{0}+G\phi$,
$$
A_{0} : \DS\subseteq\H\to\H\,,\quad  A_{0} \psi:=  A\psi_{0} \,,\qquad
$$
$$
A_{*}: \DS\subseteq\H\to\H\,,\quad A_{*}\psi:=\phi\,.
$$
Defining then
$$
A_{{T}
}:\D(A_{{T}
})\subseteq\H\to\H\,,\qquad A_{{T}
}:=A_{0}+{T} A_{*}\,,
$$
$$
\D(A_{{T}
}):=\{\psi\in  \DS:A_{*}\psi\in\dom({T}
)\}\,,
$$
by Theorem \ref{rem}, 
$$H_{{T}}:=S^{\times}|\ker(A_{{T}})
$$
is self-adjoint,
\be\label{RT}
(-H_{{T}
}+z)^{-1}=(-H+z)^{-1}-G_{z}(A_{{T}
} G_{z})^{-1}G_{{\bar z} }^{*}\,,\quad 
z\in \varrho(H)\cap \varrho(H_{{T}
})
\ee
and
\be\label{HT}
H_{{T}}\psi=\overline H\psi+A^{*}A_{*}\psi\,,
\ee
where $A^{*}:\H\to\Sob_{-1}$ is defined as in \eqref{menouno}.  
\par
The operator in \eqref{HT} seems to be different from what we are looking for, i.e.,  an operator of the kind $\overline H+A^{*}+A$. However, the difference is not so big: 
by the definition of $A_{{T}
}$ and by Green's formula \eqref{Green}, for any $\psi,\varphi\in \D(A_{{T}
})\subseteq\D(S^{\times})$ one has (here ${T}
$ symmetric would suffice)
\begin{align}\label{GQ}
&\langle A_{{T}
} \psi,A_{*}\varphi\rangle -\langle A_{*}\psi,A_{{T}
}\varphi\rangle\nonumber\\
=&
\langle A_{0} \psi,A_{*}\varphi\rangle -\langle A_{*}\psi,A_{0} \varphi\rangle+\langle {T}
 A_{*}\psi,A_{*}\varphi\rangle -\langle A_{*}\psi,{T}
 A_{*}\varphi\rangle\\
=&
\langle  \psi,S^{\times} \varphi\rangle-\langle S^{\times} \psi, \varphi\rangle\,.\nonumber
\end{align}
This gives the following
\begin{lemma}\label{2.1} The linear operator ${S^{\times}_{{T}
}}:\dom({S^{\times}_{{T}
}})\subseteq\H\to\H$,  ${\Sob_1}\cap\D({S^{\times}_{{T}
}})=\{0\}$, defined by 
$$
\D(S^{\times}_{{T}
}):=\{\psi\in \D(A_{{T}
}) :A_{*}\psi=\psi\}=\{\psi\in \dom({T}
):\psi-G\psi\in{\Sob_1}\}\,,
$$
\be\label{ST}
S^{\times}_{{T}
}\psi:=S^{\times}\psi+A_{{T}
}\psi\equiv
\overline H\psi+A^{*}\psi+A_{{T}
}\psi
\ee
is symmetric. 
\end{lemma}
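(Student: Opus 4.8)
The plan is to deduce everything from the abstract Green's identity of Lemma \ref{BT}, in the already rearranged form \eqref{GQ}; the bulk of the analytic work (the pairing between $\Sob_{-1}$ and $\Sob_1$, the additive representation of $S^\times$, the validity of \eqref{Green}) is packaged in Lemmata \ref{adj}, \ref{BT}, \ref{add}, so the argument is essentially bookkeeping.

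First I would check that $S^\times_T$ is well posed, i.e.\ that the two descriptions of $\D(S^\times_T)$ in the statement coincide and that $S^\times_T\psi\in\H$. For $\psi\in\DS$, hypothesis \eqref{H2} guarantees, by the Remark following Theorem \ref{KFt}, that the vector $\phi$ with $\psi-G\phi\in\Sob_1$ is unique, and by definition $A_*\psi=\phi$; hence imposing $A_*\psi=\psi$ together with $A_*\psi\in\dom(T)$ is exactly the requirement that $\psi\in\dom(T)$ and $\psi-G\psi\in\Sob_1$, which proves the claimed equivalence. On such $\psi$ one has $\psi_0:=\psi-G\psi\in\Sob_1$, so that $S^\times\psi=\lambda_{\circ}\psi-(-H+\lambda_{\circ})\psi_0\in\H$ and $A_T\psi=A_0\psi+TA_*\psi=A\psi_0+T\psi\in\H$; moreover Lemma \ref{add} (with $\Sigma=A$, $\Sigma^{*}=A^{*}$, and $\Sigma_{*}\psi=A_*\psi=\psi$) identifies $S^\times\psi$ with $\overline H\psi+A^{*}\psi$, so that $S^\times_T\psi=S^\times\psi+A_T\psi=\overline H\psi+A^{*}\psi+A_T\psi$, as asserted in \eqref{ST}. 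I would also record the transversality $\Sob_1\cap\D(S^\times_T)=\{0\}$: if $\psi$ lies in both, then $\psi\in\Sob_1$ and $\psi-G\psi\in\Sob_1$, whence $G\psi\in\ran(G)\cap\Sob_1=\{0\}$ by \eqref{H2}, and since $\ker(G)=\{0\}$ as well, $\psi=0$.

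The core of the proof is then a one-line computation. Let $\psi,\varphi\in\D(S^\times_T)$; since $\D(S^\times_T)\subseteq\D(A_T)$, identity \eqref{GQ} applies (only the symmetry of $T$ is used there) and gives
\[
\langle A_T\psi,A_*\varphi\rangle-\langle A_*\psi,A_T\varphi\rangle=\langle\psi,S^\times\varphi\rangle-\langle S^\times\psi,\varphi\rangle\,.
\]
Using the defining constraints $A_*\psi=\psi$ and $A_*\varphi=\varphi$ and transferring the $S^\times$ terms to the other side, this becomes
\[
\langle S^\times\psi+A_T\psi,\varphi\rangle=\langle\psi,S^\times\varphi+A_T\varphi\rangle\,,
\]
that is $\langle S^\times_T\psi,\varphi\rangle=\langle\psi,S^\times_T\varphi\rangle$, which is the desired symmetry.

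There is, frankly, no substantial obstacle here: the only points needing a little care are the uniqueness of the decomposition $\psi=\psi_0+G\phi$ (which makes $A_*$, hence $\D(S^\times_T)$, well defined) and the bookkeeping that converts the boundary form $\langle A_T\psi,A_*\varphi\rangle-\langle A_*\psi,A_T\varphi\rangle$ into the defect form $\langle\psi,S^\times_T\varphi\rangle-\langle S^\times_T\psi,\varphi\rangle$ once the diagonal condition $A_*\psi=\psi$ is imposed. All of this follows at once from the results of Subsection \ref{SP} together with \eqref{GQ}.
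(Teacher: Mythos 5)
Your proposal is correct and follows essentially the same route as the paper: symmetry is obtained by specializing the identity \eqref{GQ} (itself a consequence of Green's formula and the symmetry of $T$) to vectors with $A_{*}\psi=\psi$, and the transversality ${\Sob_1}\cap\D(S^{\times}_{T})=\{0\}$ follows from hypothesis \eqref{H2} exactly as in the paper. The extra bookkeeping you include (uniqueness of the decomposition $\psi=\psi_{0}+G\phi$, equivalence of the two domain descriptions, and the identification $S^{\times}\psi=\overline H\psi+A^{*}\psi$ via Lemma \ref{add}) is sound and merely makes explicit what the paper leaves implicit.
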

\begin{proof} By \eqref{GQ}, for any $ \psi,\varphi\in\dom({S^{\times}_{{T}
}})$ 
one has 
$$
\langle (S^{\times}+A_{{T}
}) \psi, \varphi\rangle  =\langle  \psi,(S^{\times}+A_{{T}
}) \varphi\rangle\,,
$$
i.e., ${S^{\times}_{{T}
}}$ is symmetric. Moreover
$$
{\Sob_1}\cap\D({S^{\times}_{{T}
}})=\{\psi\in{\Sob_1}\cap\D({T}
):G\psi\in{\Sob_1}\}=\{0\}\,.
$$
\end{proof}
Since
$$
\D(H_{{T}
} )\cap\D({S^{\times}_{{T}
}})=\{\psi\in \D(H_{{T}
}):A_{*}\psi=\psi\}=
\{\psi\in \ker(A_{{T}
}):A_{*}\psi=\psi\}\,,\quad 
$$
by \eqref{HT} and \eqref{ST}, one has 
$$
{S^{\times}_{{T}
}}|\D(H_{{T}
} )\cap\D({S^{\times}_{{T}
}})=H_{{T}
}|\D(H_{{T}
} )\cap\D({S^{\times}_{{T}
}})\,,
$$
i.e.,  ${S^{\times}_{{T}
}}$ extends a restriction of a self-adjoint operator:
$${S^{\times}_{{T}
}}\supseteq\widehat  S:=H_{{T}
}|\ker(\T)\cap \D(H_{{T}
})\,,$$ where
$$
\T: \DS\to \H\,,\quad \T:=\uno-A_{*}\,.
$$
Therefore we can try to apply the formalism recalled in Subsection \ref{SP} 
to the case $H=H_{T}$  and $\Sigma=\widehat\Sigma|\dom(H_{T})$ in order to build self-adjoint extensions of $\widehat S$. If for some of such self-adjoint extensions $\widehat H$ one has $\widehat H\subseteq S^{\times}_{{T}
}$, then, since  $S^{\times}_{{T}}$ is symmetric by Lemma \ref{2.1}, $\widehat H=S^{\times}_{{T}}$ and so $S^{\times}_{{T}
}$ itself is self-adjoint. To apply such a strategy, we need to check the validity of hypotheses in Theorem \ref{KFt}. \par Since $\ker(A_{*})={\Sob_1}=\ran(R_{z})$ and $A_{*}$ is a left inverse of $G_{z}$ (see Remark \ref{LI}), 
for any $z\in Z_{\Sigma,{-T}
}$, one has 
\begin{align}\label{GB}
\T(-H_{{T}}+z)^{-1}=&(-H_{{T}}+z)^{-1}-A_{*}((-H+z)^{-1}-G_{z}(A_{{T}} G_{z})^{-1}G_{{\bar z} }^{*})\nonumber\\
=&(-H_{{T}}+z)^{-1}+(A_{{T}} G_{z})^{-1}G_{{\bar z} }^{*}
\,.
\end{align}
Thus $\T:\D(H_{{T}})\to\H$ is bounded w.r.t. the graph norm in $\D(H_{{T}})$ and, for any $z\in \varrho(H_{{T}})$ one can define the bounded operator 
$$
\widehat G_{z}:\H\to\H\,,\quad \widehat G_{z}:=\big(\T(-H_{{T}
}+{\bar z} )^{-1}\big)^{*}\,.
$$
By \eqref{GB}, for any $z\in Z_{\Sigma,{-T}
}$, one has
\be\label{wG}
\widehat G_{z}
=(-H_{{T}
}+z)^{-1}+G_{z}(A_{{T}
} G_{z})^{-1}
=(-H+z)^{-1}+G_{z}(A_{{T}
} G_{z})^{-1}(\uno-G^{*}_{{\bar z} })\,.
\ee
This shows that $$\ran(\widehat G_{z})\subseteq  \DS
$$
and $\T\widehat G_{z}$ is a well defined operator in $\B(\H)$:
\begin{align}\label{TG}
\T\widehat G_{z}=&\T(-H_{{T}}+z)^{-1}+\T G_{z}(A_{{T}} G_{z})^{-1}\nonumber\\
=&
(-H_{{T}}+z)^{-1}+(A_{{T}} G_{z})^{-1}G_{{\bar z} }^{*}+G_{z}(A_{{T}} G_{z})^{-1}-(A_{{T}} G_{z})^{-1}
\nonumber\\
=&(-H+z)^{-1}-(\uno-G_{z})(A_{{T}} G_{z})^{-1}(\uno-G^{*}_{{\bar z} })\,.
\end{align}
Regarding the validity of hypothesis  \eqref{H2}, one has the following: 
\begin{lemma}\label{cap} For any $z\in \varrho(H)\cap\varrho(H_{{T}
})$, one has
$$\ker(\widehat G_{z})=\{0\}=\ran(\widehat G_{z})\cap\D(H_{{T}
})\,.
$$
\end{lemma}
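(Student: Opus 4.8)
The plan is to verify the two conditions in \eqref{H2} for the operator $\widehat G_{z}$, using the explicit formula \eqref{wG} for $\widehat G_{z}$ together with the already-established mapping properties. Since $\widehat G_z = (\widehat\Sigma (-H_{T}+\bar z)^{-1})^{*}$ by definition and $\widehat\Sigma = \uno - A_{*}$, the kernel of $\widehat G_z$ equals the orthogonal complement of $\ran(\widehat\Sigma(-H_{T}+\bar z)^{-1})$, so the first claim $\ker(\widehat G_z)=\{0\}$ is equivalent to density of $\ran(\widehat\Sigma(-H_{T}+\bar z)^{-1}) = \ran(\widehat\Sigma|\dom(H_T))$ in $\H$ (here I would invoke point 1) of Lemma \ref{suff} applied in the present setting with $H$ replaced by $H_T$). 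Thus I first reduce $\ker(\widehat G_z)=\{0\}$ to showing $\ran(\uno - A_{*})$ is dense on $\dom(H_T)$; then I would argue that if $\varphi \perp \ran(\uno-A_{*})$, testing against elements of $\ker(A_{*})\cap\dom(H_T)$ — which contains at least $\ker(A)= \dom(S)$, assumed dense — forces $\varphi = 0$, or alternatively exploit that $A_{*}$ is a left inverse of $G_z$ (Remark \ref{LI}) so that $(\uno-A_{*})G_z\phi = G_z\phi - \phi$ ranges over a set whose closure, together with $\ran R_z = \Sob_1$, is all of $\H$.

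Next I would handle $\ran(\widehat G_z)\cap\dom(H_T)=\{0\}$. From \eqref{wG} we know $\ran(\widehat G_z)\subseteq \dom(S^{\times})$, and the natural decomposition is $\widehat G_z \psi = (-H_T + z)^{-1}\psi + G_z (A_{T}G_z)^{-1}\psi$, i.e., the "$\psi_0$-part" of $\widehat G_z\psi$ lies in $\dom(H_T)$ while the "$G_z\phi$-part" has $\phi = (A_{T}G_z)^{-1}\psi$. So suppose $\widehat G_z\psi \in \dom(H_T)$. I would compare the two representations of $\widehat G_z\psi$ as an element of $\dom(S^{\times})$: on one hand it is manifestly in $\dom(S^\times)$ with $\Sigma_{*}$-component $(A_{T}G_z)^{-1}\psi$; on the other, if it lies in $\dom(H_T) \subseteq \Sob_1 + \ran G$ with its own decomposition, uniqueness of the decomposition (which holds by \eqref{H2} for $G$, see the Remark after Theorem \ref{KFt}) pins things down. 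The cleanest route: apply $\widehat\Sigma = \uno - A_{*}$ to $\widehat G_z\psi$. Using $A_{*}(-H_T+z)^{-1}$ — which by Remark \ref{LI} (left inverse of $G_z$, and $\ker A_{*} = \Sob_1$) applied via \eqref{RT} gives $A_*(-H_T+z)^{-1} = (A_T G_z)^{-1}G_{\bar z}^{*}$ — I compute $\widehat\Sigma \widehat G_z = \T\widehat G_z$, which \eqref{TG} shows equals $(-H+z)^{-1} - (\uno - G_z)(A_T G_z)^{-1}(\uno - G_{\bar z}^{*})$. If $\widehat G_z\psi \in \dom(H_T) = \ker(A_T)$... wait, more simply: $\widehat\Sigma(-H_T+z)^{-1}\eta = 0$ for $\eta \in \H$ would be needed, but $\widehat G_z \psi \in \dom(H_T)$ does not directly say $\widehat G_z\psi \in \ran((-H_T+z)^{-1})$ restricted anywhere — rather it IS in $\dom(H_T)$, so $(-H_T + z)\widehat G_z\psi$ makes sense, and I'd compute it.

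Concretely, the key computation I expect to carry out: write $\chi := \widehat G_z\psi$ and suppose $\chi\in\dom(H_T)$. From \eqref{wG}, $\chi - (-H_T+z)^{-1}\psi = G_z\phi$ with $\phi = (A_{T}G_z)^{-1}\psi$; since $(-H_T+z)^{-1}\psi\in\dom(H_T)$ and $\chi\in\dom(H_T)$, we get $G_z\phi \in \dom(H_T) \subseteq S^{\times}$'s domain. Now $\dom(H_T) = \ker(A_T)$ and the $\Sigma_{*}$-component of $G_z\phi$ is $\phi$ (Remark \ref{LI}), its $\Sigma_0$-component is $A$ applied to the $\Sob_1$-part of $G_z\phi$, which is $(G_z - G)\phi \in \Sob_1$, giving $\Sigma_0(G_z\phi) = A(G_z-G)\phi = -M_z\phi$ by \eqref{WF}; so $A_T(G_z\phi) = -M_z\phi + T\phi$... but wait $\Theta = -T$ here, so $A_T G_z = \Sigma_0 - \Theta\Sigma_* $ composed appropriately; indeed the computation at the end of the proof of Theorem \ref{rem} gives $A_T G_z\phi = -(\Theta + M_z)\phi = (T - M_z)\phi$. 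For $G_z\phi \in \ker(A_T)$ we need $(T-M_z)\phi = 0$... hmm, but also $G_z\phi \in\dom(H_T)=\ker(A_T)$ requires moreover $\phi\in\dom(T)$. Actually I realize $G_z\phi\in\dom(H_T)$ is a strong constraint; combined with $\ker(G)=\{0\}$ (from \eqref{H2}) it should force $\phi$ into a subspace where $(A_T G_z)^{-1}$ cannot land unless $\psi = 0$, because $\psi = A_T G_z \phi$ and $G_z\phi\in\ker(A_T)$ means $\psi = A_T G_z\phi = 0$ directly — then $\chi = (-H_T+z)^{-1}\cdot 0 + G_z\cdot 0 = 0$. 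That is the heart of it: $\chi\in\dom(H_T)$ makes $G_z\phi = \chi - (-H_T+z)^{-1}\psi\in\dom(H_T)=\ker(A_T)$, so $\psi = A_T(G_z\phi) = 0$ (using that $(A_TG_z)$ is a bijection and $\phi = (A_TG_z)^{-1}\psi$), hence $\chi = 0$. The one subtlety to nail down is that $(-H_T+z)^{-1}\psi\in\dom(H_T)$ and $\chi\in\dom(H_T)$ genuinely force $G_z\phi\in\dom(H_T)$ as a domain-of-$H_T$ statement and not merely a $\dom(S^\times)$ statement — i.e. that $\dom(H_T)$ is a linear subspace closed under this difference — which is immediate since $\dom(H_T)$ is a linear subspace. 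The main obstacle is therefore bookkeeping: keeping straight the two layers of Kreĭn formula (for $H$ and for $H_T$) and making sure the identity $\psi = A_T(G_z\phi)$ with $G_z\phi\in\ker(A_T)$ is legitimate, i.e. that $\phi = (A_T G_z)^{-1}\psi$ really is in $\dom(A_T G_z)$ and that $A_T$ applied to $G_z\phi$ via the two descriptions agrees. I would close by noting $\ker(\widehat G_z)=\{0\}$ similarly: $\widehat G_z\psi = 0$ with the decomposition above gives $(-H_T+z)^{-1}\psi = -G_z\phi\in\Sob_1$-part-plus-nothing... more precisely $\ran(-H_T+z)^{-1} = \dom(H_T)$, so $\widehat G_z\psi=0$ lands in $\dom(H_T)$ trivially and the same argument yields $\psi=0$; in fact $\ker(\widehat G_z)=\{0\}$ is just the special case $\chi = 0 \in\dom(H_T)$ of the second claim.
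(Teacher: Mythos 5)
Your final argument coincides with the paper's proof: the paper simply observes that $A_{T}(-H_{T}+z)^{-1}=0$ and $A_{T}G_{z}(A_{T}G_{z})^{-1}=\uno$ give $A_{T}\widehat G_{z}=\uno$ via \eqref{wG}, so $\widehat G_{z}\psi\in\dom(H_{T})=\ker(A_{T})$ (in particular $\widehat G_{z}\psi=0$) forces $\psi=A_{T}\widehat G_{z}\psi=0$, which is exactly your computation with the term $(-H_{T}+z)^{-1}\psi$ moved to the other side, and the legitimacy of applying $A_{T}$ is automatic since $(A_{T}G_{z})^{-1}=-(-T+M_{z})^{-1}$ maps into $\dom(T)$. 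The opening digression about density of $\ran(\uno-A_{*})$ is unnecessary and can be dropped.
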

\begin{proof} At first notice that, since $A_{{T}
}(-H_{{T}
}+z)^{-1}=0$, $ A_{{T}
}\widehat G_{z}=\uno$ by \eqref{wG}. Hence $\widehat G_{z}\phi=0$ implies $0=A_{{T}
}\widehat G_{z}\phi=\phi$. Now suppose that $\widehat G_{z}\phi\in\dom(H_{{T}
})=\ker(A_{{T}
})$. Then $0=A_{{T}
}\widehat G_{z}\phi=\phi$ and so $\widehat G_{z}\phi=0$.
\end{proof}
Now, let us suppose that $\RE\cap  \varrho(H)\cap \varrho(H_{{T}
})$ is not empty (this hypothesis is not necessary, it is used in order to simplify the exposition), pick ${\widehat\lambda_{\circ} }$ there and set $$\widehat G:=\widehat G_{{\widehat\lambda_{\circ} }}\,.$$ 
By Remark \ref{obv}, one can take $\widehat\lambda_{\circ}=\lambda_{\circ}$ whenever $0\in\varrho(T)$.
\par
Define, as in Lemma \ref{adj}, $\widehat S^{\times}:\WD\subseteq\H\to\H$ by
\begin{align*}
\WD :=&\{\psi\in\H:\exists\, \phi\in\H\ \text{\rm such that}\ \widehat\psi_{0} :=\psi-\widehat G \phi\in\D(H_{{T}
} )\}\,,
\end{align*}
$$
(-\widehat S^{\times}+{\widehat\lambda_{\circ} }\, )\psi:=(-H_{{T}
} +{\widehat\lambda_{\circ} }\, )\widehat\psi_{0} \,.
$$
Then
\begin{lemma}\label{ss} One has $\WD\subseteq\DS$ and
$$
\widehat S^{\times}| \WD\cap\ker(\T)\subseteq{S^{\times}_{{T}
}}\,.
$$
\end{lemma}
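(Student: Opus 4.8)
The plan is to prove the two assertions separately, handling the operator inclusion $\widehat S^{\times}|\WD\cap\ker(\T)\subseteq S^{\times}_{T}$ in two stages — first membership in $\D(S^{\times}_{T})$, then equality of the values. For $\WD\subseteq\DS$: given $\psi\in\WD$, write $\psi=\widehat\psi_{0}+\widehat G\phi$ with $\widehat\psi_{0}\in\D(H_{T})$. By Lemma \ref{adj} (with $\Sigma=A$, $\Theta=-T$) one has $H_{T}\subseteq S^{\times}$, so $\widehat\psi_{0}\in\DS$; and $\widehat G\phi\in\DS$ by \eqref{wG}, since $(-H_{T}+\widehat\lambda_{\circ})^{-1}$ maps into $\D(H_{T})\subseteq\DS$ while $\ran(G_{z})\subseteq\DS$. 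As $\DS$ is a subspace, $\psi\in\DS$; this also gives $\WD\cap\ker(\T)$ its meaning.

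Next, for $\psi\in\WD\cap\ker(\T)$ I would show $\psi\in\D(S^{\times}_{T})$. Here $A_{*}\psi=\psi$; write again $\psi=\widehat\psi_{0}+\widehat G\phi$. From the proof of Theorem \ref{rem} (with $\Theta=-T$) one has $A_{T}G_{z}\phi=(T-M_{z})\phi$ for $\phi\in\dom(T)$, so $(A_{T}G_{z})^{-1}\in\B(\H)$ has range $\dom(T)$; combined with the facts that $A_{*}$ is a left inverse of $G_{z}$ (Remark \ref{LI}) and that $A_{*}$ maps $\D(H_{T})=\ker(A_{T})\subseteq\D(A_{T})$ into $\dom(T)$, the representation $\widehat G=(-H_{T}+\widehat\lambda_{\circ})^{-1}+G_{\widehat\lambda_{\circ}}(A_{T}G_{\widehat\lambda_{\circ}})^{-1}$ of \eqref{wG} yields $A_{*}\widehat G\phi\in\dom(T)$ and $A_{*}\widehat\psi_{0}\in\dom(T)$, whence $\psi=A_{*}\psi=A_{*}\widehat\psi_{0}+A_{*}\widehat G\phi\in\dom(T)$. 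Since moreover $\psi-G\psi=\psi-GA_{*}\psi\in\Sob_{1}$ by the decomposition of $\psi$, the description of $\D(S^{\times}_{T})$ in Lemma \ref{2.1} gives $\psi\in\D(S^{\times}_{T})$.

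It then remains to check $(-\widehat S^{\times}+\widehat\lambda_{\circ})\psi=(-S^{\times}_{T}+\widehat\lambda_{\circ})\psi$. I would expand the left side as $(-H_{T}+\widehat\lambda_{\circ})\widehat\psi_{0}$ using $H_{T}=\overline H+A^{*}A_{*}$ on $\D(H_{T})$ (eq. \eqref{HT}), and the right side using \eqref{ST} together with $S^{\times}=\overline H+A^{*}A_{*}$ on $\DS$ (Lemma \ref{add}); inserting $\widehat\psi_{0}=\psi-\widehat G\phi$ and $A_{*}\widehat\psi_{0}=\psi-A_{*}\widehat G\phi$ and applying Lemma \ref{add} once more to $\widehat G\phi$, the difference of the two sides should collapse to $A_{T}\psi-(-S^{\times}+\widehat\lambda_{\circ})\widehat G\phi$. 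Now $A_{T}\psi=A_{T}\widehat\psi_{0}+A_{T}\widehat G\phi=\phi$, since $A_{T}(-H_{T}+\widehat\lambda_{\circ})^{-1}=0$ forces $A_{T}\widehat G=\uno$ (cf. the proof of Lemma \ref{cap}) and $\widehat\psi_{0}\in\ker(A_{T})$; and $(-S^{\times}+\widehat\lambda_{\circ})\widehat G\phi=\phi$ too, which I would get by writing $-S^{\times}+\widehat\lambda_{\circ}=-\overline H-A^{*}A_{*}+\widehat\lambda_{\circ}$ on $\DS$ and applying it to the two summands of \eqref{wG}: on $(-H_{T}+\widehat\lambda_{\circ})^{-1}\phi$ it returns $(-H_{T}+\widehat\lambda_{\circ})(-H_{T}+\widehat\lambda_{\circ})^{-1}\phi=\phi$ by \eqref{HT}, while on $G_{\widehat\lambda_{\circ}}(A_{T}G_{\widehat\lambda_{\circ}})^{-1}\phi$ it returns $0$, because $(-\overline H+\widehat\lambda_{\circ})G_{\widehat\lambda_{\circ}}=A^{*}$ (since $G_{z}=R_{z}A^{*}$ with $R_{z}=(-\overline H+z)^{-1}$, Remark \ref{closed}) and $A^{*}A_{*}G_{\widehat\lambda_{\circ}}=A^{*}$ by the left-inverse property. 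Hence the difference is $\phi-\phi=0$ and $\widehat S^{\times}\psi=S^{\times}_{T}\psi$.

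The main obstacle is this last step: it requires careful bookkeeping within the scale $\Sob_{-1}$, so that expressions such as $\overline H\widehat G\phi$ and $A^{*}A_{*}\widehat G\phi$ are legitimate even though each leaves $\H$, while the relevant combinations remain $\H$-valued; and its essence is the identity $(-S^{\times}+\widehat\lambda_{\circ})\widehat G=\uno$ — the same operator that $A_{T}\widehat G=\uno$ reproduces — which is exactly what makes the ``intrinsic'' realization $\widehat S^{\times}$ and the ``additive'' one $S^{\times}_{T}$ coincide on $\WD\cap\ker(\T)$.
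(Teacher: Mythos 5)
Your proof is correct and follows essentially the same route as the paper's: the decomposition $\psi=\widehat\psi_{0}+\widehat G\phi$, the inclusion $\WD\subseteq\{\psi\in\DS:A_{*}\psi\in\dom(T)\}$ via \eqref{wG} and $\ran((A_{T}G_{\widehat\lambda_{\circ}})^{-1})=\dom(T)$, and the two key identities $A_{T}\widehat G=\uno$ and $(-S^{\times}+\widehat\lambda_{\circ}\,)\widehat G=\uno$. The only (cosmetic) difference is that you verify the latter identity through the additive representations of Lemma \ref{add} and \eqref{HT} in the scale $\Sob_{-1}$, whereas the paper obtains it directly from $(-S^{\times}+\widehat\lambda_{\circ}\,)(-H+\widehat\lambda_{\circ}\,)^{-1}=\uno$ and $\ran(G_{\widehat\lambda_{\circ}})\subseteq\ker(-S^{\times}+\widehat\lambda_{\circ}\,)$.
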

\begin{proof} At first notice that, for any $\psi\in\WD$ decomposed as $\psi=\widehat\psi_{0}+\widehat G\phi$, where $\widehat\psi_{0}\in\D(H_{{T}
})$ and $\phi\in\H$, one has, since $\dom(H_{{T}
})=\ker(A_{{T}
})$ and $A_{{T}
}\widehat G=\uno$ (see the proof of Lemma \ref{cap}), 
\be\label{AT}
A_{{T}
}\psi=A_{{T}
}\widehat\psi_{0}+A_{{T}
}\widehat G\phi=\phi\,.
\ee
Since, by \eqref{wG},
$$
\psi=\widehat\psi_{0} +\widehat G \phi=\widehat\psi_{0}+(-H+{\widehat\lambda_{\circ} })^{-1}\phi+
G_{{\widehat\lambda_{\circ} }}(A_{{T}
} G_{{\widehat\lambda_{\circ} }})^{-1}(\uno-G^{*}_{{\widehat\lambda_{\circ} }})\phi
$$
and since $\ran((A_{{T}
} G_{{\widehat\lambda_{\circ} }})^{-1})=\D({T}
)$, one gets
$$
 \WD\subseteq\{\psi\in \DS:A_{*}\psi\in\dom({T}
)\}\subseteq\DS\,.
$$
By $H_{{T}
}\subseteq S^{\times}$, by $(-S^{\times} +{\widehat\lambda_{\circ} }\, )(-H+{\widehat\lambda_{\circ} }\,)^{-1}=\uno$, by $\ran(G_{{\widehat\lambda_{\circ} }})=\ker(-S^{\times}+{\widehat\lambda_{\circ} }\,)$, by \eqref{wG} and by \eqref{AT}, then one gets 
\begin{align*}
\widehat S^{\times}\psi=&-(-H_{{T}
}+{\widehat\lambda_{\circ} }\, )\widehat\psi_{0}+{\widehat\lambda_{\circ} }\psi=-(-S^{\times}+{\widehat\lambda_{\circ} }\, )\widehat\psi_{0}+{\widehat\lambda_{\circ} }\psi\\
=&-(-S^{\times} +{\widehat\lambda_{\circ} }\, )(\psi-\widehat G\phi)+{\widehat\lambda_{\circ} }\psi
=
S^{\times}\psi+(-S^{\times} +{\widehat\lambda_{\circ} }\, )\widehat G\phi\\
=&S^{\times}\psi+\phi=(S^{\times}+A_{{T}
})\psi\,.
\end{align*}
Hence, since 
$$
 \WD\cap\ker(\T)\subseteq\{\psi\in\dom({T}
):\psi-G\psi\in{\Sob_1}\}=\dom(S^{\times}_{{T}
})\,,
$$ 
the proof is done.
\end{proof}
Putting together the previous results, one gets the following 
\begin{theorem}\label{resolvent} Let ${T}
:\D({T}
)\subseteq\H\to\H$ be self-adjoint and $A:\Sob_{1}\to \H$ be bounded such that hypotheses \eqref{H1} and \eqref{H2} hold true.
If there exists 
$z_{\circ}\in \varrho(H_{{T}
})$ such that  $\T\widehat G_{z_{\circ}}$ has a bounded inverse, then  $\widehat H_{{T}
}={S^{\times}_{{T}
}}$ is self-adjoint, $\D(H)\cap\D(\widehat H_{{T}
})=\{0\}$ and
\be\label{domdom}
\D(\widehat H_{{T}
})=\{\psi\in\D({T}
):\psi-G\psi\in{\Sob_1}\}\,,
\ee
\be\label{whh}
\widehat H_{{T}
}=\overline{H}+A^{*}+A_{{T}}\,.
\ee
Moreover  
$\T\widehat G_{z}$ has a bounded inverse for any $z\in  \varrho(H_{{T}
})\cap \varrho (\widehat H_{{T}})$ and 
\begin{align*}
&
(-\widehat H_{{T}
}+z)^{-1}
=(-H_{{T}
}+z)^{-1}-\widehat G_{z}(\T\widehat G_{z})^{-1}\widehat G^{*}_{{\bar z} }\,;
\end{align*}
if  $z\in\varrho(H)\cap \varrho(H_{{T}
})\cap \varrho (\widehat H_{{T}})$ then 
\begin{align}\label{KF}
(-\widehat H_{{T}
}+z)^{-1}
=&(-H+z)^{-1}-
\begin{bmatrix}G_{z}&R_{z}\end{bmatrix}\begin{bmatrix}A_{{T}
} G_{z}&G^{*}_{{\bar z} }-\uno\\G_{z}-\uno&R_{z}
\end{bmatrix}^{\!-1}\begin{bmatrix}G_{{\bar z} }^{*}\\R_{z}\end{bmatrix}\,.
\end{align}
\end{theorem}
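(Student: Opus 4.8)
The strategy follows the two-fold application of the Kre\u\i n-type formula outlined in the introduction: apply Theorem~\ref{rem} a second time, now with base operator $H_{T}$, auxiliary space $\H$, map $\widehat\Sigma|\dom(H_{T})$ and parameter $\widehat\Theta:=-\widehat\Sigma\widehat G$. First I would check that $\widehat\Theta\in\B(\H)$ is self-adjoint: by \eqref{TG} at $z={\widehat\lambda_{\circ} }$,
$$\widehat\Theta=-(-H+{\widehat\lambda_{\circ} })^{-1}+(\uno-G_{{\widehat\lambda_{\circ} }})(A_{T}G_{{\widehat\lambda_{\circ} }})^{-1}(\uno-G^{*}_{{\widehat\lambda_{\circ} }})\,,$$
and since ${\widehat\lambda_{\circ} }\in\RE\cap\varrho(H)\cap\varrho(H_{T})=\RE\cap Z_{A,-T}$ the middle factor $A_{T}G_{{\widehat\lambda_{\circ} }}=T-M_{{\widehat\lambda_{\circ} }}$ is self-adjoint (by \eqref{QF}, $M_{{\widehat\lambda_{\circ} }}^{*}=M_{{\widehat\lambda_{\circ} }}$) with bounded inverse, while the outer factors $\uno-G_{{\widehat\lambda_{\circ} }}$ and $\uno-G^{*}_{{\widehat\lambda_{\circ} }}$ are adjoints of each other. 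Next I would verify the two standing hypotheses of Subsection~\ref{SP} for this second instance: \eqref{H2} is precisely Lemma~\ref{cap} at $z={\widehat\lambda_{\circ} }$; and, since $\widehat M_{z}:=\widehat\Sigma(\widehat G-\widehat G_{z})$ is the Weyl function \eqref{WF} of the second instance and $\widehat\Sigma=\uno-A_{*}$ is linear on $\DS\supseteq\ran(\widehat G_{z})$, one has $\widehat\Theta+\widehat M_{z}=-\widehat\Sigma\widehat G+\widehat\Sigma(\widehat G-\widehat G_{z})=-\widehat\Sigma\widehat G_{z}$ for all $z\in\varrho(H_{T})$, so the assumption that $\widehat\Sigma\widehat G_{z_{\circ}}$ has bounded inverse is exactly $z_{\circ}\in Z_{\widehat\Sigma,\widehat\Theta}$, i.e.\ \eqref{H1}. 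Theorems~\ref{KFt} and \ref{rem} then produce a self-adjoint $\widehat H_{\widehat\Theta}$ contained in $\widehat S^{\times}$, satisfying $Z_{\widehat\Sigma,\widehat\Theta}=\varrho(H_{T})\cap\varrho(\widehat H_{\widehat\Theta})$ and $(-\widehat H_{\widehat\Theta}+z)^{-1}=(-H_{T}+z)^{-1}+\widehat G_{z}(\widehat\Theta+\widehat M_{z})^{-1}\widehat G^{*}_{{\bar z} }=(-H_{T}+z)^{-1}-\widehat G_{z}(\widehat\Sigma\widehat G_{z})^{-1}\widehat G^{*}_{{\bar z} }$.

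Next I would show $\widehat H_{\widehat\Theta}=S^{\times}_{T}$. By the domain formula \eqref{domain} for the second instance, $\psi\in\dom(\widehat H_{\widehat\Theta})$ iff $\psi=\widehat\psi_{0}+\widehat G\phi$ with $\widehat\psi_{0}\in\dom(H_{T})$, $\phi\in\H$ and $\widehat\Sigma\widehat\psi_{0}=\widehat\Theta\phi$; as $\widehat\Theta=-\widehat\Sigma\widehat G$ and $\widehat\Sigma$ is linear on $\DS$, this reads $\widehat\Sigma\psi=0$, whence $\dom(\widehat H_{\widehat\Theta})=\WD\cap\ker(\widehat\Sigma)$ and (as $\widehat H_{\widehat\Theta}\subseteq\widehat S^{\times}$) $\widehat H_{\widehat\Theta}=\widehat S^{\times}|(\WD\cap\ker(\widehat\Sigma))$, which by Lemma~\ref{ss} is contained in $S^{\times}_{T}$. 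Since $S^{\times}_{T}$ is symmetric by Lemma~\ref{2.1} and a self-adjoint operator has no proper symmetric extension, $\widehat H_{\widehat\Theta}=S^{\times}_{T}=:\widehat H_{T}$. Self-adjointness of $\widehat H_{T}$, the domain description \eqref{domdom}, the additive representation \eqref{whh}, and $\D(H)\cap\D(\widehat H_{T})=\{0\}$ (i.e.\ $\Sob_{1}\cap\dom(S^{\times}_{T})=\{0\}$) are then read off from Lemma~\ref{2.1} and \eqref{ST}; the bounded invertibility of $\widehat\Sigma\widehat G_{z}$ for $z\in\varrho(H_{T})\cap\varrho(\widehat H_{T})$ and the displayed resolvent formula follow from $Z_{\widehat\Sigma,\widehat\Theta}=\varrho(H_{T})\cap\varrho(\widehat H_{T})$ and $\widehat\Sigma\widehat G_{z}=-(\widehat\Theta+\widehat M_{z})$.

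Finally, for the block formula \eqref{KF} I fix $z\in\varrho(H)\cap\varrho(H_{T})\cap\varrho(\widehat H_{T})$, set $N:=(A_{T}G_{z})^{-1}$ and rewrite, by \eqref{RT}, \eqref{wG}, \eqref{GB} and \eqref{TG}: $(-H_{T}+z)^{-1}=R_{z}-G_{z}NG^{*}_{{\bar z} }$, $\widehat G_{z}=R_{z}-G_{z}N(G^{*}_{{\bar z} }-\uno)$, $\widehat G^{*}_{{\bar z} }=\widehat\Sigma(-H_{T}+z)^{-1}=R_{z}-(G_{z}-\uno)NG^{*}_{{\bar z} }$ and $\widehat\Sigma\widehat G_{z}=R_{z}-(\uno-G_{z})N(\uno-G^{*}_{{\bar z} })$. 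The decisive remark is that this last expression is exactly the Schur complement of the $(1,1)$-entry of $M:=\begin{bmatrix}A_{T}G_{z}&G^{*}_{{\bar z} }-\uno\\G_{z}-\uno&R_{z}\end{bmatrix}$; since $A_{T}G_{z}$ and $\widehat\Sigma\widehat G_{z}$ are both boundedly invertible on the chosen set, $M$ is invertible and $M^{-1}$ is given by the usual $2\times2$ block-inversion formula. Multiplying out $\begin{bmatrix}G_{z}&R_{z}\end{bmatrix}M^{-1}\begin{bmatrix}G^{*}_{{\bar z} }\\R_{z}\end{bmatrix}$ and using the identities above for $\widehat G_{z}$ and $\widehat G^{*}_{{\bar z} }$ yields $G_{z}NG^{*}_{{\bar z} }+\widehat G_{z}(\widehat\Sigma\widehat G_{z})^{-1}\widehat G^{*}_{{\bar z} }$, so subtracting it from $(-H+z)^{-1}=R_{z}$ reproduces $(-H_{T}+z)^{-1}-\widehat G_{z}(\widehat\Sigma\widehat G_{z})^{-1}\widehat G^{*}_{{\bar z} }=(-\widehat H_{T}+z)^{-1}$. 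I expect the main obstacle to be this last block-matrix bookkeeping --- keeping the non-commuting $G_{z}$, $G^{*}_{{\bar z} }$, $N$ straight through the inversion --- together with pinning down, and checking the self-adjointness of, the parameter $\widehat\Theta$; everything else is a direct assembly of Lemmas~\ref{2.1}, \ref{cap}, \ref{ss} with the identities \eqref{wG}, \eqref{TG}, \eqref{GB} and a twofold use of the Kre\u\i n-type resolvent formula.
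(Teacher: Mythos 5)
Your proposal is correct and takes essentially the same route as the paper: a second application of Theorems \ref{KFt} and \ref{rem} with $H=H_{T}$, $\Sigma=\T|\dom(H_{T})$, $\Theta=-\T\widehat G$, identification of the resulting self-adjoint operator with $S^{\times}_{T}$ through Lemmas \ref{cap}, \ref{ss} and \ref{2.1}, and insertion of the Kre\u\i n formula for $H_{T}$ into the one for $\widehat H_{T}$ to obtain \eqref{KF}. The only cosmetic differences are that you spell out the self-adjointness of $-\T\widehat G$ (left implicit in the paper) and organize the last step via the Schur complement of the $(1,1)$ entry of the block matrix, whereas the paper writes down the inverse ${\mathbb M}$ explicitly and verifies it by direct multiplication; the two computations are identical in content.
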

\begin{proof} 
By Lemma \ref{ss}, since ${S^{\times}_{{T}
}}$ is symmetric, if $\widehat H_{{T}
}:=\widehat S^{\times}| \WD\cap\ker(\T)$ is self-adjoint then $\widehat H_{{T}
}={S^{\times}_{{T}
}}$. 
This holds by Lemma \ref{cap}, by Theorem \ref{KFt}  and Theorem \ref{rem} applied to the case 
$$H=H_{{T}}\,,\qquad \Sigma=\T|\dom(H_{{T}})\,,\qquad\Theta=-\T\widehat G\,.$$ 
Notice that, by these choices, $$\Sigma_{\Theta}\psi=\widehat\Sigma\widehat\psi_{0}+\widehat\Sigma\widehat G\phi=\widehat\Sigma\psi$$ and so \eqref{domdom} holds. By \eqref{H2}, $G\psi\in\Sob_{1}$ if and only if $\psi=0$; hence 
$$\{\psi\in\D({T}):\psi-G\psi\in{\Sob_1}\}\cap\Sob_{1}=\{0\}\,.
$$ Then to conclude we only need to prove \eqref{KF}. By \eqref{K1}, \eqref{RT}, \eqref{wG} and \eqref{TG}, one gets
\begin{align*} 
&(-\widehat H_{{T}
}+z)^{-1}=
(-H_{{T}
}+z)^{-1}-\widehat G_{z}(\T\widehat G_{z})^{-1}\widehat G^{*}_{{\bar z} }
=(-H+z)^{-1}-G_{z}(A_{{T}
} G_{z})^{-1}G_{{\bar z} }^{*}\\
&-\big((-H+z)^{-1}+G_{z}(A_{{T}
} G_{z})^{-1}(\uno-G^{*}_{{\bar z} })\big)
(\T\widehat G_{z})^{-1}
\big((-H+z)^{-1}+(\uno-G_{z})(A_{{T}
} G_{z})^{-1}G^{*}_{{\bar z} }\big)
\\
=&(-H+z)^{-1}-\begin{bmatrix}G_{z}&R_{z}\end{bmatrix}{\mathbb M}
\begin{bmatrix}G_{{\bar z} }^{*}\\R_{z}\end{bmatrix}\,,
\end{align*}
where ${\mathbb M}$ is the block operator matrix 
$
{\mathbb M}=\begin{bmatrix}{M}_{11}&{M}_{12}\\
{M}_{21}&{M}_{22}\end{bmatrix}$
with entries
\begin{align*}
&{M}_{11}=(A_{{T}
} G_{z})^{-1}+(A_{{T}
} G_{z})^{-1}(\uno-G^{*}_{{\bar z} })(\T\widehat G_{z})^{-1}(\uno-G_{z})(A_{{T}
} G_{z})^{-1}\\
=&(A_{{T}
} G_{z})^{-1}+(A_{{T}
} G_{z})^{-1}(\uno-G^{*}_{{\bar z} })
\left((-H+z)^{-1}-(\uno-G_{z})(A_{{T}
} G_{z})^{-1}(\uno-G^{*}_{{\bar z} })\right)^{-1}\times\\
&\times(\uno-G_{z})(A_{{T}
} G_{z})^{-1}\,,
\end{align*}
\begin{align*}
&{M}_{12}=(A_{{T}
} G_{z})^{-1}(\uno-G^{*}_{{\bar z} })(\T\widehat G_{z})^{-1}\\
=&(A_{{T}
} G_{z})^{-1}(\uno-G^{*}_{{\bar z} })\left((-H+z)^{-1}-(\uno-G_{z})(A_{{T}
} G_{z})^{-1}(\uno-G^{*}_{{\bar z} })\right)^{-1}\,,
\end{align*}
\begin{align*}
&{M}_{21}=
(\T\widehat G_{z})^{-1}(\uno-G_{z})(A_{{T}
} G_{z})^{-1}\\
=&
\left((-H+z)^{-1}-(\uno-G_{z})(A_{{T}
} G_{z})^{-1}(\uno-G^{*}_{{\bar z} })\right)^{-1}(\uno-G_{z})(A_{{T}
} G_{z})^{-1}\,,
\end{align*}
$${M}_{22}=(\T\widehat G_{z})^{-1}=\left((-H+z)^{-1}-(\uno-G_{z})(A_{{T}
} G_{z})^{-1}(\uno-G^{*}_{{\bar z} })\right)^{-1}\,.
$$
Finally, one checks that 
$$
{\mathbb M}\begin{bmatrix}A_{{T}
} G_{z}&G^{*}_{{\bar z} }-\uno\\G_{z}-\uno&R_{z}
\end{bmatrix}=\begin{bmatrix}A_{{T}
} G_{z}&G^{*}_{{\bar z} }-\uno\\G_{z}-\uno&R_{z}
\end{bmatrix}{\mathbb M}={\mathbb 1}=\begin{bmatrix}\uno&0\\0&\uno
\end{bmatrix}\,,
$$
i.e., 
$$
{\mathbb M}=\begin{bmatrix}A_{{T}
} G_{z}&G^{*}_{{\bar z} }-\uno\\G_{z}-\uno&R_{z}
\end{bmatrix}^{-1}
$$
and the proof is done.
\end{proof}
In the next remark and below, we use the notations introduced in the previous section with letters in blackboard bold style to denote block matrix operators.
\begin{remark}\label{RR} Let the hypotheses in Theorem \ref{resolvent} hold. Noticing that
$$
\begin{bmatrix}A_{{T}
} G_{z}&G^{*}_{{\bar z} }-\uno\\G_{z}-\uno&R_{z}
\end{bmatrix}=-(\mathbb{\Theta
}_{T}+\mathbb{\Sigma}(\mathbb{G}-\mathbb{G}_{z}))\equiv\mathbb{\Sigma}_{\mathbb{\Theta
}_{T}}\mathbb{G}_{z}\,,
$$
where
$$
\mathbb{\Sigma}:{\Sob_1}\to\H\oplus\H\,,\qquad \mathbb{\Sigma}\psi:=A\psi\oplus\psi\,,
$$
$$
\mathbb{G}_{z}:\H\oplus\H\to\H\,,\qquad \mathbb{G}_{z}:=(\mathbb{\Sigma}R_{{\bar z} })^{*}\,,\quad \mathbb{G}:=\mathbb{G}_{\lambda_{\circ} }\,,
$$
and 
$$
\mathbb{\Theta
}_{T}:\D({T}
)\oplus\H\subseteq\H\oplus\H\to \H\oplus\H\,,\quad \mathbb{\Theta
}_{T}:=\begin{bmatrix}-{T}
 &\uno-G^{*}\\\uno-G&-{R}
\end{bmatrix}\,,
$$
one gets 
$$
\widehat H_{{T}}=H_{\mathbb{\Theta}_{T}}
$$
and \eqref{KF} is rewritten as (compare with \eqref{K}) 
\be\label{Kbb}
(-\widehat H_{{T}}+z)^{-1}\equiv(-H_{\mathbb{\Theta
}_{T}}+z)^{-1}=(-H+z)^{-1}-\mathbb{G}_{z}(\mathbb{\Sigma}_{\mathbb{\Theta
}_{T}}\mathbb{G}_{z})^{-1}\mathbb{G}_{{\bar z} }^{*}\,.
\ee
Since $\mathbb{G}(\psi_{1}\oplus\psi_{2})=G\psi_{1}+R\psi_{2}$, one has $\ran(\mathbb{G})\cap\Sob_{1}=\Sob_{1}$; this shows that \eqref{K} can still represent the resolvent of a self-adjoint operator even if hypotheses \eqref{H2} in Theorem \ref{KFt} does not hold. 
\end{remark}
In order to apply Theorem \ref{resolvent} one needs to show that there exists at least one $z_{\circ}\in\varrho(H)$ such that $\T\widehat G_{z_{\circ}}$ has a bounded inverse. A simple criterion is provided in the next Lemma. 
\begin{lemma}\label{sl} Let $A\in\B(\Sob_{s},\H)$ for some $s\in(0,1)$. Then
$$(\uno-G_{\pm i\gamma})^{-1}\in\B(\H)\,,\qquad(\uno-G_{\pm i\gamma}^{*})^{-1}\in\B(\H)
$$ 
whenever $\gamma\in\RE$ and $|\gamma|$ is  sufficiently large. \par 
Further suppose that $T:\D(T)\subseteq\H\to\H$ is self-adjoint such that $Z_{A,-T}\not=\emptyset$ and  $\D(T)\supseteq\Sob_{t}$, $T|\Sob_{t}\in\B(\Sob_{t},\H)$ for some $t\in [0,1-s)$. Then  
$$
(\T\widehat G_{\pm i\gamma})^{-1}\in\B(\H)
$$
whenever $\gamma\in\RE$ and $|\gamma|$ is  sufficiently large.
\end{lemma}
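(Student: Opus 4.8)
The plan is to prove the two assertions in turn, reducing each to a convergent Neumann series by exploiting the smoothing action of $R_{z}$ on the scale $\{\Sob_{\sigma}\}$. For the first one, write $G_{\pm i\gamma}=(AR_{\mp i\gamma})^{*}$, so that $\|G_{\pm i\gamma}\|_{\H,\H}=\|AR_{\mp i\gamma}\|_{\H,\H}\le\|A\|_{\Sob_{s},\H}\,\|R_{\mp i\gamma}\|_{\H,\Sob_{s}}$; by the spectral theorem $\|R_{\mp i\gamma}\|_{\H,\Sob_{s}}=\sup_{x\in\sigma(H)}(x^{2}+1)^{s/2}(x^{2}+\gamma^{2})^{-1/2}\le|\gamma|^{s-1}$ for $|\gamma|\ge1$, and this tends to $0$ as $|\gamma|\uparrow\infty$ since $s<1$. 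As $\|G_{\pm i\gamma}^{*}\|_{\H,\H}=\|G_{\pm i\gamma}\|_{\H,\H}$, the geometric series then gives $(\uno-G_{\pm i\gamma})^{-1},(\uno-G_{\pm i\gamma}^{*})^{-1}\in\B(\H)$ for $|\gamma|$ large.

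For the second assertion I would first record that, $Z_{A,-T}$ being nonempty, Remark \ref{III} and Theorem \ref{KFt} force $Z_{A,-T}=\varrho(H)\cap\varrho(H_{T})\supseteq\CO\backslash\RE$; hence $\pm i\gamma\in Z_{A,-T}$ and $A_{T}G_{\pm i\gamma}=T-M_{\pm i\gamma}$ is boundedly invertible for every $\gamma\ne0$. Then, for $|\gamma|$ large enough that the operators of the first part are invertible, I factor \eqref{TG} as
$$
\T\widehat G_{i\gamma}=(\uno-G_{i\gamma})\bigl[\,C_{\gamma}-(A_{T}G_{i\gamma})^{-1}\,\bigr](\uno-G_{-i\gamma}^{*})\,,\qquad C_{\gamma}:=(\uno-G_{i\gamma})^{-1}R_{i\gamma}(\uno-G_{-i\gamma}^{*})^{-1}\,,
$$
(and symmetrically at $-i\gamma$), so that it suffices to invert the central bracket. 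Since $G_{i\gamma}=R_{i\gamma}A^{*}$ with $A^{*}\in\B(\H,\Sob_{-s})$ and $R_{i\gamma}\in\B(\Sob_{-s},\Sob_{1-s})$, one has $\ran(G_{i\gamma})\subseteq\Sob_{1-s}$, and since $\ran(R_{i\gamma})=\Sob_{1}\subseteq\Sob_{1-s}$ the identity $(\uno-G_{i\gamma})^{-1}=\uno+G_{i\gamma}(\uno-G_{i\gamma})^{-1}$ shows $C_{\gamma}\in\B(\H,\Sob_{1-s})$ with $\|C_{\gamma}\|_{\H,\Sob_{1-s}}\to0$, the decay coming from $\|R_{i\gamma}\|_{\H,\Sob_{1-s}}=\sup_{x}(x^{2}+1)^{(1-s)/2}(x^{2}+\gamma^{2})^{-1/2}\le|\gamma|^{-s}$. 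In particular $\ran(C_{\gamma})\subseteq\Sob_{1-s}\subseteq\Sob_{t}\subseteq\D(T)$ because $t\le1-s$.

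Using $A_{T}G_{i\gamma}=T-M_{i\gamma}$ and $\ran(C_{\gamma})\subseteq\D(T)$, the central bracket equals $-(A_{T}G_{i\gamma})^{-1}\bigl(\uno-(T-M_{i\gamma})C_{\gamma}\bigr)$, and I would estimate
$$
\|(T-M_{i\gamma})C_{\gamma}\|_{\H,\H}\le\|T\|_{\Sob_{t},\H}\,\|C_{\gamma}\|_{\H,\Sob_{t}}+\|M_{i\gamma}\|_{\H,\H}\,\|C_{\gamma}\|_{\H,\H}\,.
$$
The first summand tends to $0$ since $\|C_{\gamma}\|_{\H,\Sob_{t}}\le\|C_{\gamma}\|_{\H,\Sob_{1-s}}\to0$ (the embedding $\Sob_{1-s}\hookrightarrow\Sob_{t}$ being contractive for $t\le1-s$); the second tends to $0$ because $M_{i\gamma}=(i\gamma-\lambda_{\circ})G^{*}G_{i\gamma}$ by \eqref{Mz} gives $\|M_{i\gamma}\|_{\H,\H}\le C\,|\gamma|^{s}$ while $\|C_{\gamma}\|_{\H,\H}\le C\,\|R_{i\gamma}\|_{\H,\H}\le C\,|\gamma|^{-1}$, so the product is $O(|\gamma|^{s-1})$. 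Hence for $|\gamma|$ large $\uno-(T-M_{i\gamma})C_{\gamma}\in\B(\H)$ is invertible by Neumann series, so the central bracket, and therefore $\T\widehat G_{\pm i\gamma}$, is invertible.

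The step I expect to be the genuine obstacle is the control of $M_{i\gamma}C_{\gamma}$: the Weyl function $M_{i\gamma}$ is only bounded, and in fact grows like $|\gamma|^{s}$, so it cannot be absorbed by the $\Sob_{1-s}$-smoothing of $C_{\gamma}$ (which is already consumed by the $T$-term) but must instead be paired with the plain $|\gamma|^{-1}$ decay carried by the resolvent factor of $C_{\gamma}$. Balancing these two competing demands on $C_{\gamma}$ is precisely what the two regularity hypotheses $A\in\B(\Sob_{s},\H)$ and $T|\Sob_{t}\in\B(\Sob_{t},\H)$ with $t<1-s$ are tailored to, and the strict inequality $t<1-s$ is what simultaneously guarantees $\ran(C_{\gamma})\subseteq\D(T)$ and the contractivity of $\Sob_{1-s}\hookrightarrow\Sob_{t}$ used above.
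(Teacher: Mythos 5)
Your argument is correct and is essentially the paper's own proof: the same resolvent bounds on the scale $\Sob_{r}$ give the invertibility of $\uno-G_{\pm i\gamma}$ and $\uno-G^{*}_{\pm i\gamma}$, the bounded invertibility of $A_{T}G_{\pm i\gamma}$ for nonreal $z$ is obtained from $Z_{A,-T}\neq\emptyset$ exactly as in the paper, and your factorization of \eqref{TG} as $(\uno-G_{\pm i\gamma})\bigl[C_{\gamma}-(A_{T}G_{\pm i\gamma})^{-1}\bigr](\uno-G^{*}_{\mp i\gamma})$ with a Neumann series for $\uno-(T-M_{\pm i\gamma})C_{\gamma}$ is the paper's factorization with the factor $(A_{T}G_{\pm i\gamma})^{-1}$ merely pulled inside the bracket. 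The only difference is bookkeeping: the paper bounds $\|A_{T}G_{\pm i\gamma}\|_{\Sob_{t},\H}\,\|(\uno-G_{\pm i\gamma})^{-1}\|_{\Sob_{t},\Sob_{t}}\,\|R_{\pm i\gamma}\|_{\H,\Sob_{t}}\,\|(\uno-G^{*}_{\mp i\gamma})^{-1}\|_{\H,\H}$ in one stroke, whereas you split the $T$- and $M_{\pm i\gamma}$-contributions and pair them with $\|C_{\gamma}\|_{\H,\Sob_{t}}$ and $\|C_{\gamma}\|_{\H,\H}$ respectively, which produces the same smallness for large $|\gamma|$ under the same hypotheses $A\in\B(\Sob_{s},\H)$, $t<1-s$.
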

\begin{proof} Let us take $|\gamma|\ge 1$. By  
$$
\|(-H\pm i\gamma)^{-1}\|_{\H, \H}\le \frac1{|\gamma|}\,,\qquad \|(-H\pm i\gamma)^{-1}\|_{\H,\Sob_{1}}\le 1\,,
$$
since interpolating theorems hold for Hilbert scales of the kind $\Sob_{s}$, $s\in\RE$, (see \cite[Section 9]{KP}), one gets
$$
\|(-H\pm i\gamma)^{-1}\|_{\H,\Sob_{r}}\le \frac1{|\gamma|^{1-r}}\,,\quad 0\le r\le 1\,.
$$
Thus,  
\be\label{t-u}
\|(-H\pm i\gamma )^{-1}\|_{\Sob_{u},\Sob_{t}}=\|(-H\pm i\gamma )^{-1}\|_{\H,\Sob_{t-u}}\le \frac1{|\gamma| ^{1-(t-u)}}\,,\quad 0\le t-u\le 1\,.
\ee
Hence
$$\|G_{\mp i\gamma}^{*}\|_{\Sob_{t}, \H}\le\|A\|_{\Sob_{s},\H}\|(-H\pm i\gamma )^{-1}\|_{\Sob_{t}, \Sob_{s}}\le \frac{
\|A\|_{\Sob_{s},\H}}{|\gamma| ^{1-(s-t)}}
$$
and
$$\|G_{\pm i\gamma}\|_{\Sob_{t}, \Sob_{t}}\le
\|(-H\pm i\gamma )^{-1}\|_{\Sob_{-s}, \Sob_{t}}\|A^{*}\|_{\Sob_{t}, \Sob_{-s}}
=\|(-H\pm i\gamma )^{-1}\|_{\Sob_{-s}, \Sob_{t}}\|A\|_{\Sob_{s}, \Sob_{-t}}\le
\frac{\|A\|_{\Sob_{s}, \H}}{|\gamma| ^{1-(t+s)}}
\,.
$$
This shows that both $\uno-G_{\pm i\gamma}:\Sob_{t}\to \Sob_{t}$ and $\uno-G_{\mp i\gamma} ^{*}:\H\to\H$ have bounded inverses whenever $|\gamma| $ is sufficiently large. \par 
Since $Z_{A,-T}\not=\emptyset$, by \cite[Theorem 2.19 and Remark 2.20]{CFP}, $A_{T}G_{z}$ has a bounded inverse for any $z\in\varrho(H)\cap\varrho(H_{T})\subseteq\CO\backslash\RE$ and so
\begin{align*}
&\T\widehat G_{\pm i\gamma}\\
=&(\uno-G_{\pm i\gamma})(A_{{T}
} G_{\pm i\gamma})^{-1}\big(A_{{T}} G_{\pm i\gamma }(\uno-G_{\pm i\gamma })^{-1}(-H\pm i\gamma )^{-1}(\uno-G_{\mp i\gamma} ^{*})^{-1}-\uno\big)(\uno-G_{\mp i\gamma} ^{*})
\,.
\end{align*}
Since
$$
\|(\uno-G^{*}_{\pm i\gamma})^{-1}\|_{\H, \H}\le \sum_{n=0}^{\infty}\|G^{*}_{\pm i\gamma}\|^{n}_{\H, \H}=\frac{1}{1-|\gamma| ^{s-1}\|A\|_{\Sob_{s},\H}}\le c_{0}\,,
$$
$$
\|(\uno-G_{\mp i\gamma} )^{-1}\|_{\Sob_{t}, \Sob_{t}}\le \sum_{n=0}^{\infty}\|G_{\mp i\gamma} \|^{n}_{\Sob_{t}, \Sob_{t}}=\frac{1}{1-|\gamma| ^{t+s-1}\|A\|_{\Sob_{s},\H}}\le c_{t}
$$
and
\begin{align*}
&\|A_{{T}
} G_{\pm i\gamma}\|_{\Sob_{t}, \H}\le \|{T}
\|_{\Sob_{t}, \H}+\|M_{\pm i\gamma }\|_{\Sob_{t}, \H}\\
\le& 
\|{T}
\|_{\Sob_{t}, \H}+|\pm i\gamma-\lambda_{\circ} |\,\|G^{*}\|_{\Sob_{t}, \H}\|G_{\pm i\gamma }\|_{\Sob_{t},\Sob_{t}}\\
\le&
\|{T}
\|_{\Sob_{t}, \H}+\frac{|\lambda_{\circ} |+|\gamma| }{|\gamma| ^{2(1-s)}}\ \|A\|_{\Sob_{s},\H}^{2}\le \kappa_{t,s}\left(1+\frac{|\lambda_{\circ} |+|\gamma| }{|\gamma| ^{2(1-s)}}\,\right)
\end{align*}
one has 
\begin{align*}
&\|A_{{T}} G_{\pm i\gamma}(\uno-G_{\pm i\gamma})^{-1}(-H\pm i\gamma )^{-1}(\uno-G_{\mp i\gamma} ^{*})^{-1}\|_{\H, \H}\\
\le& 
\|A_{{T}} G_{\pm i\gamma }\|_{\Sob_{t}, \H}\|(\uno-G_{\pm i\gamma })^{-1}\|_{\Sob_{t},\Sob_{t}}
\|(-H\pm i\gamma )^{-1}\|_{\H, \Sob_{t}}\|(\uno-G^{*}_{\pm i\gamma })^{-1}\|_{\H, \H}\\
\le &\kappa_{t,s}c_{0}c_{t}\left(1+\frac{|\lambda_{\circ} |+|\gamma| }{|\gamma| ^{2(1-s)}}\,\right)\frac{1}{|\gamma| ^{1-t}}<1
\end{align*}
whenever $|\gamma| $ is sufficiently large. Hence, whenever  $|\gamma| $ is sufficiently large, $\T\widehat G_{\pm i\gamma}$ has a bounded inverse given by 
\begin{align*}
&(\T\widehat G_{\pm i\gamma})^{-1}
\\
=&(\uno-G_{\mp i\gamma} ^{*})^{-1}\big(A_{{T}} G_{\pm i\gamma }(\uno-G_{\pm i\gamma})^{-1}(-H\pm i\gamma )^{-1}(\uno-G_{\mp i\gamma} ^{*})^{-1}-\uno\big)^{-1}A_{{T}
} G_{\pm i\gamma }(\uno-G_{\pm i\gamma})^{-1}\,.
\end{align*}
\end{proof}
\begin{corollary}\label{c-sl} Let $A$ and $T$ satisfy the hypotheses in Lemma \ref{sl} and further suppose that 
both $\ker(A|\Sob_{1})$ and $\ran(A|\Sob_{1})$ are dense in $\H$. Then $\widehat H_{T}=\overline H+A^{*}+A_{T}$ is self-adjoint with domain $\D(\widehat H_{T})=\{\psi\in\Sob_{1-s}:\psi-G\psi\in\Sob_{1}\}$ and resolvent given by formula \eqref{KF}. 
\end{corollary}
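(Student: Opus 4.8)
The plan is to deduce Corollary \ref{c-sl} from Theorem \ref{resolvent} together with Lemma \ref{sl}, after checking the few remaining hypotheses and then rewriting the domain formula \eqref{domdom} in terms of the Hilbert scale $\{\Sob_{r}\}_{r\in\RE}$.

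First I would verify the hypotheses of Theorem \ref{resolvent} for the pair $(H,\Sigma=A)$. Hypothesis \eqref{H1}, i.e. $Z_{A,-T}\not=\emptyset$, is already among the assumptions of Lemma \ref{sl} and is therefore available. For \eqref{H2} I would invoke Lemma \ref{suff} with $\Sigma=A$ and $\X=\H$: density of $\ran(A|\Sob_{1})$ in $\H$ gives $\ker(G)=\{0\}$, and density of $\ker(A|\Sob_{1})$ in $\H$ gives $\ran(G)\cap\Sob_{1}=\{0\}$. Next, Lemma \ref{sl} itself supplies a point $z_{\circ}=\pm i\gamma$, with $|\gamma|$ large, at which $\widehat\Sigma\widehat G_{z_{\circ}}$ is boundedly invertible; since $H_{T}$ is self-adjoint we have $z_{\circ}\in\varrho(H_{T})$ automatically. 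Feeding these three facts into Theorem \ref{resolvent} yields at once that $\widehat H_{T}=S^{\times}_{T}=\overline H+A^{*}+A_{T}$ is self-adjoint, that $\D(H)\cap\D(\widehat H_{T})=\{0\}$, that the resolvent is given by \eqref{KF}, and that $\D(\widehat H_{T})=\{\psi\in\D(T):\psi-G\psi\in\Sob_{1}\}$ via \eqref{domdom}.

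It then remains to identify this domain with $\{\psi\in\Sob_{1-s}:\psi-G\psi\in\Sob_{1}\}$. The observation I would use is a regularity property of $G$: since $A\in\B(\Sob_{s},\H)$, the definition \eqref{menouno} and the Cauchy--Schwarz inequality give $A^{*}\equiv\Sigma^{*}\in\B(\H,\Sob_{-s})$, and as $R_{\lambda_{\circ}}\in\B(\Sob_{-s},\Sob_{1-s})$ this yields $G=R_{\lambda_{\circ}}A^{*}\in\B(\H,\Sob_{1-s})$. Hence for any $\psi\in\H$ with $\psi-G\psi\in\Sob_{1}$, writing $\psi=(\psi-G\psi)+G\psi$ and using $\Sob_{1}\hookrightarrow\Sob_{1-s}$ gives $\psi\in\Sob_{1-s}$ (so in particular the constraint $\psi\in\D(T)$ in \eqref{domdom} is automatic), which is the inclusion $\subseteq$; conversely, since $t\in[0,1-s)$ we have $\Sob_{1-s}\hookrightarrow\Sob_{t}\subseteq\D(T)$, so $\psi\in\Sob_{1-s}$ implies $\psi\in\D(T)$, giving $\supseteq$.

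I do not expect a genuine obstacle in this argument: all the analytic content — the interpolation estimates on the scale $\Sob_{r}$ that make $\widehat\Sigma\widehat G_{\pm i\gamma}$ boundedly invertible for large $|\gamma|$ — is already packaged inside Lemma \ref{sl}, and the present step is essentially bookkeeping. The one thing to be careful about is matching exactly the hypotheses: the two density conditions on $A|\Sob_{1}$ to Lemma \ref{suff}, and the conditions $\D(T)\supseteq\Sob_{t}$, $T|\Sob_{t}\in\B(\Sob_{t},\H)$, $Z_{A,-T}\not=\emptyset$ to Lemma \ref{sl} and to Theorem \ref{resolvent}.
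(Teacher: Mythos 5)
Your argument is correct and follows essentially the same route as the paper: verify \eqref{H1}--\eqref{H2} via Lemma \ref{suff} and the invertibility of $\T\widehat G_{\pm i\gamma}$ via Lemma \ref{sl}, apply Theorem \ref{resolvent}, and then use $G=RA^{*}\in\B(\H,\Sob_{1-s})$ (from $A^{*}\in\B(\H,\Sob_{-s})$ and the mapping property of $R$) together with $\Sob_{1-s}\hookrightarrow\Sob_{t}\subseteq\D(T)$ to identify the domain. Your write-up is in fact slightly more explicit than the paper's about the two inclusions in the domain identification.
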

\begin{proof} Theorem \ref{resolvent} and Lemmata \ref{sl} and \ref{suff} give the thesis with 
$\D(\widehat H_{T})=\{\psi\in\Sob_{t}:\psi-G\psi\in\Sob_{1}\}$. Then, by using the $\Sob_{-s}$-$\Sob_{s}$ duality, $G=RA^{*}$, where $A^{*}\in\B(\H,\Sob_{-s})$ and $R\in\B(\Sob_{s},\Sob_{1-s})$; therefore $G\in\B(\H,\Sob_{1-s})$ and the proof is concluded noticing that $\psi\in\D(\widehat H_{T})$ belongs to $\Sob_{1-s}$ if and only if $G\psi\in \Sob_{1-s}$. 
\end{proof}
\begin{remark}\label{prev}
As the proof of previous Lemma \ref{sl} shows, if $H$ is semibounded then the same conclusions there hold with $\pm i\gamma$ replaced by $\lambda\in\RE$ sufficiently far away from $\sigma(H)$.   
\end{remark}
Since the operator ${T}
$ enters as an additive perturbation in the definition of $\widehat H_{{T}
}$, one can eventually avoid the self-adjointness hypothesis on it and work with $\widehat H_{0}$ alone:
\begin{theorem}\label{alo} Let $A\in\B(\Sob_{s},\H)$ for some $0<s<1$ and such that both $\ker(A|\Sob_{1})$ and $\ran(A|\Sob_{1})$ are dense in $\H$.  Then $\widehat H_{0}:=\overline{H}+A^{*}+A_{0}$ is self-adjoint with 
domain 
$$\D(\widehat H_{0})=\{\psi\in\Sob_{1-s}:\psi-G\psi\in\Sob_{1}\}$$
and resolvent given, for any $z\in\CO$ such that $\mu+z\in  \varrho(H)\cap \varrho (\widehat H_{0})$, $\mu\in\RE\backslash\{0\}$, by 
\be\label{res-c}
(-\widehat H_{0}+z)^{-1}
=(-H+\mu+z)^{-1}-
\begin{bmatrix}G_{\mu+z}&R_{\mu+z}\end{bmatrix}\begin{bmatrix}A_{\mu} G_{\mu+z}&G^{*}_{\mu+{\bar z} }-\uno\\G_{\mu+z}-\uno&R_{\mu+z}
\end{bmatrix}^{-1}\begin{bmatrix}G_{\mu+{\bar z} }^{*}\\R_{\mu+z}\end{bmatrix}\,.
\ee
If ${T}:\D(T)\subseteq\H\to\H$, $\D(T)\supseteq\D(\widehat H_{0})$, is symmetric and $\widehat H_{0}$-bounded with relative bound ${\widehat a}<1$ then 
$\widehat H_{{T}
}:=\overline{H}+A^{*}+A_{{T}}$ is self-adjoint, has domain 
$\D(\widehat H_{{T}
})=\D(\widehat H_{0})$ 
and resolvent
\be\label{res-T}
(-\widehat H_{{T}
}+z)^{-1}=(-\widehat H_{0}+z)^{-1}+(-\widehat H_{0}+z)^{-1}(1-{T}
(-\widehat H_{0}+z)^{-1})^{-1}{T}
(-\widehat H_{0}+z)^{-1}\,.
\ee
\end{theorem}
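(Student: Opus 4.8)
The plan is to produce $\widehat H_0$ from Corollary \ref{c-sl} by a harmless shift, and then $\widehat H_T$ from $\widehat H_0$ by the Rellich--Kato theorem. The first point to notice is that one cannot feed the parameter $T=0$ (i.e., $\Theta=0$) directly into the scheme of Section 2: the hypothesis $Z_{A,0}\neq\emptyset$ required by Corollary \ref{c-sl} is not granted by the present assumptions, and in particular $\lambda_\circ\notin Z_{A,0}$ because $M_{\lambda_\circ}=A(G-G)=0$ is not boundedly invertible. The remedy is to run the construction with $T$ replaced by $\mu\uno$ for an arbitrary $\mu\in\RE\setminus\{0\}$: this bounded self-adjoint operator satisfies the hypotheses of Lemma \ref{sl} with the choice $t=0\in[0,1-s)$ (so that $\D(\mu\uno)=\H\supseteq\Sob_0$ and $\mu\uno|_{\Sob_0}\in\B(\Sob_0,\H)$), and $Z_{A,-\mu\uno}\neq\emptyset$ since $0\in\varrho(-\mu\uno)$, whence $\lambda_\circ\in Z_{A,-\mu\uno}$ by Remark \ref{obv}. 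As $\ker(A|\Sob_1)$ and $\ran(A|\Sob_1)$ are dense, Corollary \ref{c-sl} applies with $T=\mu\uno$ and yields that $\widehat H_{\mu\uno}:=\overline H+A^*+A_{\mu\uno}$ is self-adjoint, with $\D(\widehat H_{\mu\uno})=\{\psi\in\Sob_{1-s}:\psi-G\psi\in\Sob_1\}$ and with resolvent given by \eqref{KF} for $T=\mu\uno$.

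On $\D(\widehat H_{\mu\uno})=\D(S^\times_{\mu\uno})$ the defining constraint of $S^\times_{\mu\uno}$ (Lemma \ref{2.1}) reads $A_*\psi=\psi$, so $A_{\mu\uno}\psi=A_0\psi+\mu A_*\psi=A_0\psi+\mu\psi$ and therefore $\widehat H_{\mu\uno}=\widehat H_0+\mu\uno$, where $\widehat H_0:=\overline H+A^*+A_0$ acts on the same domain. Since $\mu\uno\in\B(\H)$ is self-adjoint, $\widehat H_0=\widehat H_{\mu\uno}-\mu\uno$ is self-adjoint with $\D(\widehat H_0)=\D(\widehat H_{\mu\uno})=\{\psi\in\Sob_{1-s}:\psi-G\psi\in\Sob_1\}$; as neither the operator $\overline H+A^*+A_0$ nor this domain involves $\mu$, the construction is unambiguous. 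The resolvent identity \eqref{res-c} is then obtained by writing $(-\widehat H_0+z)^{-1}=(-\widehat H_{\mu\uno}+\mu+z)^{-1}$ and inserting the spectral parameter $\mu+z$ into the Kre\u\i n formula \eqref{KF} for $\widehat H_{\mu\uno}$, with the abbreviation $A_\mu:=A_{\mu\uno}$, valid on the stated range of $z$.

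For the second part, let $T$ be symmetric with $\D(T)\supseteq\D(\widehat H_0)$ and $\widehat H_0$-bounded with relative bound $\widehat a<1$. Self-adjointness of $\widehat H_0$ makes $\D(\widehat H_0)$ dense, hence $T$ is densely defined, and the Rellich--Kato theorem shows that $\widehat H_0+T$ is self-adjoint on $\D(\widehat H_0)$. To identify $\widehat H_0+T$ with $\overline H+A^*+A_T$: for $\psi\in\D(\widehat H_0)$ one has $A_*\psi=\psi\in\D(\widehat H_0)\subseteq\D(T)$, so $\psi\in\D(A_T)$ and $A_T\psi=A_0\psi+TA_*\psi=A_0\psi+T\psi$, whence $(\overline H+A^*+A_T)\psi=\widehat H_0\psi+T\psi=(\widehat H_0+T)\psi$. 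Finally \eqref{res-T} is the standard rearrangement of $-\widehat H_T+z=(-\widehat H_0+z)-T$: putting $R_0:=(-\widehat H_0+z)^{-1}$, one has $(-\widehat H_T+z)^{-1}=R_0(\uno-TR_0)^{-1}=R_0+R_0(\uno-TR_0)^{-1}TR_0$, where $\uno-TR_0=(-\widehat H_T+z)R_0$ is boundedly invertible for every $z\in\varrho(\widehat H_0)\cap\varrho(\widehat H_T)$ (and the underlying Neumann series converges for $|\text{Im}\,z|$ large, thanks to the relative bound). The main obstacle is the first step: recognizing that the admissible surrogate for the forbidden choice $T=0$ is a nonzero multiple of the identity, which only translates $\widehat H_0$ while restoring the invertibility of $\Theta+M_{\lambda_\circ}$ that the construction of Section 2 requires; the rest is Corollary \ref{c-sl}, the constraint $A_*\psi=\psi$ on $\D(S^\times_T)$, Rellich--Kato, and elementary resolvent algebra.
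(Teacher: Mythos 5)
Your proof is correct and follows essentially the same route as the paper: run the construction of Section 3 with the shifted parameter $T=\mu\uno$, $\mu\neq 0$ (so that $\Theta+M_{\lambda_{\circ}}=-\mu\uno$ is boundedly invertible), identify $\widehat H_{\mu\uno}=\widehat H_{0}+\mu$ via the constraint $A_{*}\psi=\psi$ on the domain, and then handle a symmetric, $\widehat H_{0}$-bounded $T$ by Rellich--Kato and elementary resolvent algebra. The only cosmetic difference is that you cite Corollary \ref{c-sl} directly (which already contains the $\Sob_{1-s}$ refinement of the domain), while the paper invokes Lemma \ref{sl} and Theorem \ref{resolvent} and adds that refinement by the argument of Corollary \ref{c-sl} afterwards.
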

\begin{proof} By Remark \ref{obv} and Lemma \ref{suff}, hypotheses \eqref{H1} and \eqref{H2} are satisfied with $\Theta=-T
=-\mu\not=0$. Hence, by Lemma \ref{sl} and Theorem \ref{resolvent}, 
$\widehat H_{\mu}$ (i.e., $\widehat H_{T}$ with $T=\mu$) is selfadjoint with domain $\D(\widehat H_{\mu})=\{\psi\in\H:\psi-G\psi\in{\Sob_1}\}$ and resolvent $(-\widehat H_{\mu}+z)^{-1}=(-H_{\mu}+z)^{-1}-\widehat G_{z}
(\T\widehat G_{z})^{-1}\widehat G^{*}_{{\bar z} }$. Therefore $\widehat H_{0}=\widehat H_{\mu}-\mu$ is self-adjoint with domain $\D(\widehat H_{0})=\D(\widehat H_{\mu})$ and resolvent $(-\widehat H_{0}+z)^{-1}=(-\widehat H_{\mu}+\mu+z)^{-1}$. Since $A\in\B(\Sob_{s},\H)$, one gets $\D(\widehat H_{0})\subseteq\Sob_{1-s}$ by the same arguments as in the proof of Corollary \ref{c-sl}.
Formula \eqref{res-T} is consequence of  $\widehat H_{{T}}=\widehat H_{0}+{T}
$ and Remark \ref{RK}.
\end{proof}
The next result shows how to obtain $\widehat H_{{T}}$ as limits of  regular perturbations of $H$.
\begin{theorem}\label{bb} Suppose that the operator $$\widehat H_{0}:=\overline{H}+A^{*}+A_{0}\,,\quad\D(\widehat H_{0})=\{\psi\in\H:\psi-G\psi\in{\Sob_1}\}$$ is self-adjoint with resolvent given by \eqref{res-c} for some $\mu\in\RE$. Let $A_{n}:\dom(A_{n})\subseteq\H\to \H$
be a sequence of closable operators such that, for some $s\in\left[0,\frac12\right]$,  
$$\dom(A_{n})\supseteq\Sob_{s}\,,\qquad A_{n}|\Sob_{s}\in\B(\Sob_{s},\H)\,,
$$
and
$$
\text{$A^{*}_{n}+A_{n}$ is $H$-bounded with relative bound $a<1$};
$$ 
further suppose, whenever $s=\frac12$, that $H$ is semi-bounded and $\mu=0$.\par
Let
$$
H_{n}:\Sob_{1}\subseteq\H\to\H\,,\qquad H_{n}:=H+A^{*}_{n}+A_{n}\,,
$$
$$
\widetilde H_{n}:\Sob_{1}\subseteq\H\to\H\,,\qquad \widetilde H_{n}:=H_{n}-A_{n}{R}A_{n}^{*}\,.
$$
If 
\be\label{Aenne}
\lim_{n\uparrow\infty}\|A_{n}-A\|_{\Sob_{1},\H}=0\,,
\ee
then
\be\label{wdn}
\lim_{n\uparrow\infty}\,\widetilde H_{n}=\widehat H_{0}\quad \text{in norm-resolvent sense.}
\ee
Let ${T}:\D(T)\subseteq\H\to\H$, $\D(T)\supseteq\D(\widehat H_{0})$, be symmetric and $\widehat H_{0}$-bounded with relative bound $\widehat a<1$; let $\widehat H_{{T}
}$ be the self-adjoint operator $\widehat H_{T}:=\overline{H}+A^{*}+A_{{T}}
$,  $\dom(\widehat H_{{T}
})=\dom(\widehat H_{0})$. If, alongside with \eqref{Aenne}, there exist a sequence   $\{E_{n}\}_{1}^{\infty}$ of bounded symmetric operators in $\H$
such that  
\be\label{Hy1}
\text{$A_{n}{R}A^{*}_{n}+E_{n}$ is $\widetilde H_{n}$-bounded with $n$-independent relative bound $\widetilde a<1$}
\ee
and  
\be\label{Hy2}
\lim_{n\uparrow\infty}\|A_{n}{R}A^{*}_{n}+E_{n}-{T}\|_{\D(T),\H}=0\,,
\ee
then
$$\lim_{n\uparrow\infty}\,(H_{n}+E_{n})=\widehat H_{{T}
}\quad \text{in norm-resolvent sense.}
$$
\end{theorem}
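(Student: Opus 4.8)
The plan is to place both assertions within the block Kreĭn--Konno-Kuroda formalism of Subsection \ref{SP} and Remark \ref{RR}, so that each reduces to (an adaptation of) Theorem \ref{conv}; for notational ease I take $\mu=0$ in \eqref{res-c}, the general case being entirely analogous (one puts $A_{n}RA_{n}^{*}-\mu$ in place of $A_{n}RA_{n}^{*}$ below and works with $\widehat H_{0}+\mu$, as in the proof of Theorem \ref{alo}). For \eqref{wdn}, set $\X:=\H\oplus\H$, $\mathbb{\Sigma}\psi:=A\psi\oplus\psi$ (a bounded map $\Sob_{1}\to\X$, since $A\in\B(\Sob_{s},\H)$), and the bounded self-adjoint block operators $\mathbb{\Theta}_{0}:=\bigl[\begin{smallmatrix}0&\uno-G^{*}\\\uno-G&-R\end{smallmatrix}\bigr]$ --- the parameter of Remark \ref{RR} at $T=0$, bounded because $G^{*}=AR$ and $G=RA^{*}$ are bounded on $\H$ --- and $\mathbb{\Theta}_{n}:=\bigl[\begin{smallmatrix}A_{n}RA_{n}^{*}&\uno\\\uno&0\end{smallmatrix}\bigr]$, which has $0\in\varrho(\mathbb{\Theta}_{n})$ with inverse $\mathbb{\Lambda}_{n}:=\bigl[\begin{smallmatrix}0&\uno\\\uno&-A_{n}RA_{n}^{*}\end{smallmatrix}\bigr]$. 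Then $\widehat H_{0}=H_{\mathbb{\Theta}_{0}}$ has resolvent \eqref{res-c}, equivalently the block formula \eqref{Kbb}; and, with $\mathbb{\Sigma}_{n}\psi:=A_{n}\psi\oplus\psi$, a direct computation gives $\mathbb{\Sigma}_{n}^{*}\mathbb{\Lambda}_{n}\mathbb{\Sigma}_{n}=A_{n}^{*}+A_{n}-A_{n}RA_{n}^{*}$, i.e. $\widetilde H_{n}=H+\mathbb{\Sigma}_{n}^{*}\mathbb{\Lambda}_{n}\mathbb{\Sigma}_{n}$, which by Kato--Rellich ($A_{n}^{*}+A_{n}$ is $H$-bounded with bound $a<1$, $A_{n}RA_{n}^{*}\in\B(\H)$) is self-adjoint on $\Sob_{1}$ with resolvent \eqref{K-RK1} by Theorem \ref{TRK}. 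Theorem \ref{conv} now applies: \eqref{CS} is $\|\mathbb{\Sigma}_{n}-\mathbb{\Sigma}\|_{\Sob_{1},\X}=\|A_{n}-A\|_{\Sob_{1},\H}\to0$, i.e. \eqref{Aenne}; the summand $A_{n}RA_{n}^{*}$ in $\mathbb{\Theta}_{n}$ cancels the $(1,1)$ block of $\mathbb{\Sigma}_{n}R\mathbb{\Sigma}_{n}^{*}$, leaving $\mathbb{\Theta}_{n}-\mathbb{\Sigma}_{n}R\mathbb{\Sigma}_{n}^{*}-\mathbb{\Theta}_{0}=\bigl[\begin{smallmatrix}0&(A-A_{n})R\\R(A-A_{n})^{*}&0\end{smallmatrix}\bigr]$, of $\B(\X)$-norm $\le2\|A_{n}-A\|_{\Sob_{1},\H}\,\|R\|_{\H,\Sob_{1}}\to0$, which is \eqref{CT}; and \eqref{est} is automatic since $\dom(\mathbb{\Theta}_{n})=\dom(\mathbb{\Theta}_{0})=\X$. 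Hence $\widetilde H_{n}\to H_{\mathbb{\Theta}_{0}}=\widehat H_{0}$ in norm-resolvent sense.

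For the second assertion write $\widehat H_{T}=\widehat H_{0}+T$ and $H_{n}+E_{n}=\widetilde H_{n}+B_{n}$ with $B_{n}:=A_{n}RA_{n}^{*}+E_{n}\in\B(\H)$; by \eqref{Hy1}, $B_{n}$ is $\widetilde H_{n}$-bounded with $n$-independent relative bound $\widetilde a<1$, so $H_{n}+E_{n}$ is self-adjoint on $\Sob_{1}$. One repeats the block construction with the bounded parameter $\widetilde{\mathbb{\Theta}}_{n}:=\bigl[\begin{smallmatrix}-E_{n}&\uno\\\uno&0\end{smallmatrix}\bigr]$ (inverse $\bigl[\begin{smallmatrix}0&\uno\\\uno&E_{n}\end{smallmatrix}\bigr]$), so that $H_{n}+E_{n}=H+\mathbb{\Sigma}_{n}^{*}\widetilde{\mathbb{\Theta}}_{n}^{-1}\mathbb{\Sigma}_{n}$, and with $\mathbb{\Theta}_{T}:=\bigl[\begin{smallmatrix}-T&\uno-G^{*}\\\uno-G&-R\end{smallmatrix}\bigr]$, so that $\widehat H_{T}=H_{\mathbb{\Theta}_{T}}$ with resolvent \eqref{Kbb} --- obtained by inserting \eqref{res-c} into \eqref{res-T} and rearranging as in the proof of Theorem \ref{resolvent}, and still valid for the merely symmetric $T$ because the genuine parameter $\mathbb{\Sigma}_{\mathbb{\Theta}_{T}}\mathbb{G}_{z}$ keeps a bounded inverse for $z\in\varrho(H)\cap\varrho(\widehat H_{T})$. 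Running the proof of Theorem \ref{conv} again, \eqref{CS} is unchanged, and \eqref{CT} reduces --- after the $A_{n}RA_{n}^{*}$ occurring in $\widetilde{\mathbb{\Theta}}_{n}-\mathbb{\Sigma}_{n}R\mathbb{\Sigma}_{n}^{*}$ is combined with the counterterm $E_{n}$ --- to \eqref{Aenne} together with $\|A_{n}RA_{n}^{*}+E_{n}-T\|_{\dom(T),\H}\to0$, i.e. \eqref{Hy2}. The point that is now not automatic is \eqref{est}, $\sup_{n}\|(\widetilde{\mathbb{\Theta}}_{n}-\mathbb{\Sigma}_{n}R_{z_{\pm}}\mathbb{\Sigma}_{n}^{*})^{-1}\phi\|_{\X}<\infty$ for $\phi\in\X$, since $\dom(\widetilde{\mathbb{\Theta}}_{n})=\X$ is strictly larger than $\dom(\mathbb{\Theta}_{T})=\dom(T)\oplus\H$; this is exactly where \eqref{Hy1} is used --- the $n$-independent relative bound together with $\|(-\widetilde H_{n}+z)^{-1}\|\le|\text{Im}\,z|^{-1}$ gives, for $|\text{Im}\,z_{\pm}|$ large, a uniform resolvent bound for $H_{n}+E_{n}$, whence, reading \eqref{K-RK1} as $(-(H_{n}+E_{n})+z)^{-1}-R_{z}=(\mathbb{\Sigma}_{n}R_{\bar z})^{*}(\widetilde{\mathbb{\Theta}}_{n}-\mathbb{\Sigma}_{n}R_{z}\mathbb{\Sigma}_{n}^{*})^{-1}\mathbb{\Sigma}_{n}R_{z}$ and invoking the final argument of the proof of Theorem \ref{conv}, a uniform bound on $(\widetilde{\mathbb{\Theta}}_{n}-\mathbb{\Sigma}_{n}R_{z_{\pm}}\mathbb{\Sigma}_{n}^{*})^{-1}$. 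Theorem \ref{conv}, in this adapted form, then yields $H_{n}+E_{n}\to H_{\mathbb{\Theta}_{T}}=\widehat H_{T}$ in norm-resolvent sense.

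The main obstacle --- and the reason the counterterms $E_{n}$ and the balanced combination $A_{n}RA_{n}^{*}+E_{n}$ are indispensable --- is that $A_{n}RA_{n}^{*}$ diverges in $\B(\H)$ once the cutoff is removed, so it can never be treated as $\|A_{n}RA_{n}^{*}\|\cdot o(1)$: it has to be carried inside the parameter block (where, exactly as $\Sigma_{n}R\Sigma_{n}^{*}$ enters $\Theta_{n}$ in Theorem \ref{conv}, it is either cancelled outright in the first assertion, or paired with $E_{n}$ into $B_{n}=A_{n}RA_{n}^{*}+E_{n}$, which converges to $T$ by \eqref{Hy2}, in the second), and only this renormalized object, kept uniformly relatively bounded by \eqref{Hy1}, is ever passed to the limit. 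A last technical point is the endpoint $s=1/2$, where the hypothesis that $H$ be semibounded with $\mu=0$ guarantees, via Remark \ref{prev}, the real-point resolvent estimates underlying \eqref{res-c}.
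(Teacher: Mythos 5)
Your treatment of the first assertion follows the paper's route (block operators $\mathbb{\Sigma}_{n}$, $\mathbb{\Theta}_{n}$, Theorem \ref{TRK} plus Theorem \ref{conv}), but it has a gap exactly at the endpoint $s=\tfrac12$, which is the case the extra hypotheses (semiboundedness, $\mu=0$) are there for. Theorem \ref{TRK} requires $\lim_{|\gamma|\uparrow\infty}\|\mathbb{\Lambda}_{n}\mathbb{\Sigma}_{n}R_{i\gamma}\mathbb{\Sigma}_{n}^{*}\|<1$, and the dangerous entry is $A_{n}R_{i\gamma}A_{n}^{*}$: when $A_{n}$ is only $\Sob_{1/2}$-bounded one merely gets $\|A_{n}R_{i\gamma}A_{n}^{*}\|\le\|A_{n}\|^{2}_{\Sob_{1/2},\H}\|R_{i\gamma}\|_{\Sob_{-1/2},\Sob_{1/2}}$, with no decay in $\gamma$ (by \eqref{t-u} the exponent is $1-(t-u)=0$), so Theorem \ref{TRK} cannot simply be invoked as you do. The paper handles $0\le s<\tfrac12$ by Theorem \ref{TRK} and then gives a separate argument for $s=\tfrac12$: it uses $\mu=0$ and semiboundedness to choose $\lambda_{\circ}$ far from $\sigma(H)$ so that $\uno-G$, $\uno-G^{*}$ (hence $\uno-RA_{n}^{*}$, $\uno-A_{n}R$ for large $n$) are boundedly invertible, and derives \eqref{K-RK1} for $\widetilde H_{n}$ directly from the factorization $-\widetilde H_{n}+\lambda_{\circ}=(\uno-A_{n}R)(-H+\lambda_{\circ})(\uno-RA_{n}^{*})$ together with \cite[Theorem 2.19 and Remark 2.20]{CFP}. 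Your closing sentence about Remark \ref{prev} gestures at \eqref{res-c} for $\widehat H_{0}$, but does not supply this missing step for $\widetilde H_{n}$.

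For the second assertion your route is genuinely different from the paper's, and it does not go through as written. You want to rerun Theorem \ref{conv} with $\widetilde{\mathbb{\Theta}}_{n}$ and the limit parameter $\mathbb{\Theta}_{T}$, but Theorem \ref{conv} works with inverses and convergence in $\B(\H\oplus\H)$ (see \eqref{sufff} and the estimate on $\|(\Sigma_{\Theta}G_{z})^{-1}\|_{\H,\dom(\Theta)}$), whereas for an unbounded, merely symmetric $T$ the limiting inverse $(\mathbb{\Sigma}_{\mathbb{\Theta}_{T}}\mathbb{G}_{z})^{-1}$ is only available in $\B(\H\oplus\Sob_{1},\H\oplus\Sob_{-1})$ (Theorem \ref{sb}); the paper explicitly remarks that only the choice $T=0$ yields a $\B(\H\oplus\H)$ inverse without assuming $(A_{T}G_{z})^{-1}\in\B(\H)$. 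Moreover, your derivation of the \eqref{est}-type uniform bound is a non sequitur: a uniform resolvent bound for the self-adjoint $H_{n}+E_{n}$ is automatic ($\le|\gamma|^{-1}$) and, read through $(-(H_{n}+E_{n})+z)^{-1}-R_{z}=(\mathbb{\Sigma}_{n}R_{\bar z})^{*}X_{n}\mathbb{\Sigma}_{n}R_{z}$, it gives no control on the middle factor $X_{n}=(\widetilde{\mathbb{\Theta}}_{n}-\mathbb{\Sigma}_{n}R_{z}\mathbb{\Sigma}_{n}^{*})^{-1}$, since $\mathbb{\Sigma}_{n}R_{z}$ is not (uniformly) invertible; and the ``final argument of the proof of Theorem \ref{conv}'' you invoke is precisely the equal-domain case $\dom(\Theta_{n})=\dom(\Theta)$, which you yourself note does not hold here. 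The paper avoids all of this: it proves the second claim by a direct resolvent computation, writing $H_{n}+E_{n}=\widetilde H_{n}+T_{n}$ with $T_{n}:=A_{n}RA_{n}^{*}+E_{n}$ and $\widehat H_{T}=\widehat H_{0}+T$, using $\sup_{n}\|(\uno-(-\widetilde H_{n}\pm i\gamma)^{-1}T_{n})^{-1}\|\le(1-\widetilde a)^{-1}$ from \eqref{Hy1}, the convergence $\|T_{n}-T\|_{\D(T),\H}\to0$ from \eqref{Hy2} combined with the $\widehat H_{0}$-boundedness of $T$, and the already established \eqref{wdn}; this is the argument you need to substitute for your adapted Theorem \ref{conv}.
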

\begin{proof}  By Remark \ref{RR}, one has $\widehat H_{\mu}=H_{\mathbb{\Theta
}}$, where
$$
\mathbb{\Theta
}:=
\begin{bmatrix}-\mu&1-G^{*}\\1-G&-{R}
\end{bmatrix}\,.
$$
Let
$$
\mathbb{\Sigma}_{n}:{\H}\to\H\oplus\H\,,\qquad \mathbb{\Sigma}_{n}\psi:=A_{n}\psi\oplus\psi\,,
$$
and 
$$
\mathbb{\Theta
}_{n}:=
\begin{bmatrix}A_{n}{R}A^{*}_{n}-\mu&1\\1&0
\end{bmatrix}\,.
$$ 
Notice that, by $A_{n}\in\B(\Sob_{1/2},\H)$ and $R\in\B(\Sob_{-1/2},\Sob_{1/2})$, $A_{n}{R}A^{*}_{n}\in\B(\H)$; therefore $\Theta_{n}$ is bounded with bounded inverse given by 
\begin{align*}
\mathbb{\Lambda}_{n}:=\mathbb{\Theta
}_{n}^{-1}
=\begin{bmatrix}0&\uno\\\uno
&\mu-A_{n}{R}A^{*}_{n}
\end{bmatrix}
\end{align*}
and, by the Rellich-Kato theorem, $H+\mathbb{\Sigma}^{*}_{n}\mathbb{\Lambda}_{n}\mathbb{\Sigma}_{n}=\widetilde H_{n}+\mu$ is self-adjoint with domain $\dom(\widetilde H_{n})=\Sob_{1}$ ($A^{*}_{n}+A_{n}+A_{n}RA^{*}_{n}$ is symmetric since $A_{n}$ is closable).\par
If $0\le s<\frac12$, then, by \eqref{t-u} and
$$
\mathbb{\Sigma}_{n}R_{z}\mathbb{\Sigma}_{n}^{*}=
\begin{bmatrix}A_{n}{R_{z}}A^{*}_{n}&A_{n}R_{z}\\R_{z}A_{n}^{*}&R_{z}
\end{bmatrix}\,,
$$
one gets $\|\mathbb{\Lambda}_{n}\mathbb{\Sigma}_{n}R_{\pm i\gamma}\mathbb{\Sigma}_{n}^{*}\|_{\H\oplus\H, \H\oplus\H}\to 0$ as $|\gamma|\uparrow\infty$; so,  by Theorem \ref{TRK},  $\widetilde H_{n}+\mu$ has resolvent given by formula \eqref{K-RK1}. \par 
Suppose now $s=\frac12$, $H$ semi-bounded and $\mu=0$. Since $(A_{n}R)^{*}$ and $A_{n}R$ norm converge to $G$ and $G^{*}$ respectively and, by Remark \ref{prev},  $\uno-G$ and $\uno-G^{*}$ have bounded inverses whenever $\lambda_{\circ}$ is chosen sufficiently far away from $\sigma(H)$ (see next Remark \ref{LcL} as regard the freedom to choose the value of $\lambda_{\circ}$), $\uno-RA^{*}_{n}$ and $\uno-A_{n }R$ have bounded inverses as well whenever $n$ is sufficiently large. Hence, by the relation $$-\widetilde H_{n}+\lambda_{\circ}=(\uno-A_{n}R)(-H+\lambda_{\circ})(\uno-RA^{*}_{n})\,,
$$
one gets
\begin{align*}
&(-H+\lambda_{\circ} )^{-1}=(\uno-RA_{n}^{*})(-\widetilde H_{n}+\lambda_{\circ} )^{-1}(\uno-A_{n}R)\\
=&(-\widetilde H_{n}+\lambda_{\circ} )^{-1}-\begin{bmatrix}RA_{n}^{*}&{R}
\end{bmatrix}\begin{bmatrix}(-\widetilde H_{n}+\lambda_{\circ} )^{-1}
 &(\uno-RA_{n}^{*})^{-1}\\(\uno-A_{n}R)^{-1}&0
\end{bmatrix}\begin{bmatrix}A_{n}R\\{R}\end{bmatrix}\\
=&(-\widetilde H_{n}+\lambda_{\circ} )^{-1}-\begin{bmatrix}RA_{n}^{*}&{R}
\end{bmatrix}\begin{bmatrix}0
 &\uno-A_{n}R\\\uno-RA_{n}^{*}&{-R}
\end{bmatrix}^{-1}\begin{bmatrix}A_{n}R\\{R}
\end{bmatrix}\\
=&(-\widetilde H_{n}+\lambda_{\circ} )^{-1}-\begin{bmatrix}RA_{n}^{*}&{R}
\end{bmatrix}\left(\begin{bmatrix}A_{n}RA_{n}^{*}
 &\uno\\\uno&0
\end{bmatrix}-\begin{bmatrix}A_{n}RA_{n}^{*}
 &A_{n}R\\ RA_{n}^{*}&{R}
\end{bmatrix}\right)^{-1}\begin{bmatrix}A_{n}R\\{R}
\end{bmatrix}\\
=&(-\widetilde H_{n}+\lambda_{\circ} )^{-1}-R{\mathbb\Sigma}_{n} \left({\mathbb\Theta}_{n}-{\mathbb\Sigma}_{n} R{\mathbb\Sigma}_{n}^{*}\right)^{-1}{\mathbb\Sigma}_{n} R\,.
\end{align*}
This, together with \cite[Theorem 2.19 and Remark 2.20]{CFP}, gives the resolvent formula \eqref{K-RK1} for $\widetilde H_{n}$.  \par
Once we get formula \eqref{K-RK1} for $(-\widetilde H_{n}+z )^{-1}$ and for any $s\in[0,\frac12]$, since 
$$
\lim_{n\uparrow\infty}\|\mathbb{\Sigma}-\mathbb{\Sigma}_{n}\|_{\Sob_{1},\H\oplus\H}
=\lim_{n\uparrow\infty}\|A-A_{n}\|_{\Sob_{1},\H}=0\,, 
$$
$$
\lim_{n\uparrow\infty}\|(\mathbb{\Theta}_{n}-\mathbb{\Sigma}_{n}{R}\mathbb{\Sigma}_{n}^{*})-\mathbb{\Theta}\|_{\H\oplus\H,\H\oplus\H}
=\lim_{n\uparrow\infty}\left\|\begin{bmatrix}0&G^{*}-A_{n}{R}\\G-{R}A^{*}_{n}
&0
\end{bmatrix}\right\|_{\H\oplus\H,\H\oplus\H}=0\,,
$$
and $\dom({\mathbb{\Theta}}_{n})=\dom({\mathbb{\Theta}})=\H\oplus\H$,  by Theorem \ref{conv}, one gets
$$\lim_{n\uparrow\infty}\,(\widetilde H_{n}+\mu
)=\lim_{n\uparrow\infty}\,(H+\mathbb{\Sigma}^{*}_{n}\mathbb{\Lambda}_{n}\mathbb{\Sigma}_{n})=H_{\mathbb{\Theta
}}=\widehat H_{\mu}\quad \text{in norm-resolvent sense.}
$$
Equivalently, 
\be\label{NC}
\lim_{n\uparrow\infty}\,\widetilde H_{n}=\widehat H_{0}\quad \text{in norm-resolvent sense.}
\ee
Now, let us consider the relations, which hold for $\gamma\in\RE$, $|\gamma|$ sufficiently large,  
$$
(-(H_{n}+E_{n})\pm i\gamma)^{-1}=(-(\widetilde H_{n}+T_{n})\pm i\gamma)^{-1}=(\uno-(-\widetilde H_{n}\pm i\gamma)^{-1}T_{n})^{-1}(-\widetilde H_{n}\pm i\gamma)^{-1}\,,
$$
where $T_{n}:=A_{n}{R}A^{*}_{n}+E_{n}$, and
$$
(-(\widehat H_{0}+T)\pm i\gamma)^{-1}=(-\widehat H_{0}\pm i\gamma)^{-1}(\uno-T(-\widehat H_{0}\pm i\gamma)^{-1})^{-1}\,.
$$
We also use the relation
$$
(-\widetilde H_{n}\pm i\gamma)^{-1}-(\widehat H_{0}\pm i\gamma)^{-1}\\
=\lfloor(-\widetilde H_{n}\pm i\gamma)^{-1}\widetilde H_{n}\rceil(-\widehat H_{0}\pm i\gamma)^{-1}
-(-\widetilde H_{n}\pm i\gamma)^{-1}\widehat H_{0}(-\widehat H_{0}\pm i\gamma)^{-1}
$$
(here and below we use the brackets $\lfloor...\rceil$ to group maps which provide bounded operators defined on the whole $\H$). Therefore one gets 
\begin{align*}
&(-(H_{n}+E_{n})\pm i\gamma)^{-1}-(-(\widehat H_{0}+T)\pm i\gamma)^{-1}\\
=&\lfloor (-(\widetilde H_{n}+T_{n})\pm i\gamma)^{-1}(\widetilde H_{n}+T_{n})\rceil  (-(\widehat H_{0}+T)\pm i\gamma)^{-1}\\
-&(-(\widetilde H_{n}+T_{n})\pm i\gamma)^{-1}(\widehat H_{0}+T)(-(\widehat H_{0}+T)\pm i\gamma)^{-1}\\
=&(\uno-(-\widetilde H_{n}\pm i\gamma)^{-1}T_{n})^{-1}\lfloor (-\widetilde H_{n}\pm i\gamma)^{-1}(\widetilde H_{n}+T_{n})\rceil  (-\widehat H_{0}\pm i\gamma)^{-1}(\uno-T(-\widehat H_{0}\pm i\gamma)^{-1})^{-1}\\
-&(\uno-(-\widetilde H_{n}\pm i\gamma)^{-1}T_{n})^{-1}(-\widetilde H_{n}\pm i\gamma)^{-1}(\widehat H_{0}+T)(-\widehat H_{0}\pm i\gamma)^{-1}(\uno-T(-\widehat H_{0}\pm i\gamma)^{-1})^{-1}\\
=&(\uno-(-\widetilde H_{n}\pm i\gamma)^{-1}T_{n})^{-1}
\big((-\widetilde H_{n}\pm i\gamma)^{-1}-(-\widehat H_{0}\pm i\gamma)^{-1}\big)
(\uno-T(-\widehat H_{0}\pm i\gamma)^{-1})^{-1}\\
&+(-(\widetilde H_{n}+T_{n})\pm i\gamma)^{-1}(T_{n}-T)(-(\widehat H_{0}+T)\pm i\gamma)^{-1}
\end{align*}
and so,
\begin{align*}
&\|(-(H_{n}+E_{n})\pm i\gamma )^{-1}-(-(\widehat H_{0}+T)\pm i\gamma )^{-1}\|_{\H, \H}\\
\le& \|(\uno-T(-\widehat H_{0}\pm i\gamma )^{-1})^{-1}\|_{\H, \H}\| (\uno-(-\widetilde H_{n}\pm i\gamma )^{-1}T_{n})^{-1}\|_{\H, \H}\,\times\\
&\times\|(-\widetilde H_{n}\pm i\gamma )^{-1}-(-\widehat H_{0}\pm i\gamma )^{-1} \|_{\H, \H}\\
&+\frac1{|\gamma|}\ \|(T_{n}-T)(-(\widehat H_{0}+T)\pm i\gamma )^{-1}\|_{\H, \H}\,.
\end{align*}
By \eqref{Hy1},
$$
\sup_{n\ge 1}\|(\uno-(-\widetilde H_{n}\pm i\gamma )^{-1}T_{n})^{-1}\|_{\H, \H}\le \frac1{1-\widetilde a}
$$
and, since $T$ is $\widehat H_{0}$-bounded,
$$
\|(-\widehat H_{0}\pm i\gamma )^{-1}\|_{\H, \D(T)}\le 
\|T(-\widehat H_{0}\pm i\gamma )^{-1}\|_{\H, \H}+
(\|(-\widehat H_{0}\pm i\gamma )^{-1}\|_{\H, \H}<+\infty\,.
$$
Then, by \eqref{Hy2},
\begin{align*}
&\lim_{n\uparrow\infty}\|(T_{n}-T)(-(\widehat H_{0}+T)\pm i\gamma )^{-1}\|_{\H, \H}\\
\le& \|(-(\widehat H_{0}+T)\pm i\gamma )^{-1}\|_{\H, \D(T)}\lim_{n\uparrow\infty}\|T_{n}-T\|_{\D(T),\H}\\
\le& 
\|(\uno-T(-\widehat H_{0}\pm i\gamma )^{-1})^{-1}\|_{\H, \H}
\|(-\widehat H_{0}\pm i\gamma )^{-1}\|_{\H, \D(T)}
\lim_{n\uparrow\infty}\|T_{n}-T\|_{\D(T),\H}
=0\,.
\end{align*}
Hence,  by \eqref{NC}, the sequence $H_{n}+E_{n}$  converges in norm-resolvent sense to $\widehat H_{T}$  as $n\uparrow\infty$.
\end{proof}
\begin{remark}\label{suit} Previous Theorem \ref{bb} suggests that if the sequence $A_{n}{R}A_{n}^{*}$ were convergent then one could take $E_{n}=0$ and $T=AG\equiv A{R}A^{*}$. However $A{R}A^{*}$ is ill-defined in presence of strongly singular interactions and  $E_{n}$'s role is to compensate the divergence of $A_{n}{R}A_{n}^{*}$ as $n\to+\infty$ so that $A_{n}{R}A_{n}^{*}+E_{n}$ converges to some regularized version of $A{R}A^{*}$; see next subsection for the case of quantum fields models.
\end{remark}
\begin{remark}\label{Hz} Suppose that the operator $\widehat H_{0}$ is self-adjoint with resolvent given by \eqref{res-c} for some $\mu\in\RE$ and let $A_{n}\in\B(\H)$ defined by $A_{n}:=niAR_{ni}$, where $A\in\B(\Sob_{1},\H)$ satisfies the hypotheses in Lemma  \ref{sl}. Since $R_{z}A^{*}_{n}$ and $A_{n }R_{z}$ norm converge to $G_{z}$ and $G^{*}_{{\bar z} }$ respectively and since $\uno-G_{\pm i\gamma}$ and $\uno-G^{*}_{\mp i\gamma} $ have bounded inverses whenever  $|\gamma| \gg 1$ (see Lemma \ref{sl}), $\uno-R_{\pm i\gamma}A^{*}_{n}$ and $\uno-A_{n }R_{\pm i\gamma}$ have bounded inverses as well whenever $n$ is sufficiently large; moreover $(\uno-R_{\pm i\gamma}A^{*}_{n})^{-1}$ and $(\uno-A_{n }R_{\pm i\gamma})^{-1}$ norm converge to $(\uno-G_{\pm i\gamma})^{-1}$ and $(\uno-G^{*}_{\mp i\gamma} )^{-1}$ respectively. Hence 
\begin{align}\label{He}
&\lim_{n\uparrow\infty}\|(\uno-R_{\pm i\gamma}A^{*}_{n})^{-1}R_{\pm i\gamma }(\uno-A_{n}R_{\pm i\gamma})^{-1}-
(\uno-G_{\pm i\gamma})^{-1}R_{\pm i\gamma }(\uno-G^{*}_{\mp i\gamma} )^{-1}\|_{\H, \H}=0\,.
\end{align}
Since
$$
(\uno-A_{n}R_{z})(-H+z)(\uno-R_{z}A^{*}_{n})=
(-\widetilde H_{n}+z)+(\lambda_{\circ} -z)A_{n}{R}R_{z}A^{*}_{n}\,,
$$
one has 
$$
(-\widetilde H_{n}\pm i\gamma )^{-1}=\left((\uno-A_{n}R_{\pm i\gamma})(-H+z)(\uno-R_{\pm i\gamma} A^{*}_{n})+(\pm i\gamma-\lambda_{\circ} )A_{n}{R}R_{\pm i\gamma }A^{*}_{n}\right)^{-1}
$$
and so, by \eqref{He} and \eqref{wdn}, one gets 
$$
(-\widehat H_{0}\pm i\gamma )^{-1}=\left((\uno-G^{*}_{\mp i\gamma} )(-H\pm i\gamma )(\uno-G_{\pm i\gamma} )+(\pm i\gamma-\lambda_{\circ})G^{*}G_{\pm i\gamma} \right)^{-1}\,.
$$
Hence 
$$
-\widehat H_{0}\pm i\gamma =(\uno-G^{*}_{\mp i\gamma} )(-H\pm i\gamma )(\uno-G_{\pm i\gamma} )+(\pm i\gamma-\lambda_{\circ})G^{*}G_{\pm i\gamma}
$$
which, by \eqref{RG}, is equivalent to (compare with \cite[equation (15)]{LS})
\be\label{Rl}
-\widehat H_{0}+\lambda_{\circ} =(\uno-G^{*})(-H+\lambda_{\circ} )(\uno-G)\,.
\ee
\end{remark}
\vskip8pt
Our next aim is to show that the two resolvent formulae \eqref{res-T} and  \eqref{KF} (equivalently \eqref{Kbb}) coincide. At first, let us come back to Remark \ref{RR}: the map $\mathbb{\Sigma}\psi:=A\psi\oplus\psi$ there obviously belongs to $\B(\Sob_{1},\H\oplus\Sob_{1})$; hence, using the $\Sob_{-1}$-$\Sob_{1}$ duality induced by the dense embeddings $\Sob_{1}\hookrightarrow \H \hookrightarrow \Sob_{-1}$ (i.e., by the pairing $\langle\cdot,\cdot\rangle_{-1,1}$ defined in \eqref{pairing}), one gets the bounded operator
$$
\mathbb{G}_{z}:\H\oplus\Sob_{-1}\to\H\,,\qquad \mathbb{G}_{z}:=(\mathbb{\Sigma}R_{{\bar z} }\!)^{*}\,.
$$
This also gives $\mathbb{G}^{*}_{z}\in\B(\H,\H\oplus\Sob_{1})$ and so \eqref{Kbb} is well defined whenever  $(\mathbb{\Sigma}_{\mathbb{\Theta
}_{T}}\mathbb{G}_{z})^{-1}\in\B(\H\oplus\Sob_{1},\H\oplus\Sob_{-1})$, where now 
${\mathbb{\Theta}_{T}}$ is regarded as an operator from  $\H\oplus\Sob_{-1}$ to $\H\oplus\Sob_{1}$. Let us remark that in this setting \eqref{Kbb} still conforms with the framework in \cite{P01} (also see \cite{JMPA}, \cite{CFP}); indeed there the dual couple $\X^{*}$-$\X$ (here given by $(\H\oplus\Sob_{-1})$-$(\H\oplus\Sob_{1})$) comes into play. \par 
Because, by \eqref{Rl}, $(\uno-G)\phi\in\Sob_{1}$ whenever $\psi\in\D(\widehat H_{0})$ and supposing that $\D(\widehat H_{0})\subseteq\D(T)$, the block operator matrix
\be\label{TT}
{\mathbb{\Theta}_{T}}=\begin{bmatrix}-{T}
 &\uno-G^{*}\\\uno-G&-{R}
\end{bmatrix}:\D(\widehat H_{0})\oplus\H\subseteq\H\oplus\Sob_{-1}\to\H\oplus\Sob_{1}
\ee
is well defined. Analogously 
$$
\mathbb{\Sigma}_{\mathbb{\Theta
}_{T}}\mathbb{G}_{z}
=\begin{bmatrix} A_{T}G_{z}
 &G^{*}_{{\bar z} }-\uno\\G_{z}-\uno&{R_{z}}
\end{bmatrix}: \D(\widehat H_{0})\oplus\H\subseteq\H\oplus\Sob_{-1}\to\H\oplus\Sob_{1}
$$
is well defined as well. Since the unbounded operator $-R:\H\subseteq\Sob_{-1}\to\Sob_{1}$ has the bounded inverse $H-\lambda_{\circ} :\Sob_{1}\to\H$,  by \eqref{Rl} and the first Schur complement, the candidate for the inverse of ${\mathbb{\Theta}_{T}}$ is  
\begin{align}\label{IT}
&\begin{bmatrix} \wR_{T}
 &\wR_{T}(-\widehat H_{0}+\lambda_{\circ} )(\uno-G)^{-1}\\ (\uno-G^{*})^{-1}(-\widehat H_{0}+\lambda_{\circ} )\wR_{T}&(\uno-G^{*})^{-1}(-\widehat H_{0}+\lambda_{\circ} )\wR_{T}(-\widehat H_{0}+\lambda_{\circ} )(\uno-G)^{-1}-(-H+\lambda_{\circ} )
\end{bmatrix}
\nonumber
\\
=&\begin{bmatrix} \wR_{0}(\uno-T\wR_{0})^{-1}
 &(\uno-\wR_{0}T)^{-1}(\uno-G)^{-1}\\ (\uno-G^{*})^{-1}(\uno-T\wR_{0})^{-1}&\big((\uno-G^{*})^{-1}(\uno-T\wR_{0})^{-1}(\uno-G^{*})-\uno\big)(-H+\lambda_{\circ} )
\end{bmatrix}\,,
\end{align}
where for brevity we set 
$$\wR_{T}:=(-\widehat H_{T}+\lambda_{\circ} )^{-1}
=\wR_{0}(\uno-T\wR_{0})^{-1}=(\uno-\wR_{0}T)^{-1}\wR_{0}\,.
$$
In the following, as regards ${\mathbb \Theta}_{T}$, we use the notion of self-adjointness for a linear operator acting between dual pairs: a densely defined $L:\dom(L)\subseteq\H\oplus\Sob_{-1}\to\H\oplus\Sob_{1}$ is said to be self-adjoint whenever $L^{*}=L$, where $L^{*}:\dom(L^{*})\subseteq\H\oplus\Sob_{-1}\to\H\oplus\Sob_{1}$ is the dual of $L$ with respect to the duality induced by the pairing $\langle\cdot,\cdot\rangle_{-1,1}$. 
\vskip8pt
If $H$ is semibounded then Theorem \ref{alo} conforms with Theorem \ref{resolvent}:
\begin{theorem}\label{sb} Let  $H$ be semibounded; let $A\in\B(\Sob_{s},\H)$ for some $0<s<1$ and such that both $\ker(A|\Sob_{1})$ and $\ran(A|\Sob_{1})$ are dense in $\H$. 
Then $$\widehat H_{0}:=\overline{H}+A^{*}+A_{0}$$ is self-adjoint, semibounded with 
domain 
$$\D(\widehat H_{0})=\{\psi\in\Sob_{1-s}:\psi-G\psi\in\Sob_{1}\}$$
and resolvent given, for any $z\in\varrho(H)\cap \varrho (\widehat H_{0})$, by 
\be\label{res-sb}
(-\widehat H_{0}+z)^{-1}
=(-H+z)^{-1}-\begin{bmatrix}G_{z}&R_{z}\end{bmatrix}\begin{bmatrix}A_{0} G_{z}&G^{*}_{{\bar z} }-\uno\\G_{z}-\uno&R_{z}
\end{bmatrix}^{-1}\begin{bmatrix}G_{{\bar z} }^{*}\\R_{z}\end{bmatrix}\,.
\ee
If ${T}:\D(T)\subseteq\H\to\H$, $\D(T)\supseteq\D(\widehat H_{0})$, is symmetric and $\widehat H_{0}$-bounded with relative bound ${\widehat a}<1$ then $$\widehat H_{{T}
}:=\overline{H}+A^{*}+A_{{T}}$$ is self-adjoint and semibounded, with domain 
$\D(\widehat H_{{T}
})=\D(\widehat H_{0})$ 
and resolvent given, for any $z\in\varrho(H)\cap \varrho (\widehat H_{T})$, by
\begin{align}\label{KF-sb}
(-\widehat H_{{T}
}+z)^{-1}
=&(-H+z)^{-1}-
\begin{bmatrix}G_{z}&R_{z}\end{bmatrix}\begin{bmatrix}A_{{T}
} G_{z}&G^{*}_{{\bar z} }-\uno\\G_{z}-\uno&R_{z}
\end{bmatrix}^{\!-1}\begin{bmatrix}G_{{\bar z} }^{*}\\R_{z}\end{bmatrix}\,.
\end{align}
The block operator matrix inverse in \eqref{KF-sb} exists as a bounded operator from $\H\oplus\Sob_{1}$ to $\H\oplus\Sob_{-1}$; whenever $T=0$,  the same inverse exists also as a bounded operator in $\H\oplus\H$.
\end{theorem}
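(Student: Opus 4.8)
The plan is to obtain the statement by combining Theorem~\ref{alo} with the factorised identity \eqref{Rl}, and then to upgrade the $\mu$-shifted resolvent formulae \eqref{res-c}--\eqref{res-T} to the unshifted form \eqref{res-sb}--\eqref{KF-sb} by working with the block operator matrix $\mathbb{\Theta}_{T}$ of \eqref{TT} acting between a dual pair. For the self-adjointness of $\widehat H_{0}=\overline H+A^{*}+A_{0}$ and of $\widehat H_{T}=\overline H+A^{*}+A_{T}$, and for the domain identities $\D(\widehat H_{T})=\D(\widehat H_{0})=\{\psi\in\Sob_{1-s}:\psi-G\psi\in\Sob_{1}\}$, nothing new is needed: these are the conclusions of Theorem~\ref{alo}, whose hypotheses are assumed here (in particular $0<s<1$ suffices).

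For semiboundedness, since $H$ is semibounded I would use Remark~\ref{prev} together with the estimates in the proof of Lemma~\ref{sl} to pick $\lambda_{\circ}\in\RE$ so far from $\sigma(H)$ that $\|G_{\lambda_{\circ}}\|_{\H,\H}<1$ and $\|G^{*}_{\lambda_{\circ}}\|_{\H,\H}<1$; then $\uno-G\equiv\uno-G_{\lambda_{\circ}}$ is a bounded bijection of $\H$ with bounded inverse. The factorisation \eqref{Rl}, available at this real point $\lambda_{\circ}$ by Remark~\ref{Hz}, gives for every $\psi\in\D(\widehat H_{0})$, with $\psi_{0}:=(\uno-G)\psi\in\Sob_{1}$, the identity $\langle(-\widehat H_{0}+\lambda_{\circ})\psi,\psi\rangle=\langle(-H+\lambda_{\circ})\psi_{0},\psi_{0}\rangle$; the right-hand side has a fixed sign and $|\langle(-\widehat H_{0}+\lambda_{\circ})\psi,\psi\rangle|\ge\mathrm{dist}(\lambda_{\circ},\sigma(H))\,\|\psi_{0}\|^{2}\ge\mathrm{dist}(\lambda_{\circ},\sigma(H))\,\|(\uno-G)^{-1}\|^{-2}_{\H,\H}\,\|\psi\|^{2}$, so $\widehat H_{0}$ is semibounded (and $\lambda_{\circ}\notin\sigma(\widehat H_{0})$). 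Then $\widehat H_{T}=\widehat H_{0}+T$ is semibounded by the Kato--Rellich theorem, $T$ being symmetric and $\widehat H_{0}$-bounded with relative bound $<1$.

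It remains to prove \eqref{res-sb}, \eqref{KF-sb} and the mapping properties. Following the set-up preceding the statement I would work in the dual pair $\H\oplus\Sob_{-1}$, $\H\oplus\Sob_{1}$ with $\mathbb{\Sigma}\psi:=A\psi\oplus\psi\in\B(\Sob_{1},\H\oplus\Sob_{1})$, $\mathbb{G}_{z}:=(\mathbb{\Sigma}R_{{\bar z}})^{*}$, and write $\mathbb{M}_{z}:=\mathbb{\Sigma}(\mathbb{G}-\mathbb{G}_{z})$, so that $\mathbb{\Sigma}_{\mathbb{\Theta}_{T}}\mathbb{G}_{z}=-(\mathbb{\Theta}_{T}+\mathbb{M}_{z})$ as in Remark~\ref{RR}. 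The first task is to verify that $\mathbb{\Theta}_{T}$ of \eqref{TT} is self-adjoint in the dual-pair sense and has the bounded inverse \eqref{IT}; this is a Schur-complement computation relative to the $(2,2)$-block $-R:\H\subseteq\Sob_{-1}\to\Sob_{1}$, a bijection with bounded inverse $H-\lambda_{\circ}:\Sob_{1}\to\H$, for which, by \eqref{Rl}, the Schur complement is $-T-(\uno-G^{*})(-R)^{-1}(\uno-G)=-\widehat H_{T}+\lambda_{\circ}$ with bounded inverse $\wR_{T}$; substituting back reproduces \eqref{IT}. With $\mathbb{\Theta}_{T}^{-1}$ in hand, the dual-couple formulation of the Kre\u\i n resolvent formula (the version of Theorem~\ref{KFt} for self-adjoint operators between a dual pair, as in \cite{P01}, \cite{CFP}, \cite{JMPA}) identifies $\widehat H_{T}$ with $H_{\mathbb{\Theta}_{T}}$ and yields \eqref{Kbb}, i.e.\ \eqref{KF-sb}, for all $z\in\varrho(H)\cap\varrho(\widehat H_{T})$, together with the boundedness, from $\H\oplus\Sob_{1}$ to $\H\oplus\Sob_{-1}$, of $(\mathbb{\Sigma}_{\mathbb{\Theta}_{T}}\mathbb{G}_{z})^{-1}=-(\mathbb{\Theta}_{T}+\mathbb{M}_{z})^{-1}$. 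Equivalently one may check \eqref{KF-sb} directly: at $z=\lambda_{\circ}$ it becomes the analogue $\wR_{T}=R+\mathbb{G}\mathbb{\Theta}_{T}^{-1}\mathbb{G}^{*}$ of \eqref{Rl0} (verified from \eqref{IT} and \eqref{Rl}); for $z$ near $\lambda_{\circ}$ one gets $(\mathbb{\Sigma}_{\mathbb{\Theta}_{T}}\mathbb{G}_{z})^{-1}=-(\uno+\mathbb{\Theta}_{T}^{-1}\mathbb{M}_{z})^{-1}\mathbb{\Theta}_{T}^{-1}$ by a Neumann series since $\|\mathbb{\Theta}_{T}^{-1}\mathbb{M}_{z}\|\to0$ as $z\to\lambda_{\circ}$; and \eqref{KF-sb} extends to all of $\varrho(H)\cap\varrho(\widehat H_{T})$ because its right-hand side is a pseudo-resolvent by \eqref{QF} and \cite[Theorem 2.19 and Remark 2.20]{CFP} applies, the second resolvent identity for $(\mathbb{\Theta}_{T}+\mathbb{M}_{z})^{-1}$ propagating the mapping property.

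For the case $T=0$ one reads off from \eqref{IT} that $\mathbb{\Theta}_{0}^{-1}$ has entries $\wR_{0}$, $(\uno-G)^{-1}$, $(\uno-G^{*})^{-1}$ and $0$, all bounded operators on $\H$ (here $\wR_{0}=(-\widehat H_{0}+\lambda_{\circ})^{-1}\in\B(\H)$ because $\widehat H_{0}$ is self-adjoint and $\lambda_{\circ}\notin\sigma(\widehat H_{0})$), so $\mathbb{\Theta}_{0}^{-1}\in\B(\H\oplus\H)$. Since $\mathbb{M}_{z}$ carries $\H\oplus\H$ into $\{A\chi\oplus\chi:\chi\in\Sob_{1}\}\subseteq\H\oplus\H$ with norm $O(|z-\lambda_{\circ}|)$, a Neumann series gives $(\mathbb{\Sigma}_{\mathbb{\Theta}_{0}}\mathbb{G}_{z})^{-1}=-(\mathbb{\Theta}_{0}+\mathbb{M}_{z})^{-1}\in\B(\H\oplus\H)$ near $\lambda_{\circ}$, and then for every $z\in\varrho(H)\cap\varrho(\widehat H_{0})$ by propagating along the resolvent identity $(\mathbb{\Theta}_{0}+\mathbb{M}_{z})^{-1}-(\mathbb{\Theta}_{0}+\mathbb{M}_{w})^{-1}=-(\mathbb{\Theta}_{0}+\mathbb{M}_{z})^{-1}(\mathbb{M}_{z}-\mathbb{M}_{w})(\mathbb{\Theta}_{0}+\mathbb{M}_{w})^{-1}$, which stays in $\B(\H\oplus\H)$ by \eqref{QF}. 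The step I expect to be the main obstacle is making the dual-pair bookkeeping rigorous: establishing \eqref{IT} as a genuine two-sided inverse of the unbounded operator $\mathbb{\Theta}_{T}$, which maps $\H\oplus\Sob_{-1}$ into the different space $\H\oplus\Sob_{1}$ (with the unbounded corners $-T$, $-R$ and the pairing $\langle\cdot,\cdot\rangle_{-1,1}$), and matching it with the dual-couple Kre\u\i n formula; once that is in place, the global validity of \eqref{KF-sb} and the $T=0$ refinement of the mapping property are routine.
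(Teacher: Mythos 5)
Your proposal follows essentially the same route as the paper: semiboundedness from the factorisation \eqref{Rl} plus Kato--Rellich, the inverse \eqref{IT} of $\mathbb{\Theta}_{T}$ obtained from the first Schur complement relative to the $(2,2)$-block $-R$ (the paper gets exactly this by citing \cite[Theorems 2.2.18 and 2.3.3]{Tre}), verification of \eqref{KF-sb} at $z=\lambda_{\circ}$, and extension to all $z\in\varrho(H)\cap\varrho(\widehat H_{T})$ via \cite[Theorem 2.19 and Remark 2.20]{CFP}; your ``direct check'' alternative is in fact the paper's argument. Two points are looser than what is needed. First, the self-adjointness of $\mathbb{\Theta}_{T}$ in the dual-pair sense is not a by-product of the Schur-complement computation (which only yields the two-sided bounded inverse): the paper obtains it, for $T=0$, from the symmetry of $\mathbb{G}\,\mathbb{\Theta}_{0}^{-1}\mathbb{G}^{*}=\wR_{0}-R$ together with the density of $\ran(\mathbb{G}^{*})$, which is precisely where the hypothesis that $\ran(A|\Sob_{1})$ is dense enters; this (or, equivalently, the adjoint relation $\big((\mathbb{\Sigma}_{\mathbb{\Theta}}\mathbb{G}_{z})^{-1}\big)^{*}=(\mathbb{\Sigma}_{\mathbb{\Theta}}\mathbb{G}_{{\bar z}})^{-1}$) has to be in place before the CFP extension argument can be invoked, and your appeal to an off-the-shelf ``dual-couple Kre\u\i n formula'' glosses over it. Second, your propagation of the $T=0$ mapping property through the identity $(\mathbb{\Theta}_{0}+\mathbb{M}_{z})^{-1}-(\mathbb{\Theta}_{0}+\mathbb{M}_{w})^{-1}=-(\mathbb{\Theta}_{0}+\mathbb{M}_{z})^{-1}(\mathbb{M}_{z}-\mathbb{M}_{w})(\mathbb{\Theta}_{0}+\mathbb{M}_{w})^{-1}$ is circular as written, since the unknown inverse reappears on the right composed with maps landing only in $\H\oplus\Sob_{-1}$; the clean fix, which is what the paper does, is to note that for $T=0$ both $\mathbb{\Theta}_{0}$ and $\mathbb{\Theta}_{0}^{-1}$ are bounded self-adjoint operators on $\H\oplus\H$ and to run the CFP argument directly in that space. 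With these two repairs (both within the toolkit you already cite) your write-up coincides with the paper's proof.
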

\begin{proof}
By \cite[Theorem 2.2.18]{Tre}, ${\mathbb \Theta}_{T}$ defined in \eqref{TT} it closed and, by \cite[Theorem 2.3.3]{Tre}, it has a bounded inverse ${\mathbb \Theta}_{T}^{-1}$ given by the block operator matrix in \eqref{IT}. \par
If $H$ is semibounded then, by \eqref{Rl},  $\widehat H_{0}$ and hence (by Rellich-Kato theorem) $\widehat H_{T}$ are  semibounded as well. \par 
By Remark \ref{prev} and \eqref{Rl}, taking $\lambda_{\circ} \in\RE$ sufficiently far away from $\sigma(H)$ in the definition \eqref{GL}, one has $\lambda_{\circ} \in\varrho(H)\cap\varrho(\widehat H_{0})$ and 
\be\label{resLS}
(-\widehat H_{0}+\lambda_{\circ} )^{-1}=(\uno-G)^{-1}(-H+\lambda_{\circ} )^{-1}(\uno-G^{*})^{-1}\,,
\ee
i.e.,
\begin{align*}
&(-H+\lambda_{\circ} )^{-1}=(\uno-G)(-\widehat H_{0}+\lambda_{\circ} )^{-1}(\uno-G^{*})\\
=&(-\widehat H_{0}+\lambda_{\circ} )^{-1}-\begin{bmatrix}G&{R}
\end{bmatrix}\begin{bmatrix}(-\widehat H_{0}+\lambda_{\circ} )^{-1}
 &(\uno-G)^{-1}\\(\uno-G^{*})^{-1}&0
\end{bmatrix}\begin{bmatrix}G^{*}\\{R}\end{bmatrix}\\
=&(-\widehat H_{0}+\lambda_{\circ} )^{-1}-\begin{bmatrix}G&{R}
\end{bmatrix}\begin{bmatrix}0
 &\uno-G^{*}\\\uno-G&{-R}
\end{bmatrix}^{-1}\begin{bmatrix}G^{*}\\{R}
\end{bmatrix}\,.
\end{align*}
This gives the resolvent formula 
\be\label{res-L}
(-\widehat H_{0}+\lambda_{\circ} )^{-1}=(-H+\lambda_{\circ} )^{-1}+\mathbb{G}\,\mathbb{\Theta
}_{0}^{-1}\mathbb{G}^{*}\,.
\ee
Therefore ${\mathbb G}\,{\mathbb \Theta}_{0}^{-1}{\mathbb G}^{*}=\widehat R_{0}-R$ is symmetric. By $\ran(A|\Sob_{1})$ dense, $\ran(G^{*})=\ran(AR)$ is dense and so $\ran({\mathbb G}^{*})=\ran(G^{*})\oplus\Sob_{1}$ is dense as well. Thus ${\mathbb \Theta}_{0}^{-1}$ is symmetric (both as an operator in $\H\oplus\H$ and as an operator from $\H\oplus\Sob_{1}$ to $\H\oplus\Sob_{-1}$); hence it is self-adjoint since bounded. So, by \cite[Theorem 5.30, Chap. III]{K}, ${\mathbb \Theta}_{0}$ is self-adjoint. Then, since ${\mathbb \Sigma}_{{\mathbb \Theta}_{0}}{\mathbb G}_{z}=-({\mathbb \Theta}_{0}+{\mathbb\Sigma}({\mathbb G}-{\mathbb G}_{z}))$ and $({\mathbb\Sigma}({\mathbb G}-{\mathbb G}_{z}))^{*}=
{\mathbb\Sigma}({\mathbb G}-{\mathbb G}_{{\bar z} })$, by \cite[Theorem 5.30, Chap. III]{K} again,$(({\mathbb \Sigma}_{{\mathbb \Theta}_{0}}{\mathbb G}_{z}^{-1})^{*}=({\mathbb \Sigma}_{{\mathbb \Theta}_{0}}{\mathbb G}_{{\bar z} })^{-1}$ for any complex conjugate couple for which the inverses exist.
Therefore, by \cite[Theorem 2.19 and Remark 2.20]{CFP}, the existence of the bounded inverse ${\mathbb \Theta}_{0}^{-1}$ implies that the resolvent formula 
$$
(-\widehat H_{0}+z)^{-1}=(-H+z)^{-1}-\mathbb{G}_{z}\,({\mathbb \Sigma}_{{\mathbb\Theta}_{0}}{\mathbb G_{z}})^{-1}\mathbb{G}_{{\bar z} }^{*}
$$
holds for any $z\in \varrho(H)\cap\varrho(\widehat H_{0})$. The latter is equivalent to 
\eqref{res-sb}.\par
By the same kind of reasonings as above,  to prove the resolvent formula \eqref{KF-sb} it suffices to show that it holds in the case $z=\lambda_{\circ} $, i.e., that 
\be\label{tt}
(-\widehat H_{T}+\lambda_{\circ} )^{-1}=R+{\mathbb G}\,{\mathbb \Theta}_{T}^{-1}{\mathbb G}^{*}\,.
\ee
By \eqref{IT}, \eqref{Rl} and \eqref{res-T}, one gets
\begin{align*}
&R+{\mathbb G}\,{\mathbb \Theta}_{T}^{-1}{\mathbb G}^{*}=R-
\begin{bmatrix}G&R\end{bmatrix}\begin{bmatrix} 0&0\\0&-H+\lambda_{\circ} 
\end{bmatrix}\begin{bmatrix}G^{*}\\R\end{bmatrix}\\
&+\begin{bmatrix}G&R\end{bmatrix}\begin{bmatrix} \wR_{0}(\uno-T\wR_{0})^{-1}
 &(\uno-\wR_{0}T)^{-1}(\uno-G)^{-1}\\ (\uno-G^{*})^{-1}(\uno-T\wR_{0})^{-1}&(\uno-G^{*})^{-1}(\uno-T\wR_{0})^{-1}(\uno-G^{*})(-H+\lambda_{\circ} )
\end{bmatrix}\begin{bmatrix}G^{*}\\R\end{bmatrix}\\
=&\begin{bmatrix}G&R\end{bmatrix}\begin{bmatrix} \wR_{0}(\uno-T\wR_{0})^{-1}
 &(\uno-\wR_{0}T)^{-1}\wR_{0}\\ (\uno-G^{*})^{-1}(\uno-T\wR_{0})^{-1}&(\uno-G^{*})^{-1}(\uno-T\wR_{0})^{-1}
\end{bmatrix}\begin{bmatrix}G^{*}\\1-G^{*}\end{bmatrix}\\
=&\begin{bmatrix}G&R\end{bmatrix}\begin{bmatrix} \wR_{0}(\uno-T\wR_{0})^{-1}\\(\uno-G^{*})^{-1}(\uno-T\wR_{0})^{-1}\end{bmatrix}=\big( G\wR_{0}+R(\uno-G^{*})^{-1}\big)(\uno-T\wR_{0})^{-1}\\
=&\wR_{0}(\uno-T\wR_{0})^{-1}=
(-\widehat H_{0}+z)^{-1}+(-\widehat H_{0}+z)^{-1}(1-{T}
(-\widehat H_{0}+z)^{-1})^{-1}{T}
(-\widehat H_{0}+z)^{-1}\\
=&(-\widehat H_{T}+\lambda_{\circ} )^{-1}\,.
\end{align*}
\end{proof}
\begin{remark}\label{LcL} In the proof of Theorem \ref{sb} we took $|\lambda_{\circ}|$ sufficiently large so that $\uno-G$ has a bounded inverse; hence the validity such a theorem seems to depend on the value of  $\lambda_{\circ}$. This is not the case. Indeed, given $\lambda$ such that $\uno-G_{\lambda}$ has a bounded inverse, let $A^{(\lambda)}_{T}G_{z}:=T+A(G_{\lambda}-G_{z})$; then 
$A^{(\lambda)}_{T}G_{z}=A_{T+T_{\lambda}}G_{z}$, where $T_{\lambda}:=A(G-G_{\lambda})=(\lambda-\lambda_{\circ})G^{*}G_{\lambda}=(\lambda-\lambda_{\circ})G_{\lambda}^{*}G\in\B(\H)$ and so, by Theorem \ref{sb}, 
$\widehat H^{(\lambda)}_{T}:=\overline H+A^{*}+A_{T+T_{\lambda}}$ is self-adjoint with domain $\dom(\widehat H_{T+T_{\lambda}})=\{\psi\in\Sob_{1-s}:\psi-G_{\lambda}\psi\in\Sob_{1}\}$ and resolvent given by formula \eqref{KF-sb} with $T$ replaced by $T+T_{\lambda}$. Since $\ran(G-G_{\lambda})\subseteq\Sob_{1}$, $\dom(\widehat H_{T+T_{\lambda}})=\{\psi\in\Sob_{1-s}:\psi-G\psi\in\Sob_{1}\}$ and, since $T_{\lambda}$ is symmetric and bounded, $\widehat H_{T}:=\widehat H^{(\lambda)}_{T}-T_{\lambda}=\overline H+A^{*}+A_{0}+T=\overline H+A^{*}+A_{T}$ is self-adjoint with $\dom(\widehat H_{T})=\dom(\widehat H_{T+T_{\lambda}})$. By $\widehat H_{T}=\widehat H^{(\lambda)}_{0}+(T-T_{\lambda})$ and by Theorem \ref{sb}, the resolvent of $\widehat H_{T}$ is given by formula \eqref{KF-sb} with $A_{T}G_{z}$ replaced by $A^{(\lambda)}_{T-T_{\lambda}}G_{z}$, i.e., since  $A^{(\lambda)}_{T-T_{\lambda}}G_{z}=A_{T}G_{z}$, by \eqref{KF-sb} itself.
\par Similar considerations apply to the proof of Theorem \ref{bb} (in the case $s=\frac12$). Defining $\widetilde H^{(\lambda)}_{n}:=H_{n}-A_{n}R_{\lambda}A_{n}^{*}$ with $|\lambda|$ so large that $\uno-G_{\lambda}$ has a bounded inverse, by Theorem \ref{bb}, $\widetilde H^{(\lambda)}_{n}$ converges in norm resolvent sense to $\widehat H^{(\lambda)}_{0}$. Then, since $\widetilde H_{n}=\widetilde H^{(\lambda)}_{n}-A_{n}RA_{n}^{*}+A_{n}R_{\lambda}A_{n}^{*}=\widetilde H^{(\lambda)}_{n}-(\lambda-\lambda_{\circ})A_{n}R(A_{n}R_{\lambda})^{*}$, $\widehat H_{0}=\widehat H^{(\lambda)}_{0}-T_{\lambda}$ and $\|(\lambda-\lambda_{\circ})A_{n}R(A_{n}R_{\lambda})^{*}-T_{\lambda}\|_{\H,\H}\to 0$, one gets that $\widetilde H_{n}$ converges in norm resolvent sense to $\widehat H_{0}$ regardeless of the value of $\lambda_{\circ}$.
\end{remark}
\begin{remark}\label{kind} By the same kind of reasonings as in the proof of Theorem \ref{sb}, if $\widehat H_{T}$ is a self-adjoint operator with a resolvent given by \eqref{Kbb} with $(\mathbb{\Sigma}_{\mathbb{\Theta}_{T}}\mathbb{G}_{z})^{-1}\in\B(\H\oplus\Sob_{1},\H\oplus\Sob_{-1})$ and  $\ran (A|\Sob_{1})$ dense, 
then, by  $\big(\mathbb{G}_{z}(\mathbb{\Sigma}_{\mathbb{\Theta}_{T}}\mathbb{G}_{z})^{-1}\mathbb{G}^{*}_{{\bar z} }\big)^{*}=
\mathbb{G}_{{\bar z} }(\mathbb{\Sigma}_{\mathbb{\Theta}_{T}}\mathbb{G}_{{\bar z} })^{-1}\mathbb{G}^{*}_{z}$, by ${\mathbb \Theta}_{T}=-({\mathbb \Sigma}_{{\mathbb \Theta}_{T}}{\mathbb G}_{z}+{\mathbb\Sigma}({\mathbb G}-{\mathbb G}_{z}))$ and $({\mathbb\Sigma}({\mathbb G}-{\mathbb G}_{z}))^{*}=
{\mathbb\Sigma}({\mathbb G}-{\mathbb G}_{{\bar z} })$, one infers that $\mathbb{\Theta}_{T}$ is self-adjoint.
\end{remark}
\begin{remark}\label{mer} Regarding Theorem \ref{sb}, people working in extension theory could be puzzled by the fact that the family of self-adjoint operators $\widehat H_{T}$, coming out from the self-adjoint extensions of the symmetric $S=H|\ker(A)$, is parameterized by symmetric ($\widehat H_{0}$-bounded) operators $T$ which, unlike what is requested in Theorem \ref{resolvent}, are not necessarily self-adjoint. However, looking at the Kre\u\i n-type resolvent formula \eqref{KF-sb} (equivalently \eqref{Kbb}), the true parameterizing operator turns out to be ${\mathbb\Theta}_{T}$ in \eqref{TT} which is self-adjoint (relatively to the dual couple $\H\oplus\Sob_{-1}\,$-$\,\H\oplus\Sob_{1}$), even when $T$ is merely symmetric (see Remark \ref{kind}).  
\end{remark}
\begin{remark} Notice that, unlike Theorem \ref{resolvent}, in Theorem \ref{sb} one does not need  $A_{T}G_{z}$ to have a bounded inverse, i.e., one does not need hypothesis \eqref{H1}. Indeed, in \eqref{KF} (equivalently \eqref{Kbb}) the inverse $(\mathbb{\Sigma}_{\mathbb{\Theta}_{T}}\mathbb{G}_{z})^{-1}$ is regarded as an operator in $\H\oplus\H$ and so, since $R_{z}:\H\to\H$ has no bounded inverse, one uses the second Schur complement, which requires  $(A_{T}G_{z})^{-1}\in\B(\H)$; on the contrary, in \eqref{KF-sb} the same inverse block operator matrix is regarded as an operator from $\H\oplus\Sob_{1}$ to $\H\oplus\Sob_{-1}$ and so, since $R_{z}:\Sob_{-1}\subseteq\H\to\Sob_{1}$ has a bounded inverse, one can use the first Schur complement.  Also notice that, by \eqref{IT}, the only case where one can show that $(\mathbb{\Sigma}_{\mathbb{\Theta}_{T}}\mathbb{G}_{z})^{-1}\in\B(\H\oplus\H)$ without requiring   
$(A_{T}G_{z})^{-1}\in\B(\H)$ is the one given by  the choice $T=0$.
\end{remark}
\begin{remark}\label{BF} The strategy employed in Theorem \ref{sb} can be also applied to cases where $T$ is not $\widehat H_{0}$-bounded. For example, one can consider the case where $T=T_{1}+T_{2}$, with $T_{1}$ such that $H_{(1)}:=H+T_{1}$ is self-adjoint semibounded and $T_{2}$ is $\widehat H_{(1)}$-bounded with relative bound less that one, where $\widehat H_{(1)}$  is constructed in the same way as $\widehat H_{0}$, replacing  $H$ with $H_{(1)}$. This is what was done for the QFT model studied in \cite{L2}. If $A$ and $T_{1}$ self-adjoint satisfy the hypotheses in Corollary  \ref{c-sl}, and $T_{2}$ is $\widehat H_{T_{1}}$-bounded with relative bound less that one, then $\widehat H_{T}=\widehat H_{T_{1}}+T_{2}$ is self-adjoint with domain $\D(\widehat H_{T})=\{\psi\in \Sob_{1-s}:\psi-G\psi\in\Sob_{1}\}$.
\end{remark}
\begin{subsection}{Renormalizable QFT models.}\label{QFT}
Here we show, using results contained in \cite{LS}, how the 3-D Nelson model \cite{N} fits to our abstract framework; similar consideration apply to the other renormalizable models considered in \cite{LS} (2-D polaron-type model with point interactions), \cite{Sch1} (the 3-D Eckmann  and 2-D Gross models), \cite{Sch2} (the massless 3-D Nelson model) and \cite{L2} (the Bogoliubov-Fr\"ohlich model). \par
We take 
\be\label{Fock}
\H=L^{2}(\RE^{3N})\otimes\H_{b}\equiv\op\left(L^{2}(\RE^{3N})\otimes L^{2}_{sym}(\RE^{3n})\right)\,,
\ee
where $\H_{b}:=\Gamma_{\! b}(L^{2}(\RE^{3}))$ denotes the boson Fock space over $L^{2}(\RE^{3})$,
and $H=H_{\rm free}$, where $H_{\rm free}$ is the positive self-adjoint operator 
$$
H_{\rm free}:=-\Delta_{(3N)}\otimes \uno+\uno\otimes {\rm d}\Gamma_{\! b}\big((-\Delta_{(3)}+1)^{1/2}\big)\,.
$$
Here $\Delta_{(d)}:H^{2}(\RE^{d})\subseteq L^{2}(\RE^{d})\to L^{2}(\RE^{d})$ denotes the Laplace operator in $L^{2}(\RE^{d})$ with self-adjointness domain the Sobolev space $H^{2}(\RE^{d})$ and ${\rm d}\Gamma_{\! b}(L)$ denotes the boson second quantization of $L$ (see, e.g., \cite[Chapter 5]{Arai}). Since $0\in\varrho(H_{\rm free})$, we can take $\lambda_{\circ} =0$ in the definition \eqref{GL} of $G$, so that $G=-(AH_{\rm free}^{-1})^{*}$. In order to define the appropriate annihilation operator $A$ we use the identification $L^{2}(\RE^{3N})\otimes\H_{b}\equiv L^{2}(\RE^{3N};\H_{b})$
which maps $\psi\otimes\Phi$ to $\x\mapsto\Psi(\x):=\psi(\x)\Phi$. Given $v:=(-\Delta_{(3)}+1)^{-1/4}\delta$, $\delta$ denoting the Dirac delta distribution supported at $0\in\RE^{3}$,  we define
\be\label{ann}
(A\Psi)(\x):=a(v_{\x})\Psi(\x)\,,
\ee
where $$v_{\x}(y):={g}\, \sum_{j=1}^{N}v(x_{j}-y)\,,\qquad g\in\RE\,,\quad\x\equiv(x_{1},\dots,x_{N})\in\RE^{3N}$$ 
and 
$a(v)$ denotes the annihilation operator in $\H_{b}$ with test vector $v$ (see, e.g. \cite[Section 5.7]{Arai}). By \cite[Corollary 3.2]{LS}, one has $G\in\B(L^{2}(\RE^{3N})\otimes\H_{b},\dom(H_{\rm free}^{s}))$ for any $s<\frac14$, equivalently 
$A\in\B(\dom(H_{\rm free}^{s}),L^{2}(\RE^{3N})\otimes\H_{b})$ for any $s>\frac34$. By \cite[Lemma 2.2]{LS}, $\ker(A|\dom(H_{\rm free}))$ is dense in $L^{2}(\RE^{3N};\H_{b})$.  The proof that $\ran(A|\dom(H_{\rm free}))$ is dense in $L^{2}(\RE^{3N};\H_{b})$ follows the same kind of reasonings as in the proof of \cite[Lemma 2.2]{LS}: let $D_{\x}\subset H^{2}(\RE^{3})$ be a $L^{2}(\RE^{3})$-dense set of $\x$-smooth functions $f_{\x}$ such that $\langle v_{\x},f_{\x}\rangle\not=0$, where  $\langle\cdot,\cdot\rangle$ denotes the $H^{-2}(\RE^{3})$-$H^{2}(\RE^{3})$ duality; then, given $\Phi(f_{\x})$, the coherent state generated by $f_{\x}$, one uses the relation $a(v_{\x})\Phi(f_{\x})=\langle v_{\x},f_{\x}\rangle\,\Phi(f_{\x})$ and the denseness of the linear span of $\{\Phi(f_{\x}),\ f_{\x}\in D_{\x}\}$ (see \cite[Proposition 6.2]{LSTT}).
\par 
Hence  Theorem \ref{sb} applies and defines a self-adjoint operator $\widehat H_{T}$ for any symmetric operator $T$ which is $\widehat H_{0}$-bounded with relative bound ${\widehat a}<1$. By Remark \ref{suit}, $T$ should be a suitable regularization of the ill-defined operator $-AH^{-1}_{\rm free}A^{*}$; for $A$ given in \eqref{ann}, the right choice, consisting in a regularization of the diagonal (with respect to the direct sum structure of $\H$ in \eqref{Fock}) part of $-AH^{-1}_{\rm free}A^{*}$, is provided in \cite[equations  (29)-(32)]{LS}. Here we denote such an operator by $T=T_{\rm Nelson}$; it is infinitesimally $\widehat H_{0}$-bounded by \cite[Lemma 3.10]{LS} (let us notice that, by \eqref{Rl}, our $\widehat H_{0}$ coincides with the operator there defined as $(1-G^{*})L(1-G)$).
\par
Given the sequence $v_{n}\in L^{2}(\RE^{3})$ with Fourier transform $\hat v_{n}=\chi_{n}\hat v$, 
where $\chi_{n}$ denotes the characteristic function of a ball of radius $R=n$ (this provides an ultraviolett cutoff on the boson momenta), let us denote by $A_{n}$ the sequence of operators in $L^{2}(\RE^{3N})\otimes\H_{b}$ defined as $A$ in  \eqref{ann} with $v$ replaced by $v_{n}$. One has that $A_{n}$ is closed, $A_{n}\in\B(\dom(H^{1/2}_{\rm free}),L^{2}(\RE^{3N})\otimes\H_{b})$ and $A_{n}^{*}+A_{n}$ is infinitesimally $H_{\rm free}$-bounded (see, e.g., \cite[Section 14.5.1]{Arai}, \cite[Appendix B]{GW2}) and so such $A_{n}$'s fit to the hypotheses in Theorem \ref{bb}. By \cite[Proposition 3.2]{Sch1} (see also the proof  of \cite[Theorem 1.4]{LS}), one has $\|(A_{n}H^{-1}_{\rm free})^{*}-(AH^{-1}_{\rm free})^{*}\|_{\H,\H}\to 0$, which is equivalent to \eqref{Aenne}. Let 
$E_{n}$ be the sequence of bounded symmetric operators in $L^{2}(\RE^{3N})\otimes\H_{b}$ corresponding to the multiplication by the real constant given by (minus) the leading order term in the expansion in the coupling constant $g$ of the the ground state energy at zero total momentum of the regularized Hamiltonian $H_{\rm free}+A^{*}_{n}+A_{n}$ (see, e.g., \cite[Section 19.2]{Spohn}):
$$
E_{n}:=g^{2}N\,\big\|\big(-\Delta_{(3)}+(-\Delta_{(3)}+1)^{1/2}\big)^{-1/2}v_{n}\big\|^{2}_{L^{2}(\RE^{3})}=g^{2}N\int_{\RE^{3}}\frac{|\hat v_{n}(\kappa)|^{2}}{|\kappa|^{2}+(|\kappa|^{2}+1)^{1/2}}\ d\kappa\,.
$$
Defining then
$$
T_{n}:=E_{n}-A_{n}H^{-1}_{\rm free}A^{*}_{n}\,,
$$
by \cite[Proposition 3.1]{Sch1} (see also the proof of Theorem 1.4 in \cite{LS}), one has $T_{n}\to T_{\rm Nelson}$ in norm as operators in $\B(\D(T_{\rm Nelson}),L^{2}(\RE^{3N})\otimes \H_{b})$; thus hypothesis \eqref{Hy2} holds. Hypothesis \eqref{Hy1} holds since the estimates in \cite{LS} with $\hat v$ replaced by $\hat v_{n}$ are bounded by the integrals with $\hat v$ (see in particular the arguments given in the proof of  \cite[Theorem 1.4]{LS}). Therefore, by Theorem \ref{bb},  
$$
\lim_{n\uparrow \infty}(H_{\rm free}+A^{*}_{n}+A_{n}+E_{n})=H_{\rm Nelson}:=
\overline{H}_{\rm free}+A^{*}+A_{T_{\rm Nelson}}
\quad\text{in norm resolvent sense}
$$ 
and so the self-adjoint Hamiltonian $H_{\rm Nelson}$ provided by Theorem \ref{sb} with $T=T_{\rm Nelson}$ coincides with the one given by Nelson in \cite{N} (this is our version of \cite[Theorem 1.4]{LS}; see also \cite[Proposition 2.4]{Sch1}).\par By Theorem \ref{sb},
$$
\dom(H_{\rm Nelson})=\{\Psi\in \dom(H_{\rm free}^{1-s}):\Psi+(AH_{\rm free}^{-1})^{*}\Psi\in\dom(H_{\rm free})\}\,,\qquad s>\frac34\,,
$$
and
\be\label{nelson}
(-H_{\rm Nelson}+z)^{-1}=(-H_{\rm free}+z)^{-1}-
\begin{bmatrix}G_{z}&R_{z}\end{bmatrix}\begin{bmatrix}A_{{T_{\rm Nelson}}
} G_{z}&G^{*}_{{\bar z} }-\uno\\G_{z}-\uno&R_{z}
\end{bmatrix}^{\!-1}\begin{bmatrix}G_{{\bar z} }^{*}\\R_{z}\end{bmatrix}\,,
\ee
where  $R_{z}:=(-H_{\rm free}+z)^{-1}$, $G_{z}:=(AR_{{\bar z} })^{*}$ and 
$A_{{T_{\rm Nelson}}}G_{z}=T_{\rm Nelson}-A(G-G_{z})$. Notice that, since the operator sequence $A_{n}\big(H_{\rm free}^{-1}A_{n}^{*}+(-H_{\rm free}+{z})^{-1}A_{n}^{*}\big)$ converges to $-A(G-G_{z})$ in $\B(L^{2}(\RE^{3N})\otimes\H_{b})$, one has that $A_{{T_{\rm Nelson}}}G_{z}$ is the limit of $E_{n}+A_{n}(-H_{\rm free}+{z})^{-1}A_{n}^{*}$ as operators in $\B(\D(T_{\rm Nelson}),L^{2}(\RE^{3N})\otimes \H_{b})$.
\end{subsection} 
\end{section}

\end{document}